\documentclass[pdflatex,sn-mathphys-num]{sn-jnl}
\usepackage{epstopdf}


\usepackage{graphicx}%
\usepackage{multirow}%
\usepackage{amsmath,amssymb,amsfonts}%
\usepackage{amsthm}%
\usepackage{mathrsfs}%
\usepackage[title]{appendix}%
\usepackage{xcolor}%
\usepackage{textcomp}%
\usepackage{manyfoot}%
\usepackage{booktabs}%
\usepackage{algorithm}%
\usepackage{algorithmicx}%
\usepackage{algpseudocode}%
\usepackage{listings}%
\usepackage{eucal}
\usepackage{comment}

\usepackage{tikz} 
\usetikzlibrary{shapes.geometric, arrows.meta, positioning} 


\usepackage{enumerate}
\usepackage{subfigure}

\theoremstyle{thmstyleone}%
\newtheorem{Thm}{Theorem}
%

\newtheorem{Lem}{Lemma}
\newtheorem{Def}{Definition}
\newtheorem{Cor}{Corollary}

\theoremstyle{thmstyletwo}%

\theoremstyle{thmstylethree}%

\raggedbottom

\begin{document}

\title[Article Title]{Defining a phylogenetic tree with the minimum number of small-state characters}


\author*[1]{\fnm{Yangjing} \sur{Long}}\email{yangjing@ccnu.edu.cn}
\author[1]{\fnm{Tong} \sur{Wang}}\email{wangtongtt1@163.com}



\affil[1]{\orgdiv{School of Mathematics and Statistics}, \orgname{Hubei Key Laboratory of Mathematical Physics}, \orgaddress{\street{Central China Normal University}, \city{Wuhan}, \postcode{430079}, \country{China}}}




\abstract{Phylogenetic trees represent evolutionary relationships and can be uniquely defined by sets of finite-state biological characteristics. Despite prior work showing that sufficiently large trees can be determined by $r$-state character sets, the minimal leaf thresholds $n_r$ remain largely unknown. In this work, we establish the 3-state case as $n_3 = 8$, providing a concrete base for higher-state analyses. We then resolve the 5-state problem by constructing a counterexample for $n=15$ and proving that for $n \geq 16$, $\lceil (n-3)/4 \rceil$ 5-state characters suffice to uniquely define any tree. Our approach relies on rigorous mathematical induction with complete verification of base cases and logically consistent inductive steps, offering new insights into the minimal conditions for character-based tree identification.
}

\keywords{Phylogenetic tree, Character definition, Number of states, Minimum threshold}



\maketitle

\section{Introduction}\label{sec1}

The construction of phylogenetic trees is a central task in evolutionary biology, aiming to infer both the evolutionary relationships among extant species and the ancestral trajectories that gave rise to them. As the fundamental input for tree reconstruction, the selection and number of characters—ranging from morphological traits to specific positions in genetic sequences—directly influence the accuracy of the inferred tree.

Within the character-based framework, theoretical approaches to tree reconstruction can be broadly classified into two categories, depending on how uniquely a tree is determined by a given character set: the weak-constraint approach based on \emph{identification}~\cite{Bordewich2006}, and the strong-constraint approach based on \emph{definition}~\cite{Semple2003}. This paper is grounded in the latter, more stringent setting, and investigates the conditions under which a minimal character set can uniquely define a tree.

We adopt the \emph{perfect phylogeny} framework, which assumes that each character evolves without homoplasy—that is, each state arises exactly once in the tree and never reverts or converges independently. This assumption corresponds to the combinatorial requirement that each character is convex on the tree, ensuring compatibility of the character set and enabling unique tree reconstruction~\cite{Semple2003}. The perfect phylogeny model forms the foundation of a rich line of research in theoretical phylogenetics.

A foundational result in this area, due to Buneman~\cite{Buneman1971}, characterizes when a dissimilarity map (i.e., a symmetric distance function on taxa) can be represented by a tree metric. The classical \emph{four-point condition} states that for any four taxa $a, b, c, d$, the two largest sums among
\[
\{d(a,b) + d(c,d),\ d(a,c) + d(b,d),\ d(a,d) + d(b,c)\}
\]
must be equal. Combinatorially, this condition ensures that each quartet of taxa admits exactly one of three unrooted binary tree topologies.
In this sense, the Buneman condition is equivalent to the constraint $n_2 = 3$, meaning that pairwise distances are sufficient to determine the correct tree topology when each character has only two states.

\smallskip

A central theoretical question in character-based phylogenetics is the following: given a phylogenetic tree $\mathcal{T}$ and a fixed number of character states $r$, what is the minimum number of convex $r$-state characters required to uniquely define the topology of $\mathcal{T}$? We refer to this as the \emph{Character Set Defining Phylogenetic Tree (CDT)} problem.

Within this framework, each character must be convex on the tree—meaning that each state appears exactly once and induces a connected subtree—thereby ruling out homoplasy and ensuring compatibility. This convexity condition corresponds to the perfect phylogeny model~\cite{Semple2003}, a foundational assumption in theoretical studies of evolutionary reconstruction. While it is known that allowing arbitrarily many character states permits small character sets to define any tree (e.g., $|\mathcal{C}| \leq 5$~\cite{Semple2002b}, later improved to $|\mathcal{C}| \leq 4$~\cite{Huber2005}), such models are biologically unrealistic. In most empirical settings, the number of states is limited (e.g., nucleotides, amino acids, or discrete morphological traits), and thus it is critical to understand how state limitations impact the expressive power of character sets.

To this end, Semple and Steel~\cite{Semple2003} established a general lower bound on the number of required characters:
\[
|\mathcal{C}| \geq \left\lceil \frac{n - 3}{r - 1} \right\rceil,
\]
where $n$ is the number of taxa (leaves) in $\mathcal{T}$ and $r$ is the number of character states. This inequality indicates that increasing the number of states allows fewer characters to suffice in defining the tree. Bordewich and Semple further demonstrated that this bound is asymptotically tight as $n$ grows large~\cite{Bordewich2015}.

Exact values of the \emph{critical leaf number} $n_r$ have been determined for small values of $r$:
\begin{itemize}
    \item For $r = 2$, the Buneman four-point condition implies that pairwise distances suffice, giving $n_2 = 3$~\cite{Buneman1971};
    \item For $r = 3$, recent work has established that $n_3 = 8$~\cite{wang2025}, We have also been informed (via personal communication) that a master’s student of Charles Semple obtained the same result independently;
    \item For $r = 4$, it is known that $n_4 = 13$~\cite{Bordewich2015}.
\end{itemize}

However, the case $r = 5$ has remained unresolved. The precise value of $n_5$—the minimum number of leaves such that there exists a set of convex five-state characters uniquely defining a phylogenetic tree—has not been previously determined. This paper resolves this open problem by constructing explicit examples and analyzing the associated combinatorial constraints. We prove that $n_5 = 16$, showing that for any fewer than 16 taxa, no such definition is possible, whereas 16 are sufficient.

\smallskip

Research on phylogenetic trees also involves many important directions, such as the compatibility of character sets~\cite{Buneman1971,Buneman1974,Steel1992}, the maximum parsimony method~\cite{Sourdis1988, Fischer2019, Fischer2023b}, and so on. In recent years, research in this field has been extended to the universal analysis of multi-state characters.

This paper focuses on the CDT problem under two fixed-state scenarios, $r = 3$ and $r = 5$, and aims to determine the exact value of the corresponding critical number of leaves $n_r$. While the result for $r = 3$ has been submitted to a separate Chinese-language publication~\cite{wang2025}, we include a full treatment of this case in Section~\ref{sec:3-state} to provide a unified and self-contained presentation. The case $r = 5$, which remains previously unresolved, is treated in Section~\ref{sec:5-state}.

\smallskip

The structure of the paper is as follows. Section~\ref{sec:pre} introduces relevant preliminary concepts. Section~\ref{sec:CDT} reviews prior work on the CDT problem and analyzes the topological and combinatorial properties associated with tree definability. Section~\ref{sec:3-state} focuses on the case $r = 3$, presenting a complete proof that $n_3 = 8$. Section~\ref{sec:5-state} addresses the case $r = 5$, establishing both lower and upper bounds and showing that $n_5 = 16$. Finally, Section~\ref{sec:conclusion} summarizes the main results and outlines future directions.

\section{Preliminary}\label{sec:pre}

A \emph{graph} $G$ is an ordered pair $(V, E)$, where $V$ is a non-empty set of \emph{vertices} and $E$ is a set of unordered pairs of elements from $V$ called \emph{edges}. We generally denote the vertex set and edge set of graph $G$ as $V(G)$ and $E(G)$, respectively. If $e = \{u, v\}$ is an edge of graph $G$, the vertices $u$ and $v$ are called the  \emph{ends} or \emph{end vertices} of $e$; vertices $u$ and $v$ are said to be \emph{adjacent}; and edge $e$ is \emph{incident} to vertices $u$ and $v$, meaning edge $e$ connects vertices $u$ and $v$. If two edges $e_1$ and $e_2$ of graph $G$ share a common end, they are said to be adjacent. An edge with coinciding ends is called a \emph{loop}, and edges connecting the same pair of vertices are called \emph{parallel edges}. A graph with neither loops nor parallel edges is generally called a \emph{simple graph}. All graphs studied in this paper are simple graphs.

Let $v$ be a vertex of $G$. The \emph{degree} of vertex $v$, denoted $d(v)$, is the number of edges in $G$ that are incident with $v$. A vertex of degree zero is called an \emph{isolated vertex}, and a vertex with degree one is called a \emph{pendant vertex}; an edge incident to a pendant vertex is called a \emph{pendant edge}. Let $e$ be an edge of $G$. Then $G \setminus e$ denotes the graph obtained by deleting $e$ from $E(G)$. $G \setminus v$ denotes the graph obtained by deleting vertex $v$ from $V(G)$ and all edges incident to $v$ from $E(G)$. If $V' \subseteq V$ and $E' \subseteq E$, $G \setminus E'$ denotes the graph obtained by deleting each edge in $E'$ from $G$, and $G \setminus V'$ denotes the graph obtained by deleting each vertex in $V'$ and their incident edges from $G$.

Let $H$ be a graph. If $V(H)$ and $E(H)$ are subsets of $V(G)$ and $E(G)$, respectively, then $H$ is called a \emph{subgraph} of $G$. If $V'$ is a non-empty subset of $V(G)$, 
the subgraph \emph{induced} by $V'$, denoted $G[V']$, is the subgraph with vertex set $V'$ and edge set consisting of all edges of $G$ with both ends in $V'$.

Two graphs $G_1 = (V_1, E_1)$ and $G_2 = (V_2, E_2)$ are said to be \emph{isomorphic} if there exists a bijective mapping $\psi: V_1 \rightarrow V_2$ that preserves adjacency; that is, $\{u, v\} \in E_1$ precisely if $\{\psi(u), \psi(v)\} \in E_2$. A \emph{path} in a graph $G$ is a sequence of distinct vertices $v_1, v_2, \ldots, v_k$ such that for each $i \in \{1, 2, \ldots, k-1\}$, $v_i$ is adjacent to $v_{i+1}$. Furthermore, if $v_1$ is adjacent to $v_k$, the subgraph of $G$ with the vertex set $\{v_1, v_2, \ldots, v_k\}$ and the edge set $\{v_k, v_1\} \cup \{\{v_i, v_{i+1}\} : i \in \{1, 2, \ldots, k-1\}\}$ is called a \emph{cycle}. A graph $G$ is called a \emph{chordal graph} if every induced subgraph that is a cycle has at most three edges. A graph $G$ is \emph{connected} if every pair of vertices is connected by a path. The maximal connected subgraphs of $G$ are called \emph{components} of $G$. The \emph{distance} between two vertices $u$ and $v$ in a connected graph $G$, denoted $d_G(u, v)$, is the number of edges in the shortest path connecting $u$ and $v$.

A connected acyclic graph is called a \emph{tree}. Let $T = (V, E)$ be a tree. A vertex with degree at most one is called a \emph{leaf}, and non-leaf vertices are called \emph{interior vertices}; an edge with both ends as interior vertices is called an \emph{interior edge}. We generally denote the sets of interior vertices and interior edges of tree $T$ as $\mathring{V}$ and $\mathring{E}$, respectively. A connected subgraph of $T$ is called a \emph{subtree}. A tree $T$ is called a \emph{binary tree} if all its interior vertices have degree three.
Let $v$ be a vertex of degree two in $T$. \emph{Suppressing the degree-two vertex} $v$ means deleting the edges incident to $v$ in $T$ and connecting the adjacent vertices of $v$. 
If $V'$ is a subset of $V$, the smallest connected subgraph of $T$ containing $V'$ is called the \emph{minimal subtree of $T$ induced by $V'$}, denoted $T(V')$.

An \emph{$X$-tree} $\mathcal{T}$ is an ordered pair $(T; \phi)$, where $T = (V, E)$ is a tree and $\phi$ is a mapping from $X$ to $V$, such that for all vertices $v \in V$ with degree at most two, $v \in \phi(X)$. An $X$-tree is also called a \emph{semi-labelled $X$-tree}. For two semi-labelled $X$-trees $\mathcal{T}_1 = (T_1; \phi_1)$ and $\mathcal{T}_2 = (T_2; \phi_2)$, where $T_1 = (V_1, E_1)$ and $T_2 = (V_2, E_2)$, they are said to be \emph{isomorphic} if there exists a unique adjacency-preserving bijection $\psi: V_1 \rightarrow V_2$ such that $\phi_2 = \psi \circ \phi_1$.

An \emph{$X$-split} is a partition of $X$ into two non-empty subsets, which can be regarded as a 2-state character on $X$. We denote a $X$-split with non-empty subsets $A$ and $B$ as $A|B$. Let $\mathcal{T} = (T; \phi)$ be an $X$-tree and $e$ be an edge of $T$. If $V_1$ and $V_2$ denote the vertex sets of the two components of $T \setminus e$, then $\phi^{-1}(V_1) | \phi^{-1}(V_2)$ is the $X$-split induced by the edge $e$, denoted $\sigma_e$. If $\mathcal{T}$ is a phylogenetic tree and $e$ is an interior edge, then $\sigma_e$ is called a \emph{nontrivial $X$-split}. A \emph{cherry} is a binary subset $\{x, y\}$ of $X$ where $x$ and $y$ are adjacent to the same interior vertex.

\section{Properties and Judgment Conditions of CDT Problem}\label{sec:CDT}

The construction methods of phylogenetic trees based on character sets mainly include two theoretical frameworks: ``identification-type" and ``definition-type", with this paper taking the ``definition-type" as the research foundation. In this context, this section systematically combs the basic properties and core theorems of the CDT (Character Set Define Phylogenetic Tree) problem from existing studies, focusing on the necessary and sufficient conditions and several sufficient conditions for character sets to define phylogenetic trees.

\subsection{Properties of CDT }\label{sec:CDT-prop}
In the phylogenetic tree research, the set of research objects consisting of modern species and taxonomic units is denoted $X$. There is a strict correspondence between the elements of this set and the leaves of phylogenetic trees: a \emph{phylogenetic $X$-tree $\mathcal{T}$} can be formally defined as an ordered pair $(T; \phi)$, where $T$ is a tree structure without degree-two vertices, and $\phi$ is a bijective mapping from set $X$ to the leaf set of $T$. 
If, in addition, every interior vertice of $T$ has degree three, $\mathcal{T}$ is a \emph{binary phylogenetic $X$-tree}.
 Due to its strict dichotomous nature, the binary phylogenetic tree structure can accurately simulate the branching events in species evolution and has become the mainstream modeling framework for molecular evolution and phylogenetic reconstruction. All phylogenetic trees studied in this paper are unrooted binary phylogenetic trees.

\begin{figure}[H]
    \centering
    \includegraphics[width=0.65\textwidth]{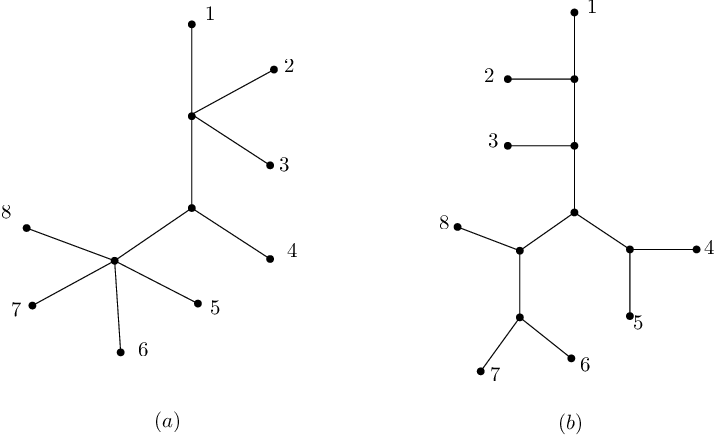}
    \caption{(a) Phylogenetic $X$-tree. (b) Binary phylogenetic $X$-tree.}
    \label{fig:example_phylogenetic}
\end{figure}

A \emph{character} on $X$ is a function from $X$ to a state set $C$, typically denoted as $\chi$. If the number of states $|\chi(X)| = r$, then $\chi$ is called an \emph{$r$-state character.} For $\alpha \in \chi(X)$, $\mathcal{T}(\alpha)$ denotes the minimal subtree of $\mathcal{T}$ containing the leaves with state $\alpha$. The convexity of characters on phylogenetic trees is an important concept used to describe the distribution patterns of certain attributes of species or genes during evolution. A character $\chi$ is said to be \emph{convex} on $\mathcal{T}$ if the vertex sets of the trees in the set $\{\mathcal{T}(\alpha) : \alpha \in \chi(X)\}$ are pairwise disjoint. If every character in $\mathcal{C}$ is convex on $\mathcal{T}$, then $\mathcal{C}$ is said to be convex on $\mathcal{T}$, and in this case, $\mathcal{C}$ is called \emph{compatible}.

\begin{Def}
Let $\mathcal{T}$ be an $X$-tree. A character set $\mathcal{C}$ is said to \emph{define} $\mathcal{T}$ if $\mathcal{C}$ is convex on $\mathcal{T}$, and any other $X$-tree $\mathcal{T}'$ on which $\mathcal{C}$ is convex is isomorphic to $\mathcal{T}$. In other words, $\mathcal{T}$ is the unique $X$-tree (up to isomorphism) on which $\mathcal{C}$ is convex. We also say that $\mathcal{T}$ is defined by $\mathcal{C}$.
\end{Def}

To simplify the representation of characters, we use $\pi(\chi)$ to denote the partition of $X$ induced by $\chi$, i.e., $\pi(\chi) = \{\chi^{-1}(\alpha_i) : \alpha_i \in \chi(X)\}$. In the above examples, $\pi(\chi_1) = \{\{1, 2\}, \{3, 4\}, \{5, 6\}\}$ and $\pi(\chi_2) = \{\{1, 2, 3\}, \{4, 5, 6\}\}$.

Next, we introduce several theorems related to the properties of phylogenetic trees.

\begin{Thm}\emph{\cite{Bordewich2015}}\label{basic_thm_conclusion_1}
Let $r$ be a positive integer greater than one. For any phylogenetic tree $\mathcal{T}$ with $n$ leaves (where $n \geq n_r$, and $n_r$ is a positive integer threshold related to $r$), there exists a set $\mathcal{C}$ of $r$-state characters satisfying the following conditions:
\begin{enumerate}
  \item  Uniqueness: $\mathcal{C}$ defines $\mathcal{T}$;
  \item  Set size: The number of characters required is $|\mathcal{C}| = \lceil \frac{n-3}{r-1} \rceil$.
\end{enumerate}
\end{Thm}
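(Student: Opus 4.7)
The plan is to establish the upper bound $|\mathcal{C}| \leq \lceil (n-3)/(r-1)\rceil$ by induction on $n$; the matching lower bound is already available from Semple and Steel. The base case is $n = n_r$, which requires exhibiting for every binary phylogenetic tree on $n_r$ leaves a defining set of the claimed size, and this is precisely the combinatorial heart of the paper's later sections for $r = 3, 5$ (and of~\cite{Bordewich2015} for $r = 4$). Thus the value of $n_r$ is effectively dictated by how small one can make the base case while still admitting a construction.

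For the inductive step, assume the result for $n-1 \geq n_r$ and let $\mathcal{T}$ be a binary phylogenetic $X$-tree with $n$ leaves. Since $n \geq 4$, $\mathcal{T}$ contains a cherry $\{x,y\}$; delete $y$ and suppress the resulting degree-two vertex to obtain an $X'$-tree $\mathcal{T}'$ on $X' = X\setminus\{y\}$. By induction there is a defining set $\mathcal{C}' = \{\chi'_1,\ldots,\chi'_k\}$ for $\mathcal{T}'$ with $k = \lceil (n-4)/(r-1)\rceil$. Extend each $\chi'_i$ to a character $\chi_i$ on $X$ by setting $\chi_i(y) := \chi_i(x)$; this preserves convexity on $\mathcal{T}$ because $x$ and $y$ are adjacent to a common interior vertex.

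Two subcases arise according to whether $\lceil (n-3)/(r-1)\rceil$ equals $k$ or $k+1$. In the first subcase one wants $\mathcal{C} = \{\chi_1,\ldots,\chi_k\}$ itself to define $\mathcal{T}$: given any $X$-tree $\mathcal{T}^*$ on which $\mathcal{C}$ is convex, deleting $y$ and suppressing yields an $X'$-tree on which $\mathcal{C}'$ is convex and hence isomorphic to $\mathcal{T}'$ by induction, so $\mathcal{T}^*$ is obtained from $\mathcal{T}'$ by grafting $y$ onto some edge; the point is to force this edge to be the pendant edge of $x$. In the second subcase one supplements $\mathcal{C}$ with a $(k+1)$-th character recording the extra interior edge created by expanding $\mathcal{T}'$ to $\mathcal{T}$ — for instance by assigning a new state to $\{x,y\}$, or by redistributing states along a path of interior edges incident to $x$ not yet pinned down by $\mathcal{C}$.

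The principal obstacle will be the tight regime where the ceiling does not increase. The naive extension $\chi_i(y) := \chi_i(x)$ guarantees that $y$ shares $x$'s entire state profile, but not that $\{x,y\}$ must form a cherry in $\mathcal{T}^*$; for this one needs to know that no leaf of $\mathcal{T}'$ already shares $x$'s state profile in $\mathcal{C}'$, so that any attachment of $y$ away from $x$ creates an interior vertex lying in two different monochromatic subtrees and thereby violates convexity. Closing this gap therefore demands strengthening the inductive hypothesis to carry a ``unique profile at a designated leaf'' invariant through each step, and, in the ceiling-increases case, choosing the new character so as to reinstate the invariant for the leaf to be peeled next. Formulating and propagating this structural invariant — while also handling boundary cases when the cherry must be selected carefully to avoid saturating states — is the technical crux of the argument.
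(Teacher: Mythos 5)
Your induction peels one leaf at a time, whereas the paper (following Bordewich and Semple) peels $r-1$ leaves at a time --- it selects $r-1$ cherries that are pairwise separated by at least three interior edges, deletes one leaf from each, applies the inductive hypothesis to the resulting tree on $n-(r-1)$ leaves, and then adds \emph{exactly one} new $r$-state character whose blocks are the $r-1$ restored cherries together with the complement; Lemmas~\ref{basic_lem_2} and~\ref{basic_lem_3} then recover all splits of $\mathcal{T}$. The point of that design is that the ceiling $\lceil (n-3)/(r-1)\rceil$ increases by exactly one at every inductive step, so the ``tight regime'' you describe never occurs. The price is that the base case must cover $r-1$ consecutive values of $n$ (hence $n=16,\dots,19$ for $r=5$ and $n=8,9$ for $r=3$), and that trees lacking $r-1$ suitably separated cherries (the Y-like trees) need a separate construction.

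The gap in your version is concentrated exactly where you suspect, and it is worse than a technicality. In the subcase where the ceiling does not increase, the set $\mathcal{C}=\{\chi_1,\dots,\chi_k\}$ must define $\mathcal{T}$, so by condition~1 of Theorem~\ref{basic_thm_judgment} every interior edge of $\mathcal{T}$ must be distinguished by some $\chi_i$ --- in particular the new interior edge $e=\{u,w\}$ created by subdividing the pendant edge of $x$ and attaching $y$ at $w$. The only leaves on the $w$-side of $e$ are $x$ and $y$, and the minimal subtree spanned by a block $A$ contains $w$ but not $u$ precisely when $A=\{x,y\}$. Under your extension rule $\chi_i(y):=\chi_i(x)$, the block containing $x$ becomes $\{x,y\}$ only if $\{x\}$ was a \emph{singleton} block of $\chi_i'$ in $\mathcal{C}'$. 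So your argument requires, as a hard necessary condition, that the defining set produced by the inductive hypothesis have the chosen leaf as a singleton state class of some character --- an invariant the hypothesis does not provide, which your proposal only gestures at maintaining, and which also does nothing toward verifying condition~2 of Theorem~\ref{basic_thm_judgment} (uniqueness of the minimal restricted chordal completion) for the enlarged set. Without formulating and propagating that strengthened invariant the inductive step fails; the $(r-1)$-at-a-time peeling is precisely the device that makes it unnecessary.
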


An \emph{internal subtree} of $\mathcal{T}$ is a subtree whose edges are all internal. Up to isomrphism, the 6-leaf tree in Fig.~\ref{fig:n=12}($b$) called the \emph{snowflake}, the 3-leaf tree in Fig.~\ref{fig:n=12}($c$) called the \emph{3-star}.

\begin{Thm}\emph{\cite{Huber2024}}\label{thm_snowflake}
Let $\mathcal{T}$ be a  phylogenetic tree. Then $\mathcal{T}$ is defined by a set of at most three characters if and only if $\mathcal{T}$ has no internal subtree isomorphic to the snowflake.
\end{Thm}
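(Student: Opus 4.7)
The plan is to prove the biconditional in two directions separately, treating necessity (three characters suffice $\Rightarrow$ no internal snowflake) and sufficiency (no internal snowflake $\Rightarrow$ three characters suffice) by different methods.

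For the necessity direction I would argue the contrapositive. Suppose $\mathcal{T}$ contains an internal subtree $S$ isomorphic to the snowflake, with central edge $uv$, where $u$ has internal neighbors $u_1, u_2, u_3$ and $v$ has internal neighbors $v_1, v_2, v_3$ inside $S$. Because $S$ is internal and $\mathcal{T}$ is binary, each $u_i$ and $v_j$ is a degree-3 vertex of $\mathcal{T}$ that carries two further hanging subtrees. I would show that for any three convex characters $\chi_1, \chi_2, \chi_3$ on $\mathcal{T}$, there exists a nontrivial rearrangement of these hanging branches that yields a non-isomorphic phylogenetic $X$-tree $\mathcal{T}'$ on which the same three characters remain convex. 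The argument proceeds by a pigeonhole-style analysis of how the three characters can distinguish the six pendant vertices of $S$: each $\chi_k$ can impose at most one meaningful partition of the three branches at $u$ (and similarly at $v$), so three characters cannot lock all local orientation data at $S$, and some swap of branches across $u_i \leftrightarrow u_j$ (or $v_i \leftrightarrow v_j$) preserves every character's convex decomposition.

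For the sufficiency direction I would proceed by induction on the number of leaves $n$ of $\mathcal{T}$. The base cases for small $n$ (in particular $n \le 11$, below the minimum size for an internal snowflake to exist) follow directly because $\mathcal{T}$ has so few leaves that an explicit three-character construction is available, possibly using Theorem~\ref{basic_thm_conclusion_1}. For the inductive step, assume $\mathcal{T}$ has no internal snowflake; pick a cherry $\{x, y\}$ of $\mathcal{T}$ and let $\mathcal{T}'$ be the reduced phylogenetic tree obtained by deleting $x$ and suppressing the resulting degree-2 vertex. One must verify that $\mathcal{T}'$ remains snowflake-free (the rigid topology of the snowflake prevents its emergence under such a reduction). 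By the inductive hypothesis, $\mathcal{T}'$ is defined by at most three convex characters $\chi_1', \chi_2', \chi_3'$; extending each $\chi_k'$ to $X$ by assigning $x$ the state of $y$ yields three convex characters on $\mathcal{T}$, and a direct verification confirms that they pin down the attachment point of $x$ and hence define $\mathcal{T}$.

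The main obstacle will be the necessity direction, specifically constructing the branch swap that preserves every character's convex decomposition simultaneously. This requires a careful case analysis of how the three characters' state-partitions can intersect the twelve external branches hanging off $S$, combined with the pigeonhole observation that three characters have insufficient degrees of freedom to orient all six pendant vertices of $S$. A secondary but more delicate point in the sufficiency direction is the invariance of the snowflake-free property under cherry reduction; care is needed to verify that no new internal snowflake can emerge after suppressing the degree-2 vertex created by cherry deletion, which may require an additional case split on where the cherry sits relative to the interior of $\mathcal{T}$.
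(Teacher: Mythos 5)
First, a point of comparison: the paper does not prove Theorem~\ref{thm_snowflake} at all --- it is imported with a citation from Huber et al.~\cite{Huber2024} --- so there is no in-paper argument to measure your sketch against; it has to stand on its own. Judged that way, the decisive gap is in your sufficiency induction. Extending each $\chi_k'$ from $\mathcal{T}'$ to $X$ by assigning $x$ the state of $y$ does \emph{not} ``pin down the attachment point of $x$''. Concretely, let $e^\ast$ be the new interior edge created when the cherry $\{x,y\}$ is inserted. For some character of your extended set to distinguish $e^\ast$ in the sense of Theorem~\ref{basic_thm_judgment}, that character would need a state class whose minimal spanning subtree lies entirely on the $\{x,y\}$ side of $e^\ast$, i.e.\ a class equal to $\{x,y\}$; this happens only if some $\chi_k'$ already had $\{y\}$ as a singleton class, which your construction does not arrange. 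Otherwise condition~1 of Theorem~\ref{basic_thm_judgment} fails outright, and indeed the extended characters are convex on every $X$-tree obtained by attaching $x$ anywhere inside the intersection of the three subtrees spanned by $y$'s state classes, so $\mathcal{T}$ is not the unique tree on which they are convex. Any repair must modify at least one character so that $\{x,y\}$ becomes its own state class and then re-verify the unique minimal restricted chordal completion of $\text{Int}(\mathcal{C})$; that is where the actual content of the theorem lives, and the sketch does not engage with it.

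Two further problems. You have the wrong snowflake: it is the $6$-leaf tree consisting of three cherries hanging off a single central degree-$3$ vertex (Fig.~\ref{fig:n=12}($b$)), not a central edge $uv$ with three internal neighbours on each side --- your configuration has two degree-$4$ vertices and is not even a binary tree. As an internal subtree the snowflake therefore presents three pairwise-adjacent central interior edges plus, at each mid-level vertex, another pairwise-adjacent triple; the genuine necessity argument runs through Lemma~\ref{basic_distinguish} (pairwise-adjacent interior edges require pairwise-distinct distinguishing characters) applied at these four vertices, forcing a rigid assignment of the three characters to the arms, from which one exhibits a second minimal restricted chordal completion of $\text{Int}(\mathcal{C})$ by permuting the arms. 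Your pigeonhole phrasing gestures at this but, resting on the wrong structure, is not yet an argument. Finally, the base cases $n\le 11$ are not ``direct'': Theorem~\ref{basic_thm_conclusion_1} only applies for $n\ge n_r$ and bounds the number of $r$-state characters, not the number of characters with unrestricted state sets, so the claim that every tree with at most $11$ leaves is defined by three characters needs its own proof. The one step you flagged as delicate --- preservation of snowflake-freeness under cherry reduction --- is in fact the easy part, since every interior edge of $\mathcal{T}'$ is an interior edge of $\mathcal{T}$.
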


According to Theorem \ref{thm_snowflake}, the phylogenetic tree with 12 leaves in Fig. \ref{fig:n=12} contains an internal subtree isomorphic to a snowflake, so it cannot be defined by a set of three characters.

\begin{figure}[H]
  \centering
  \subfigure[]
  {
   \begin{minipage}[b]{.4\linewidth}
     \centering
     
     \includegraphics[width=0.85\textwidth]{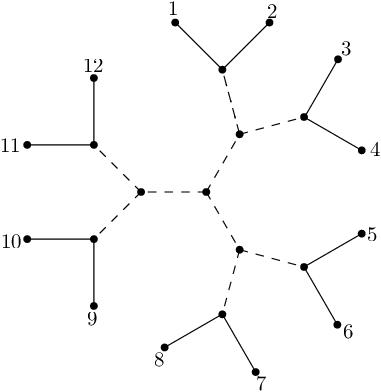}
   \end{minipage}
  }
  \subfigure[]
  {
   \begin{minipage}[b]{.3\linewidth}
     \centering
     \includegraphics[width=0.85\textwidth]{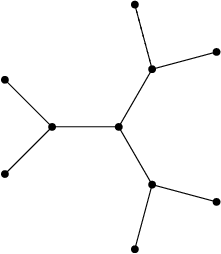}
   \end{minipage}
  }
  \subfigure[]
  {
    \begin{minipage}[b]{.2\linewidth}
     \centering
     
     \includegraphics[width=0.55\textwidth]{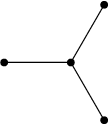}
   \end{minipage}
  }
  \caption{($a$) A phylogenetic tree $\mathcal{T}$ on $\{1, 2, \ldots, 12\}$. ($b$) The snowflake. ($c$) The 3-star.}
  \label{fig:n=12}
\end{figure}

\begin{Thm}\emph{\cite{Huber2024}}\label{thm_3-star}
Let $\mathcal{T}$ be a  phylogenetic tree. Then $\mathcal{T}$ is defined by a set of at most two characters if and only if $\mathcal{T}$ has no internal subtree isomorphic to the 3-star.
\end{Thm}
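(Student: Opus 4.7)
The plan is to first observe a structural reformulation: because $\mathcal{T}$ is a binary phylogenetic tree, every interior vertex has degree exactly three, so the absence of an internal 3-star at an interior vertex $v$ is equivalent to $v$ having at least one leaf neighbor. Applying this to every interior vertex, the subgraph of $\mathcal{T}$ induced on the interior vertices has maximum degree at most two, so (being a tree) it is a path. Hence $\mathcal{T}$ has no internal 3-star if and only if $\mathcal{T}$ is a \emph{caterpillar}. The two implications in the theorem then reduce to ``caterpillar'' versus ``not a caterpillar.''

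For the ``if'' direction I would construct two explicit characters on a caterpillar with interior path $v_1, v_2, \ldots, v_{n-2}$. Let $\chi_1$ assign a distinct state to the leaf or pair of cherry-leaves attached to each $v_i$, so that each interior vertex is ``labelled'' by its own state. Let $\chi_2$ encode the linear order along the interior path, for example by splitting the leaves into those attached to $v_1, \ldots, v_k$ versus those attached to $v_{k+1}, \ldots, v_{n-2}$ for some suitable $k$ (or a refinement). Convexity of both characters on $\mathcal{T}$ is immediate from the construction. For uniqueness one argues that any other phylogenetic tree $\mathcal{T}'$ on which both are convex must realise the same cherries (forced by the two-element state classes of $\chi_1$ together with the vertex-disjointness constraint) and the same linear ordering of twigs along an interior path (forced by $\chi_2$), and hence must be isomorphic to $\mathcal{T}$.

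For the ``only if'' direction I would argue the contrapositive. Assume $v$ is an interior vertex with interior neighbours $a, b, c$, and let $T_a, T_b, T_c$ be the three subtrees of $\mathcal{T} \setminus v$. The central lemma is: for any convex character $\chi$ on $\mathcal{T}$, at most one state has its minimal subtree containing $v$, equivalently at most one state of $\chi$ has leaves in two or more of $T_a, T_b, T_c$. Therefore, for any pair of convex characters $\chi_1, \chi_2$, at most two states in total ``cross'' $v$. Under this restriction I would exhibit a nearest-neighbour-interchange at one of the three internal edges $e_a = \{v,a\}$, $e_b = \{v,b\}$, $e_c = \{v,c\}$ that yields a non-isomorphic tree $\mathcal{T}'$ on which both characters are still convex, contradicting the hypothesis that $\{\chi_1, \chi_2\}$ defines $\mathcal{T}$. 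There are six NNI moves available at these three edges, and each crossing state forbids only a few of them (those that would force its minimal subtree to overlap another), so a pigeonhole-style argument should locate at least one admissible move.

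The main obstacle I anticipate is the case analysis in the ``only if'' direction when each of $\chi_1, \chi_2$ has one crossing state and these two crossing states together restrict the six NNI moves in overlapping ways. The delicate point is to verify that even with both constraints simultaneously active, the minimal subtrees of the two crossing states near $v$ cannot jointly block every NNI at $e_a, e_b, e_c$; this requires a careful enumeration of how the two crossing states partition the edges of the 3-star and how their minimal subtrees extend into the $T_x$'s.
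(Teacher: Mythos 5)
First, note that the paper does not prove Theorem~\ref{thm_3-star} at all: it is imported verbatim from \cite{Huber2024}. So the comparison below is against what a correct proof would need, using the tools the paper does state. Your opening reduction (no internal 3-star $\Leftrightarrow$ every interior vertex has a leaf neighbour $\Leftrightarrow$ caterpillar) is correct and matches the paper's own remark after the theorem.

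There is a genuine gap in your ``if'' direction: the two characters you construct do not define the caterpillar. A state class that is a single leaf has a one-vertex minimal subtree, so it is vacuously disjoint from everything; your $\chi_1$ (one state per twig) therefore only constrains the two end cherries, and a $\chi_2$ given by one split --- or by grouping \emph{consecutive} twigs into blocks --- does not force the linear order. Concretely, for the caterpillar with twig order $\{1,2\},3,4,5,6,\{7,8\}$ and $\pi(\chi_2)=\{\{1,2,3\},\{4,5\},\{6,7,8\}\}$, both characters remain convex on the non-isomorphic caterpillar with order $\{1,2\},3,5,4,6,\{7,8\}$. The construction that works is the one in Lemma~\ref{basic_lem_1} applied to the full interior path: let $\chi_1$ be displayed by the odd-indexed interior edges and $\chi_2$ by the even-indexed ones, so the two characters have \emph{offset} block structures whose overlaps force every split $\sigma_{e_i}$; then Theorem~\ref{basic_thm_judgment_2} finishes. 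Your ``only if'' direction is also left open exactly at its hardest point (the joint NNI case analysis with two crossing states), and while your crossing-state lemma is correct, the route is unnecessarily heavy: by the necessity half of Theorem~\ref{basic_thm_judgment} every interior edge must be distinguished by some character, by Lemma~\ref{basic_distinguish} two adjacent interior edges cannot be distinguished by the same character, and the three edges of an internal 3-star are pairwise adjacent at its centre, so any defining set has at least three characters. That two-line argument replaces the entire NNI enumeration you anticipate struggling with.
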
 
The phylogenetic tree studied in Theorem \ref{thm_3-star} has a special topological structure, which conforms to the definition of a \emph{caterpillar}. This structure can be intuitively explained through Fig \ref{fig:caterpillar}: as shown in the figure, the main trunk of the tree is composed of continuous internal vertices, and each internal nvertice is directly connected to at least one leaf, forming a morphology similar to the body segments and ventral feet of a caterpillar.

\begin{figure}[H]
\centering
\includegraphics[width=0.45\linewidth]{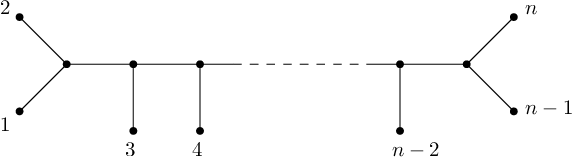}
\caption{A caterpillar on \(\{1, 2, \ldots, n\}\).}
\label{fig:caterpillar}
\end{figure}

\subsection{Necessary and Sufficient Conditions for CDT}\label{sec:CDT-cy}
This section introduces a necessary and sufficient condition for CDT (Character Set Defining Phylogenetic Tree), which can be used to determine whether a character set $\mathcal{C}$ defines a phylogenetic tree $\mathcal{T}$. Several auxiliary definitions introduced below (such as discrimination, Int($\mathcal{C}$), etc.) will be specifically used to support the proof and application of Theorem \ref{basic_thm_judgment}. Therefore, this section elaborates on them simultaneously to ensure the completeness of the theoretical tools.

Let $\mathcal{T}$ be a phylogenetic $X$-tree, $\mathcal{C}$ be a character set on $X$, and $e = \{u, v\}$ be an interior edge of $\mathcal{T}$. If there exists a character $\chi$ in $\mathcal{C}$ and $A_1, A_2 \in \pi(\chi)$ such that the vertex set of the minimal subtree of $\mathcal{T}$ containing $A_1$ includes $u$ but not $v$, and the vertex set of the minimal subtree of $\mathcal{T}$ containing $A_2$ includes $v$ but not $u$, then $e$ is said to be \emph{distinguished by $\chi$}. If each interior edge of $\mathcal{T}$ is distinguished by a character in $\mathcal{C}$, then $\mathcal{T}$ is said to be \emph{distinguished by $\mathcal{C}$}. For example, in Fig.~\ref{fig:Int(C)-example}, $e_1$ is distinguished by the character $\chi_1$, where $\pi(\chi_1) = \{\{1, 2\}, \{3, 4, 5, 6\}, \{7, 8\}\}$.

\begin{figure}[H]
\centering
\subfigure[]
  {
   \begin{minipage}[b]{.4\linewidth}
     \centering
     \includegraphics[width=0.65\textwidth]{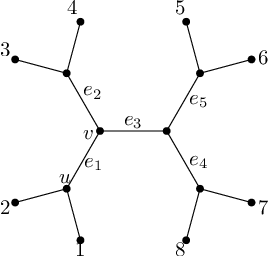}
   \end{minipage}
  }\subfigure[]
  {
   \begin{minipage}[b]{.45\linewidth}
     \centering
     \includegraphics[width=0.9\textwidth]{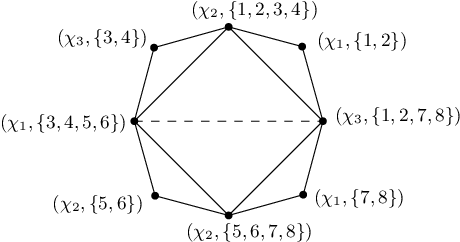}
   \end{minipage}
  }
\caption{($a$) A phylogenetic tree on $\{1,2,\ldots,8\}$. ($b$) The solid lines represent Int($\mathcal{C}$), and the entire graph is the completely chordal graph restricted by Int($\mathcal{C}$).}
\label{fig:Int(C)-example}
\end{figure}

Let \emph{$\text{Int}(\mathcal{C})$} denote the partition intersection graph of $\mathcal{C}$, that is, the graph with vertex set $\{(\chi, A) : \chi \in \mathcal{C}, A \in \pi(\chi)\}$ and edge set $\{\{(\chi, A), (\chi', B)\} : A \cap B \neq \emptyset\}$ (where $\emptyset$ denotes the empty set), where $\chi \neq \chi'$ if there is an edge between $(\chi, A)$ and $(\chi', B)$. That is, when $\chi \neq \chi'$ and $A$ and $B$ have common elements, $(\chi, A)$ and $(\chi', B)$ are connected by an edge. The \emph{restricted chordal completion $G$  of $\text{Int}(\mathcal{C})$} is a chordal graph that is obtained from $\text{Int}(\mathcal{C})$ by adding only edges that join vertices whose first components are distinct. Edges of $G$ that are not in $\text{Int}(\mathcal{C})$ are called \emph{completion edges}. Furthermore, if $G$ becomes a non-chordal graph after deleting any completion edge, then $G$ is a \emph{minimal restricted chordal completion graph of $\text{Int}(\mathcal{C})$}.

\begin{Thm}\emph{\cite{semple2002a}}\label{basic_thm_judgment}
Let $\mathcal{T}$ be a phylogenetic $X$-tree and $\mathcal{C}$ be a set of characters on $X$. Then $\mathcal{C}$ defines $\mathcal{T}$ if and only if
\begin{enumerate}
  \item Each character is convex on $\mathcal{T}$, and each interior edge of $\mathcal{T}$ is distinguished by a character in $\mathcal{C}$;
  \item $\text{Int}(\mathcal{C})$ has a unique minimal restricted chordal completion.
\end{enumerate}
\end{Thm}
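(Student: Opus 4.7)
The plan is to prove both directions using the standard correspondence between convex character sets on a phylogenetic tree and clique structures in chordal completions of the partition intersection graph. The central idea is that when $\mathcal{C}$ is convex on an $X$-tree, each interior vertex $v$ of the tree gives rise to a clique in $\text{Int}(\mathcal{C})$ consisting of pairs $(\chi, A)$ such that $v$ lies in the minimal subtree $\mathcal{T}(A)$. Adding precisely these clique edges yields a restricted chordal completion, so trees on which $\mathcal{C}$ is convex and restricted chordal completions of $\text{Int}(\mathcal{C})$ are two sides of the same combinatorial object.

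For the forward direction, suppose $\mathcal{C}$ defines $\mathcal{T}$. Convexity of each $\chi\in\mathcal{C}$ on $\mathcal{T}$ is immediate from the definition of ``defines''. For the distinguishing condition, I would argue by contradiction: if some interior edge $e=\{u,v\}$ is not distinguished, then for every $\chi\in\mathcal{C}$ the endpoints $u$ and $v$ lie in the same collection of minimal subtrees $\mathcal{T}(A)$, so contracting $e$ and re-resolving the resulting vertex of degree at least four in a different way produces an $X$-tree $\mathcal{T}'\not\cong\mathcal{T}$ on which $\mathcal{C}$ is still convex, contradicting uniqueness. For condition~2, I would show that two distinct minimal restricted chordal completions $G_1,G_2$ of $\text{Int}(\mathcal{C})$ yield two non-isomorphic trees via the clique-tree reconstruction sketched above; the minimality guarantees no redundant cliques are introduced, so the resulting trees are both phylogenetic $X$-trees on which $\mathcal{C}$ is convex, again contradicting that $\mathcal{T}$ is unique.

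For the backward direction, let $\mathcal{T}'$ be any $X$-tree on which $\mathcal{C}$ is convex. The convexity of $\mathcal{C}$ on $\mathcal{T}'$ induces a restricted chordal completion $G'$ of $\text{Int}(\mathcal{C})$ by adding, for each interior vertex $v'$ of $\mathcal{T}'$, all edges among pairs $(\chi,A)$ with $v'\in\mathcal{T}'(A)$. Likewise, $\mathcal{T}$ produces a restricted chordal completion $G$. Both $G$ and $G'$ contain some minimal restricted chordal completion, and by the uniqueness hypothesis these share the same minimal completion $G_{\min}$. The cliques of $G_{\min}$ are therefore in bijection with both the interior vertices of $\mathcal{T}$ and of $\mathcal{T}'$, and the distinguishing condition of part~1 applied to $\mathcal{T}$ ensures that adjacent cliques in $G_{\min}$ correspond to adjacent interior vertices; this rigidifies the tree structure so that the bijection of cliques lifts to an isomorphism $\mathcal{T}\cong\mathcal{T}'$ compatible with the leaf labelling.

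The main obstacle is formalizing the clique-tree correspondence cleanly, in particular verifying that the chordal completion built from an $X$-tree on which $\mathcal{C}$ is convex is indeed minimal under the distinguishing hypothesis, and conversely that every minimal restricted chordal completion of $\text{Int}(\mathcal{C})$ arises from a unique phylogenetic $X$-tree. This requires the standard machinery of clique trees of chordal graphs together with a careful check that the ``first-component-distinct'' restriction on completion edges is exactly what prevents spurious merges of distinct characters, and that the distinguishing condition on interior edges removes the remaining ambiguity of edge contractions.
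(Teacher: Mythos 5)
First, note that the paper does not prove this theorem: it is imported verbatim from \cite{semple2002a}, so there is no in-paper argument to compare against, and your proposal must be judged against the argument in that source, which indeed proceeds via the correspondence between restricted chordal completions of $\text{Int}(\mathcal{C})$ and $X$-trees on which $\mathcal{C}$ is convex. Your outline identifies the right framework, and the easy parts are essentially right. For instance, in the forward direction the distinguishing condition follows simply because contracting an undistinguished interior edge $e=\{u,v\}$ preserves convexity of every character (convexity can only be destroyed by the contraction if some $\mathcal{T}(A)$ contains $u$ but not $v$ while some $\mathcal{T}(B)$ contains $v$ but not $u$, i.e.\ precisely if $e$ is distinguished), so $\mathcal{T}/e$ is already a non-isomorphic $X$-tree on which $\mathcal{C}$ is convex; your additional ``re-resolving'' step is unnecessary and in fact risky, since an arbitrary re-resolution of the resulting high-degree vertex need not keep $\mathcal{C}$ convex.

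The genuine gap is that the entire load-bearing content of the theorem --- the correspondence, up to isomorphism, between minimal restricted chordal completions of $\text{Int}(\mathcal{C})$ and the $X$-trees on which $\mathcal{C}$ is convex and by which every interior edge is distinguished --- is exactly what you defer to ``standard machinery'' in your closing paragraph. Without it, both directions collapse: in the forward direction you cannot conclude that two distinct minimal completions yield two non-isomorphic displaying trees, and in the backward direction the step from ``$G'$ contains the unique minimal completion $G_{\min}$'' to ``$\mathcal{T}'\cong\mathcal{T}$'' does not follow, because $G'$ may properly contain $G_{\min}$, so the clique structure of $G_{\min}$ corresponds to some tree obtained from $\mathcal{T}'$ rather than to $\mathcal{T}'$ itself; one must separately show that $\mathcal{T}'$ can differ from that tree only by edges that are undistinguished and then rule this out, and one must do so without assuming the distinguishing hypothesis for $\mathcal{T}'$ (it is hypothesised only for $\mathcal{T}$). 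Establishing this correspondence rests on Buneman's chordal-completion characterisation of compatibility together with the clique-tree structure of chordal graphs, and carrying it out is the actual work of \cite{semple2002a}; as written, your proposal is an accurate roadmap of that proof rather than a proof.
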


\subsection{Sufficient Conditions for CDT} \label{sec:CDT-cf}
This section introduces sufficient conditions for CDT (Character Sets Define Phylogenetic Trees), and further proposes and demonstrates a set of alternative determination conditions that do not rely on chordal graph operations.

\begin{Thm}\emph{\cite{Buneman1971}}\label{basic_thm_judgment_2}
Let $\mathcal{T}$ be a phylogenetic $X$-tree and $\sum$ be the set of non-trivial $X$-splits of $\mathcal{T}$. Then $\sum$ defines $\mathcal{T}$, that is, up to isomorphism, $\mathcal{T}$ is the unique phylogenetic $X$-tree whose set of $X$-splits includes $\sum$.
\end{Thm}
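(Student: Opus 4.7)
The plan is to apply the definability criterion of Theorem~\ref{basic_thm_judgment} to $\Sigma$: first verify that $\Sigma$ is convex on $\mathcal{T}$ and distinguishes every interior edge, then show that any rival phylogenetic $X$-tree $\mathcal{T}'$ on which $\Sigma$ is convex must share $\mathcal{T}$'s entire split system, and conclude by the classical splits-to-tree correspondence that $\mathcal{T}' \cong \mathcal{T}$.

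The first half is immediate from the definition of $\Sigma$: each split $\sigma_e = \phi^{-1}(V_1) \mid \phi^{-1}(V_2) \in \Sigma$ partitions $X$ along the two components of $\mathcal{T}\setminus e$, so $\sigma_e$ is convex on $\mathcal{T}$ and the interior edge $e$ is distinguished by $\sigma_e$ itself. The matching step is more substantive: for any phylogenetic $X$-tree $\mathcal{T}'$ on which $\Sigma$ is convex, I would show that each $A\mid B \in \Sigma$ is induced by a unique interior edge of $\mathcal{T}'$. Convexity forces $\mathcal{T}'(A)$ and $\mathcal{T}'(B)$ to have disjoint vertex sets; any interior vertex $v$ lying strictly between them would, by the degree-at-least-three condition for phylogenetic trees, have an extra branch terminating at a leaf of $X = A \cup B$, which would sweep $v$ into one of the two minimal subtrees and contradict disjointness. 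Hence the separating path is a single edge $e'$ with $\sigma_{e'} = A\mid B$; since $|A|,|B|\ge 2$ forces both $\mathcal{T}'(A)$ and $\mathcal{T}'(B)$ to contain an interior vertex, $e'$ is an interior edge. Counting then finishes this step: $|\Sigma| = n-3$ and any phylogenetic $X$-tree has at most $n-3$ interior edges, with equality only in the binary case, so $\mathcal{T}'$ is binary and $\Sigma(\mathcal{T}') = \Sigma$.

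The remaining step---two binary phylogenetic $X$-trees with identical split systems are isomorphic---is the main obstacle. It is the classical splits-equivalence fact: the splits of a single tree are pairwise compatible (for any two interior edges, at least one of the four pairwise intersections of the blocks of their splits is empty), and a compatible split system has a unique tree realization, reconstructible as the Buneman graph whose interior vertices correspond to the maximal consistent orientations of $\Sigma$. I would handle this either by citing Buneman's original construction or by verifying condition~(2) of Theorem~\ref{basic_thm_judgment} directly for $\Sigma$, observing that the minimal restricted chordal completion of $\mathrm{Int}(\Sigma)$ is visibly determined by $\mathcal{T}$'s internal adjacency and any competing completion would produce an alternative tree realization already excluded by the matching step.
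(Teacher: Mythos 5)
The paper does not actually prove this statement: it is imported verbatim as Buneman's classical Splits--Equivalence Theorem, with only the citation standing in for a proof. So there is no in-paper argument to compare yours against, and your proposal has to be judged on its own. The first two of your three steps are sound and do genuine work. The observation that a nontrivial split $A\mid B$ convex on a rival tree $\mathcal{T}'$ must be realized by a single interior edge of $\mathcal{T}'$ (because an intermediate vertex of degree at least three between $\mathcal{T}'(A)$ and $\mathcal{T}'(B)$ would be swept into one of the two minimal subtrees) is exactly the right key lemma, and the edge count $|\Sigma|=n-3$ correctly forces $\mathcal{T}'$ to be binary with split system equal to that of $\mathcal{T}$. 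One caveat worth recording: that count, and indeed the theorem as literally stated with ``includes,'' is only valid for binary $\mathcal{T}$ (for a non-binary $\mathcal{T}$ any binary refinement also has all splits of $\Sigma$); the paper's blanket convention that all trees are binary rescues this, but your proof should say so explicitly.

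The genuine gap is the third step. After your matching argument, what remains is precisely the assertion that a binary phylogenetic $X$-tree is determined up to isomorphism by its split system --- and that assertion \emph{is} the content of the theorem being proved. Your two proposed ways of closing it both fail to do so: citing ``Buneman's original construction'' is citing the result itself, and verifying condition~(2) of Theorem~\ref{basic_thm_judgment} for $\Sigma$ is not carried out (``visibly determined'' is not an argument) and risks circularity, since the chordal-completion criterion of \cite{semple2002a} is itself built on the splits-to-tree correspondence. To make the proof self-contained you need an actual uniqueness argument, e.g.\ induction on $|X|$: locate a cherry $\{a,b\}$ of $\mathcal{T}$, observe that the split $\{a,b\}\mid X\setminus\{a,b\}$ lies in $\Sigma$ and hence forces $\{a,b\}$ to be a cherry of $\mathcal{T}'$ as well, then delete $b$, suppress the resulting degree-two vertex, and apply the induction hypothesis to the restricted split systems. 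Without some such step the proposal establishes only the equivalence of the two phrasings in the theorem statement, not the theorem.
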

For a compatible collection $\mathcal{C}$ of characters on $X$, we say that $\mathcal{C}$ \emph{infers} a character $\chi$ if $\chi$ is convex on every $X$-tree on which $\mathcal{C}$ is convex. Combining Theorem \ref{basic_thm_judgment_2}, if the character set $\mathcal{C}$ can infer all nontrivial $X$-splits of $\mathcal{T}$, then $\mathcal{C}$ defines $\mathcal{T}$.

We first discuss an equivalent definition of a character $\chi$ being convex on an $X$-tree $\mathcal{T}$: if there exists a subset $F$ of edges of $\mathcal{T}$ such that the graph obtained by deleting $F$ from $\mathcal{T}$ satisfies the following property: for all $A, B \in \pi(\chi)$, there are two connected components in this graph, where $\phi(A)$ is a subset of the vertex set of one component and $\phi(B)$ is a subset of the vertex set of the other component. In this case, we say that $\chi$ \emph{is diaplayed by} $F$.

Let $\mathcal{T} = (T, \phi)$ be an $X$-tree and $F$ be a set of some edges of $T$. Let $V_1, V_2, \ldots, V_k$ be the vertex sets of the components of $T \setminus F$. The \emph{partition of $X$ displayed by $F$} is $\{\phi^{-1}(V_i) : i \in \{1, 2, \ldots, k\}\}$.

\begin{Lem}\emph{\cite{Bordewich2015}}\label{basic_lem_1}
Let $r \geq 2$ and let $\mathcal{T}$ be a phylogenetic $X$-tree. Furthermore suppose that $\mathcal{T}$ has a path containing(in order) $2r - 2$ interior edges $e_1, e_2, \ldots, e_{2r - 2}$. Let $\{X_1, X_2, \ldots, X_{2r - 1}\}$ be the partition of $X$ displayed by $E' = \{e_1, e_2, \ldots, e_{2r - 2}\}$, where, for all $i \in \{1, 2, \ldots, 2r - 2\}$, the edge $e_i$ is the only edge in $E'$ in the  minimal subtree of $\mathcal{T}$ connecting the elements in $X_i \cup X_{i + 1}$. Then any two $r$-state characters $\chi_1$ and $\chi_2$ with
\begin{center}
  \[
\pi(\chi_1) = \{X_1, X_2 \cup X_3, X_4 \cup X_5, \ldots, X_{2r - 2} \cup X_{2r - 1}\}\]  
\end{center}
and
\begin{center}
\[\pi(\chi_2) = \{X_1 \cup X_2, X_3 \cup X_4, \ldots, X_{2r - 3} \cup X_{2r - 2}, X_{2r - 1}\}.
\]
\end{center}
infer $X$-splits $\sigma_{e_1}, \sigma_{e_2}, \ldots, \sigma_{e_{2r - 2}}$.
\end{Lem}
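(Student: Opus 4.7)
The plan is to fix an arbitrary $X$-tree $\mathcal{T}'$ on which both $\chi_1$ and $\chi_2$ are convex and to show that each $X$-split $\sigma_{e_i}$ is displayed by $\mathcal{T}'$. The strategy is to extract the block structure of $\mathcal{T}'$ jointly induced by $\chi_1$ and $\chi_2$ and to argue that the combined convexity constraints force a path arrangement of the blocks $X_1, \ldots, X_{2r-1}$.

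For each $j \in \{1, 2\}$ I would choose a minimal edge set $F_j \subseteq E(\mathcal{T}')$ whose removal from $\mathcal{T}'$ yields $r$ components with leaf sets equal to the blocks of $\pi(\chi_j)$, so $|F_j| = r - 1$. Two observations drive the argument. First, for each $j$ no edge of $F_j$ can lie inside the minimal subtree $\mathcal{T}'(X_k)$ for any $k$, since $X_k$ is contained in a single block of $\pi(\chi_j)$ and removing such an edge would split that block. Second, $F_1 \cap F_2 = \emptyset$, because any $X$-split simultaneously realized as a union of $\pi(\chi_1)$-blocks and of $\pi(\chi_2)$-blocks must be trivial --- the only common coarsening of the two partitions is $X$ itself. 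Consequently $|F_1 \cup F_2| = 2r - 2$, and deleting $F_1 \cup F_2$ from $\mathcal{T}'$ yields exactly $2r - 1$ components, one for each $X_k$. Contracting each such component to a labelled vertex produces a tree $\hat{\mathcal{T}}$ on the vertex set $\{X_1, \ldots, X_{2r-1}\}$ with precisely $2r - 2$ edges, in bijection with the elements of $F_1 \cup F_2$.

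Next I would pin down the structure of $\hat{\mathcal{T}}$. The $r - 1$ edges of $\hat{\mathcal{T}}$ not in $F_1$ must lie within the $\pi(\chi_1)$-block groups $\{X_1\}, \{X_2, X_3\}, \ldots, \{X_{2r-2}, X_{2r-1}\}$, forcing them to be precisely $X_2 X_3, X_4 X_5, \ldots, X_{2r-2} X_{2r-1}$. Symmetrically, the edges not in $F_2$ are $X_1 X_2, X_3 X_4, \ldots, X_{2r-3} X_{2r-2}$. Since these two disjoint sets of $r - 1$ edges together exhaust $E(\hat{\mathcal{T}})$, the tree $\hat{\mathcal{T}}$ is the path $X_1 - X_2 - \cdots - X_{2r-1}$. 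For each $i \in \{1, \ldots, 2r - 2\}$ the edge $X_i X_{i+1}$ of this path lifts to a unique $f_i \in F_1 \cup F_2$ of $\mathcal{T}'$ whose removal separates the leaves $X_1 \cup \cdots \cup X_i$ from $X_{i+1} \cup \cdots \cup X_{2r-1}$, yielding exactly the split $\sigma_{e_i}$. Since $\mathcal{T}'$ was arbitrary, $\{\chi_1, \chi_2\}$ infers each of $\sigma_{e_1}, \ldots, \sigma_{e_{2r-2}}$.

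The main hurdle I expect is the clean verification of the two observations above, namely that every component of $\mathcal{T}' \setminus (F_1 \cup F_2)$ contains exactly one $X_k$ and that $F_1 \cap F_2 = \emptyset$. Once these are pinned down, the equality $|E(\hat{\mathcal{T}})| = 2(r-1) = |F_1| + |F_2|$ leaves no slack and the path structure of $\hat{\mathcal{T}}$ follows by a direct counting argument.
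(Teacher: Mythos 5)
The paper does not actually prove this lemma: it is imported verbatim from Bordewich and Semple \cite{Bordewich2015}, so there is no in-paper argument to compare against. Your proposal is, as far as I can check, a correct self-contained proof, and the central counting step ($|F_1|+|F_2|=2r-2=|E(\hat{\mathcal{T}})|$, with the $F_2$-edges confined to the $\pi(\chi_1)$-groups and vice versa, forcing $\hat{\mathcal{T}}$ to be the alternating path $X_1-X_2-\cdots-X_{2r-1}$) is sound. Two points should be made explicit in a write-up. First, the bijection between the $2r-1$ components of $\mathcal{T}'\setminus(F_1\cup F_2)$ and the classes $X_1,\ldots,X_{2r-1}$ needs not only that each $X_k$ survives intact (your first observation) but also that no two classes share a component: if $X_k$ and $X_l$ lay in the same component of $\mathcal{T}'\setminus(F_1\cup F_2)$, they would lie in the same component of $\mathcal{T}'\setminus F_1$ and of $\mathcal{T}'\setminus F_2$, hence in the same block of both $\pi(\chi_1)$ and $\pi(\chi_2)$; the interleaving of the two partitions makes their common refinement exactly $\{X_1,\ldots,X_{2r-1}\}$, so $k=l$. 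This is the dual of the coarsening fact you invoke for $F_1\cap F_2=\emptyset$, and with $2r-1$ components against $2r-1$ classes it yields the claimed bijection. Second, the last step should record that the lemma's indexing hypothesis (each $e_i$ is the only edge of $E'$ in the minimal subtree connecting $X_i\cup X_{i+1}$) is precisely what identifies $\sigma_{e_i}$ with $(X_1\cup\cdots\cup X_i)\,|\,(X_{i+1}\cup\cdots\cup X_{2r-1})$, so that the split displayed by the lift of the path edge $X_iX_{i+1}$ in $\mathcal{T}'$ really is $\sigma_{e_i}$. With those details filled in, the argument goes through.
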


Let $e$ be an edge of an $X$-tree $\mathcal{T} = (T, \phi)$, and let $V_1, V_2$ be the vertex sets of the components of $T \setminus e$. Let $\chi_e$ denote the character $\chi_e : X \to \{\alpha_e, \beta_e\}$ defined, for  all $y \in X$, by
\[
\chi_e(y) = 
\begin{cases} 
\alpha_e & \text{if } y \in \phi^{-1}(V_1), \\
\beta_e & \text{if } y \in X \setminus \phi^{-1}(V_1).
\end{cases}
\]

\begin{Lem}\emph{\cite{Bordewich2015}}\label{basic_lem_2}
Let $\mathcal{T} = (T; \phi)$ be an $X$-tree and let $\chi$ be a character on $X$ that is convex on $\mathcal{T}$, where $\pi(\chi) = \{Y_1, Y_2, \ldots, Y_r\}$, where $r\geq 2$. Let $\{f_1, f_2, \ldots, f_{r-1}\}$ be a set of edges that displays $\chi$. Let $E' = \{e_1, e_2, \ldots, e_s\}$ be a subset of edges of $\mathcal{T}$ with $E' \cap \{f_1, f_2, \ldots, f_{r-1}\}$ empty satisfying the following two properties:
\begin{enumerate}
  \item for all distinct $i, j \in \{1, 2, \ldots, r-1\}$, there is an interior edge $e \in E'$ on the path from an end vertex of $f_i$ to an end vertex of $f_j$; and
  \item for each $e = \{u, v\} \in E'$, there is a path from $u$ (resp., $v$) to a vertex $w$ of $\mathcal{T}$ avoiding $v$ (resp., $u$) and  $f_1, f_2, \ldots, f_{r-1}$, and $\phi^{-1}(w)$ is nonempty.
\end{enumerate}
Then the collection 
\begin{center}
  $\{\chi, \chi_{e_1}, \chi_{e_2}, \ldots, \chi_{e_s}\}$   
\end{center}
of characters on $X$ infers each of the $X$-splits $\sigma_{f_1}, \sigma_{f_2}, \ldots, \sigma_{f_{r-1}}$.
\end{Lem}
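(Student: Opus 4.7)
The plan is to take an arbitrary $X$-tree $\mathcal{T}'$ on which $\mathcal{C} = \{\chi, \chi_{e_1}, \ldots, \chi_{e_s}\}$ is convex and argue that every split $\sigma_{f_i}$ is an $X$-split of $\mathcal{T}'$. Since $\chi$ is convex on $\mathcal{T}'$, I would first fix a set $F' = \{f'_1, \ldots, f'_{r-1}\}$ of edges of $\mathcal{T}'$ displaying $\chi$; the size $r-1$ is always achievable, since $r-1$ suitably chosen edges separate the $r$ subtrees $\mathcal{T}'(Y_1),\ldots,\mathcal{T}'(Y_r)$. Because each $\chi_{e_j}$ is a convex $2$-state character on $\mathcal{T}'$, I would also fix the unique edge $e'_j$ of $\mathcal{T}'$ with $\sigma_{e'_j} = \sigma_{e_j}$.

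Next I would localize each $e'_j$. Let $C_{k(j)}$ denote the component of $\mathcal{T}\setminus F$ containing $e_j$. Condition~(2) forces both endpoints of $e_j$ to reach a labelled vertex along a path avoiding the other endpoint and the edges of $F$, so in $\mathcal{T}$ the split $\sigma_{e_j}$ divides $Y_{k(j)}$ into two nonempty parts while leaving each $Y_\ell$ with $\ell\neq k(j)$ entirely on one side. Transferring to $\mathcal{T}'$, the identity $\sigma_{e'_j} = \sigma_{e_j}$ then forces $e'_j$ to separate two taxa of $Y_{k(j)}$, so $e'_j$ must lie in the subtree $\mathcal{T}'(Y_{k(j)})$ and in particular in the $Y_{k(j)}$-component of $\mathcal{T}'\setminus F'$.

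I would then contract each $Y_k$-region in both $\mathcal{T}$ and $\mathcal{T}'$ to obtain meta-trees $T^{\ast}$ and $T^{\ast\prime}$ on the labelled vertex set $\{\bar{Y}_1,\ldots,\bar{Y}_r\}$, whose edge sets are $F$ and $F'$ respectively; the lemma reduces to proving $T^{\ast} = T^{\ast\prime}$ as labelled trees, since then each $\sigma_{f_i}$ appears as the split of some edge in $F'$. For each pair $f_i, f_j\in F$, condition~(1) supplies an interior edge $e\in E'$ on the path in $\mathcal{T}$ from an end vertex of $f_i$ to an end vertex of $f_j$. Because $E'\cap F=\emptyset$, this edge $e$ lies inside some component $C_k$ that the path traverses, and the two $f$-edges along the path adjacent to $C_k$ attach to $C_k$ on opposite sides of $e$. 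The split $\sigma_e$ therefore separates the $Y$-blocks reached through one attachment side of $C_k$ from those reached through the other. By the localization step, the corresponding $e'$ lies in the analogous component $C'_k$ of $\mathcal{T}'\setminus F'$ and induces the same split, which forces the $f'$-edges incident to $C'_k$ to inherit the same bipartition. Ranging over every pair of $f$-edges, the local adjacency structure of each meta-vertex $\bar{Y}_k$ in $T^{\ast\prime}$ is pinned to match that in $T^{\ast}$, whence $T^{\ast} = T^{\ast\prime}$ by Theorem~\ref{basic_thm_judgment_2}.

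The hardest part is the matching argument in the third paragraph: the bookkeeping needed to lift pairwise bipartition information into a full isomorphism of local meta-structures. One must handle the choice of end vertices permitted by condition~(1), allow for paths traversing several $C_k$'s, and verify that when a single $C_k$ hosts several $e_j$'s the bipartitions they induce are mutually consistent and jointly determine the correct meta-arrangement around $\bar{Y}_k$. Once this matching is established, the conclusion that $\{\chi,\chi_{e_1},\ldots,\chi_{e_s}\}$ infers $\sigma_{f_1},\ldots,\sigma_{f_{r-1}}$ follows immediately.
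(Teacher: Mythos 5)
This lemma is quoted from Bordewich and Semple~\cite{Bordewich2015}; the paper itself gives no proof of it, so there is nothing internal to compare your argument against. Judged on its own terms, your strategy is the natural one and its first two stages are sound: a convex $r$-state character on $\mathcal{T}'$ is indeed displayed by some set $F'$ of $r-1$ edges, condition~(2) does force $\sigma_{e_j}$ to cut $Y_{k(j)}$ into two nonempty pieces while keeping every other block on one side, and this correctly pins $e'_j$ inside $\mathcal{T}'(Y_{k(j)})$, hence inside the $Y_{k(j)}$-component of $\mathcal{T}'\setminus F'$. The reduction to the equality of the two fully labelled meta-trees $T^{\ast}$ and $T^{\ast\prime}$ is also legitimate, since every $\sigma_{f_i}$ is a union of blocks on each side.

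The gap is exactly where you flag it: the third paragraph asserts, but does not establish, that the pairwise separation facts force $T^{\ast}=T^{\ast\prime}$, and the one tool you invoke there, Theorem~\ref{basic_thm_judgment_2}, does not apply --- Buneman's theorem reconstructs a phylogenetic tree from its \emph{full} set of nontrivial splits, whereas what your argument produces is a partial collection of positive betweenness statements (``$\bar{Y}_k$ lies between these blocks and those blocks in $T^{\ast\prime}$''), and positive betweenness facts alone are consistent with more than one tree unless you show you have enough of them. To close this you need two things. First, resolve the ambiguity in condition~(1) in your favour: for two $F$-edges $g_p,g_q$ both incident to the component $C_k$, the witnessing edge must be taken on the path \emph{between} them, which lies entirely in $C_k$; only then does its split separate the blocks beyond $g_p$ from the blocks beyond $g_q$ while splitting $Y_k$, so that the transferred edge $e'$ sits in $C_k'$ and certifies that $\bar{Y}_k$ separates those two block sets in $T^{\ast\prime}$. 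Second, you need an argument that these facts suffice; one that works is a degree count: the separations show that for every $k$ the partition of $\{\bar{Y}_\ell\}_{\ell\neq k}$ into components of $T^{\ast\prime}\setminus\bar{Y}_k$ refines the one given by $T^{\ast}\setminus\bar{Y}_k$, so $\deg_{T^{\ast\prime}}(\bar{Y}_k)\geq\deg_{T^{\ast}}(\bar{Y}_k)$ for all $k$; since both degree sums equal $2(r-1)$, equality holds everywhere, the two partitions coincide at every vertex, and two trees on the same labelled vertex set with identical vertex-deletion partitions are identical. Without some such closing argument the proof is an outline, not a proof.
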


\begin{Def}\label{Def_restriction}
Let $\mathcal{T}'$ be a phylogenetic $X'$-tree and $\mathcal{T}$ be a phylogenetic $X$-tree with $X \subseteq X'$. If $\mathcal{T}$ can be obtained from he minimal subtree of $\mathcal{T}'$ connecting the elemnts in $X$ by suppressing degree-two vertices, then $\mathcal{T}$ is said to be \emph{a restriction of $\mathcal{T}'$}.
\end{Def}
Let $A'|B'$ be an $X'$-split and $A|B$ be an $X$-split.  
If for some choices of $A$ and $B$, $A \subseteq A'$ and $B \subseteq B'$, then $A'|B'$ is said to \emph{extend $A|B$}. Note that $\mathcal{T}$ is a restriction of $\mathcal{T}'$ if and only if for each interior edge $e$ of $\mathcal{T}$, there exists an edge $f$ in $\mathcal{T}'$ such that the $X'$-split corresponding to $f$ extends the $X$-split corresponding to $e$.

\begin{Def}\label{Def_representative}
Suppose $\mathcal{T}$ is a restriction of $\mathcal{T}'$ and $F$ is a set of some interior edges of $\mathcal{T}'$. If for each interior edge $e$ of $\mathcal{T}$, there is exactly one edge $f$ in $F$ such that $\sigma_f$ extends $\sigma_e$, then $F$ is called a \emph{$\mathcal{T}$-representable subset}.
\end{Def}

Let the character set $\mathcal{C}$ on $X$ be convex on $\mathcal{T}$, and $\chi$ be a character in $\mathcal{C}$ displayed by $E_{\chi}$. The character on $X'$ displayed by the set $\{f : \sigma_f \text{ extends } \sigma_e, e \in E_{\chi}, f \in F\}$ in $\mathcal{T}'$ is denoted as $\chi_F$. At this time, $\chi_F$ is a character on $X'$. Furthermore, denote $\mathcal{C}_F = \{\chi_F : \chi \in \mathcal{C}\}$.

\begin{Lem}\emph{\cite{Bordewich2015}}\label{basic_lem_3}
Let $\mathcal{T}'$ be a phylogenetic $X'$-tree and $\mathcal{T}$ be a phylogenetic $X$-tree with $X \subseteq X'$. Suppose $\mathcal{T}$ is a restriction of $\mathcal{T}'$. Let $F$ be a $\mathcal{T}$-representable subset of interior edges of $\mathcal{T}'$. If $\mathcal{C}$ is a collection of characters on $X$ that defines $\mathcal{T}$, then the collection $\mathcal{C}_F$ of characters on $X'$ infers each of the $X'$-splits of $\mathcal{T}'$ induced by the edges in $F$.
\end{Lem}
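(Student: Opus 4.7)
The plan is to fix an arbitrary $X'$-tree $\mathcal{T}''$ on which $\mathcal{C}_F$ is convex and show that for each $f \in F$, the $X'$-split $\sigma_f$ of $\mathcal{T}'$ is also an $X'$-split of $\mathcal{T}''$. I would begin with the restriction identity $\chi_F|_X = \chi$ for every $\chi \in \mathcal{C}$. This follows because $F$ is $\mathcal{T}$-representable: the subset $F_\chi \subseteq F$ corresponding to the displaying set $E_\chi$ of $\chi$ in $\mathcal{T}$ is in bijection with $E_\chi$, and for any $x,y \in X$ an edge of $F_\chi$ lies on the $\mathcal{T}'$-path from $x$ to $y$ iff the corresponding edge of $E_\chi$ lies on the $\mathcal{T}$-path between them; intersecting the components of $\mathcal{T}' \setminus F_\chi$ with $X$ therefore recovers the components of $\mathcal{T} \setminus E_\chi$. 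Since restriction preserves convexity, $\mathcal{C}$ is convex on $\mathcal{T}''|_X$, and because $\mathcal{C}$ defines $\mathcal{T}$ we conclude $\mathcal{T}''|_X \cong \mathcal{T}$.

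Next, I would fix $f \in F$. By Theorem~\ref{basic_thm_judgment}, every interior edge of $\mathcal{T}$ is distinguished by some $\chi \in \mathcal{C}$, so $f \in F_\chi$ for some such $\chi$. Writing $\pi(\chi_F) = \{Y'_1,\ldots,Y'_r\}$, the restrictions $Y'_i \cap X$ are precisely the blocks of $\pi(\chi)$. I would then compare three ``quotient trees'' on the set of state blocks $\{Y'_1,\ldots,Y'_r\}$: the one induced by $\chi_F$ on $\mathcal{T}'$, the one induced by $\chi$ on $\mathcal{T}$, and the one induced by $\chi_F$ on $\mathcal{T}''$. In each, vertices are state blocks and edges correspond bijectively to displaying edges, each carrying a bipartition of state blocks. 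Because $\mathcal{T}$ is a restriction of $\mathcal{T}'$ and $\mathcal{T}''|_X \cong \mathcal{T}$, and because the state blocks restrict compatibly, the mutual adjacencies of the minimal subtrees spanning the $Y'_i$ coincide across all three trees, so the three quotient trees agree canonically.

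Finally, the edge $f \in F_\chi$ corresponds to a specific edge $\eta$ of this common quotient tree; let $f'' \in E(\mathcal{T}'')$ be the displaying edge of $\chi_F$ on $\mathcal{T}''$ matched to $\eta$. Deleting $f$ in $\mathcal{T}'$ (respectively $f''$ in $\mathcal{T}''$) places each $Y'_i$ entirely on one side by convexity of $\chi_F$, with the sides determined by $\eta$; hence $\sigma_f = \sigma_{f''}$ as $X'$-splits, so $\sigma_f$ is a split of $\mathcal{T}''$. Since $\mathcal{T}''$ was arbitrary, $\mathcal{C}_F$ infers $\sigma_f$. I expect the main obstacle to be the quotient-tree identification: one must rigorously verify that the adjacency pattern of the minimal state-subtrees is preserved from $\mathcal{T}'$ to $\mathcal{T}$ via restriction and then to any $\mathcal{T}''$ satisfying the convexity hypothesis, so that the edge bipartitions displaying $\chi_F$ match across trees. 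The remaining work is essentially book-keeping with restrictions and $\mathcal{T}$-representable subsets.
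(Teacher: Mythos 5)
First, a remark on the comparison itself: the paper does not prove this lemma --- it is quoted verbatim from \cite{Bordewich2015} as a known tool --- so there is no in-paper proof to measure your route against; I can only assess your argument on its own terms.

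Your opening step is correct and is the natural one: the path/split correspondence supplied by $\mathcal{T}$-representability gives that the blocks of $\pi(\chi_F)$ meet $X$ exactly in the blocks of $\pi(\chi)$, so $\mathcal{C}$ is convex on $\mathcal{T}''|_X$ and hence $\mathcal{T}''|_X \cong \mathcal{T}$. The genuine gap is precisely the step you flag as ``the main obstacle,'' and it is not book-keeping: the assertion that the mutual adjacencies of the subtrees $\mathcal{T}''(Y'_1),\ldots,\mathcal{T}''(Y'_r)$ agree with those in $\mathcal{T}'$ does \emph{not} follow from the only two facts you have in hand at that point, namely convexity of the single character $\chi_F$ on $\mathcal{T}''$ and $\mathcal{T}''|_X \cong \mathcal{T}$. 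A taxon $p \in Y'_i \cap (X' \setminus X)$ may attach in $\mathcal{T}''$ far from where its block sits in $\mathcal{T}'$, dragging $\mathcal{T}''(Y'_i)$ across the tree while all the $\mathcal{T}''(Y'_j)$ stay pairwise disjoint and the restriction to $X$ is unchanged. Schematically: let the $X$-parts of four blocks form a quartet $ab|cd$ with $f$ inducing the block bipartition $\{1,2\}\,|\,\{3,4\}$ in $\mathcal{T}'$, and let $Y'_1=\{a,p\}$ with $p\in X'\setminus X$. If in $\mathcal{T}''$ the leaf $p$ hangs off a subdivision point of the edge leading to $c$, then $\mathcal{T}''(Y'_1)$ is the path from $a$ through the central edge to $p$, the four spanning subtrees are still pairwise disjoint (so $\chi_F$ is still convex on $\mathcal{T}''$) and $\mathcal{T}''|_X$ is still the quartet, yet the quotient structure has become a star centred at $Y'_1$ and no edge of $\mathcal{T}''$ realizes $\{1,2\}\,|\,\{3,4\}$; the split $\sigma_f$ is then not displayed by $\mathcal{T}''$. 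So the implication you rely on is false as stated, not merely unverified.

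What excludes such a $\mathcal{T}''$ in the actual lemma is the convexity on $\mathcal{T}''$ of the \emph{other} characters of $\mathcal{C}_F$: every $p \in X'\setminus X$ lies in one block of each $\chi'_F$, and it is the conjunction of these constraints --- which is where the full strength of ``$\mathcal{C}$ defines $\mathcal{T}$,'' beyond fixing the topology of $\mathcal{T}''|_X$, actually enters --- that pins $p$ to the correct side of each $f \in F$. After establishing $\mathcal{T}''|_X \cong \mathcal{T}$ your argument never invokes any character other than the one $\chi$ whose displaying set contains the edge corresponding to $f$, so this essential mechanism is absent. Supplying it is the real content of the lemma; the quotient-tree identification cannot be carried out one character at a time.
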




\section{Phylogenetic Tree Definition by 3-State Characteristic Sets}~\label{sec:3-state}

This section focuses on the special case where the number of character states is fixed at $r = 3$. We determine the minimal number of leaves $n_3$ such that every phylogenetic tree with $n \geq n_3$ leaves can be uniquely defined by a set of 3-state characters, where the size of the set is \(\left\lceil \frac{n - 3}{2} \right\rceil\).


\begin{Thm}\label{main_thm_r=3}
Let $\mathcal{T}$ be a phylogenetic $X$-tree, where $n = |X|$.
If $n \geq 8$, then there is a collection $\mathcal{C}$ of 3-state characters of size 
\begin{center}
  $|\mathcal{C}| = \lceil \frac{n - 3}{2} \rceil$  
\end{center}
 that defines $\mathcal{T}$.
Moreover, if $n = 7$, there is a phylogenetic $X$-tree that is not  defined by any collection of $2 = \lceil \frac{7 - 3}{2} \rceil$ characters.
\end{Thm}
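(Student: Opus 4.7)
The plan has two parts. For the upper bound at $n \geq 8$, I would proceed by strong induction on $n$, with base cases $n = 8$ and $n = 9$ (two are needed because each inductive step decreases $n$ by two). For each base case, I would enumerate the finitely many binary phylogenetic trees on $n$ leaves up to isomorphism and, for each isomorphism class, exhibit an explicit collection of three $3$-state characters. Verification that each such collection defines the associated tree proceeds via Theorem \ref{basic_thm_judgment}: check convexity of every character, check that every interior edge is distinguished by some character, and check uniqueness of the minimal restricted chordal completion of $\text{Int}(\mathcal{C})$. This stage is largely computational, but requires care for trees containing an internal subtree isomorphic to the snowflake, where the three characters must be arranged so that each interior edge is still distinguished.

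For the inductive step with $n \geq 10$, I would locate a pair of leaves of $\mathcal{T}$ whose removal, followed by suppression of any resulting degree-two vertices, yields a binary phylogenetic tree $\mathcal{T}'$ on $n - 2$ leaves of which $\mathcal{T}$ is an extension in the sense of Definition \ref{Def_restriction}. Two natural choices are: removing a cherry together with a carefully chosen nearby leaf, or removing one leaf from each of two adjacent cherries. By the inductive hypothesis there is a collection $\mathcal{C}'$ of $\lceil (n-5)/2 \rceil$ three-state characters that defines $\mathcal{T}'$. I would then invoke Lemma \ref{basic_lem_3} to lift $\mathcal{C}'$ to a collection $\mathcal{C}$ of three-state characters on $X$ which infers every $X$-split of $\mathcal{T}$ corresponding to an interior edge already present in $\mathcal{T}'$. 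Finally, a single new $3$-state character is added, built from Lemma \ref{basic_lem_1} or Lemma \ref{basic_lem_2} according to the local topology near the removed leaves, so that together with $\mathcal{C}$ it infers the splits of the interior edges of $\mathcal{T}$ that were newly introduced when re-attaching the removed leaves. The total size is $\lceil (n-5)/2 \rceil + 1 = \lceil (n-3)/2 \rceil$, as required.

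For the counterexample at $n = 7$, I would exhibit a binary phylogenetic $X$-tree with $|X| = 7$ that contains an internal subtree isomorphic to the $3$-star. Such a tree exists: take a central interior vertex $c$ adjacent to three interior vertices $v_1, v_2, v_3$; attach two leaves to $v_2$, two leaves to $v_3$, one leaf to $v_1$, and two further leaves to an extra interior vertex adjacent to $v_1$. The vertex $c$ together with $v_1, v_2, v_3$ induces an internal subtree isomorphic to the $3$-star. By Theorem \ref{thm_3-star}, no collection of at most two characters (of any state count) can define this tree, giving the desired negative conclusion.

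The hard part will be the inductive step: showing that in \emph{every} binary phylogenetic tree on $n \geq 10$ leaves one can find a pair of leaves whose removal both yields a binary tree of the required form and admits a single additional $3$-state character that, combined with the lifted collection, infers the splits of all newly created interior edges. Trees with few cherries, or with the removed leaves lying inside a snowflake-like local region, will need a careful case analysis on the local shape of $\mathcal{T}$ to ensure that the hypotheses of Lemma \ref{basic_lem_1} or Lemma \ref{basic_lem_2} are met. A secondary obstacle is the tedium of verifying the unique minimal restricted chordal completion condition in each of the base-case constructions.
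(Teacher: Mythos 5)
Your proposal follows essentially the same route as the paper: the $3$-star counterexample for $n=7$ via Theorem~\ref{thm_3-star}, base cases $n=8,9$ verified through Theorem~\ref{basic_thm_judgment}, and an inductive step that deletes two leaves, lifts the inductive character set with Lemma~\ref{basic_lem_3}, and adds one new $3$-state character whose splits are recovered via Lemma~\ref{basic_lem_2}. The one detail you leave open---which pair of leaves to delete---is resolved in the paper by always taking one leaf from each of two cherries separated by at least three interior edges (such a pair exists for every $n \geq 8$), which is precisely what makes the hypotheses of Lemma~\ref{basic_lem_2} hold for the single added character $\pi(\chi)=\{\{n-3,n-1\},\{n-2,n\},\{1,2,\ldots,n-4\}\}$; your alternative of deleting from two \emph{adjacent} cherries would fail to provide the interior edge that Lemma~\ref{basic_lem_2} requires between the two displayed edges.
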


\subsection{Counterexample for $n = 7$}

We construct a 7-leaf phylogenetic tree that contains an internal subtree isomorphic to the 3-star (Fig.~\ref{fig:3star-counterexample}). By Theorem \ref{thm_3-star}, this structure cannot be defined by two characters of any arity. In particular, it cannot be defined by two 3-state characters.

\begin{figure}[H]
\centering
\includegraphics[width=0.45\linewidth]{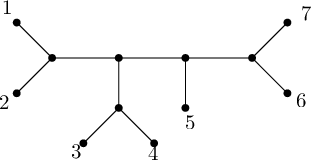}
\caption{A 7-leaf phylogenetic tree containing a 3-star as a subtree. This tree cannot be defined by two 3-state characters.}
\label{fig:3star-counterexample}
\end{figure}

\subsection{\texorpdfstring{Base cases: $n = 8$ and $n = 9$}{Base cases: n = 8 and n = 9}}

The remaining proof of Theorem \ref{main_thm_r=3} proceeds by induction on \( n \). For \( n = 8 \), let \( F_1, F_2, F_3 \) be a partition of the internal edges of $\mathcal{T}$ such that \( |F_1|, |F_2|, |F_3| \leq 2 \). For each \( i \in \{1, 2, 3\} \), let \( \chi_i \) be a character in $\mathcal{C}$ displayed by \( F_i \). If \( T \) is a caterpillar tree, the internal edges of $\mathcal{T}$ lie on a single path; assign the last internal edge on this path to \( F_3 \), and distribute the other four internal edges evenly into \( F_1 \) and \( F_2 \) such that no two edges in the same set are adjacent. If $\mathcal{T}$ is not a caterpillar tree, $\mathcal{T}$ has a longest path with at most two internal edges not on this path; assign the internal edges not on the longest path to \( F_3 \), and distribute the remaining internal edges evenly into \( F_1 \) and \( F_2 \) such that no two edges in the same set are adjacent. Construct \( \text{Int}(\{\chi_1, \chi_2, \chi_3\}) \) accordingly; by Theorem \ref{basic_thm_judgment}, \( \{\chi_1, \chi_2, \chi_3\} \) defines $\mathcal{T}$.

For \( n = 9 \), similar to the above discussion: if $\mathcal{T}$ is a caterpillar tree, its internal edges lie on a single path; assign the 1st and 4th internal edges on this path to \( F_1 \), the 2nd and 5th to \( F_2 \), and the 3rd and 6th to \( F_3 \). If \( T \) is not a caterpillar tree, $\mathcal{T}$ has a longest path with at most two internal edges not on this path. If there are two internal edges not on the longest path, assign them to \( F_3 \), and distribute the remaining internal edges evenly into \( F_1 \) and \( F_2 \) such that no two edges in the same set are adjacent. If there is only one internal edge not on the longest path, assign this edge together with either the first or last internal edge on the longest path to \( F_3 \), and distribute the remaining internal edges evenly into \( F_1 \) and \( F_2 \) such that no two edges in the same set are adjacent. Construct \( \text{Int}(\{\chi_1, \chi_2, \chi_3\}) \) accordingly; by Theorem \ref{basic_thm_judgment}, \( \{\chi_1, \chi_2, \chi_3\} \) defines $\mathcal{T}$.

\subsection{\texorpdfstring{Inductive step for $n \geq 8$}{Inductive step for n ≥ 8}}
For \( n \geq 8 \), the phylogenetic tree $\mathcal{T}$ must contain two cherries separated by at least three internal edges. Without loss of generality, assume these cherries are labeled \( \{n-3, n-1\} \) and \( \{n-2, n\} \). Let \( \mathcal{T}_{n-2} \) be the phylogenetic tree obtained by removing the leaves labeled \( n-1 \) and \( n \) from $\mathcal{T}$ and contracting degree-2 vertices. By induction, there exists a set \( \mathcal{C}_{n-2} \) of 3-state characters such that \( \mathcal{C}_{n-2} \) defines \( \mathcal{T}_{n-2} \), where \( |\mathcal{C}_{n-2}| = \lceil \frac{n-5}{2} \rceil \). Here, \( \mathcal{T}_{n-2} \) is said to be a restriction of \( \mathcal{T} \), and the set of internal edges \( F \) of \(\mathcal{T}\) is called a \( \mathcal{T}_{n-2} \)-representable subset. By Lemma \ref{basic_lem_3}, the set of characters \( (\mathcal{C}_{n-2})_F \) on \( \{1, 2, \ldots, n\} \) can infer the \( X \)-splits induced by the edges in \( F \). Let \( \chi \) be a character on \( \{1, 2, \ldots, n\} \) with \( \pi(\chi) = \{\{n-3, n-1\}, \{n-2, n\}, \{1, 2, \ldots, n-4\}\} \). Let \( \mathcal{C} = (\mathcal{C}_{n-2})_F \cup \{\chi\} \); then \( \mathcal{C} \) is a set of 3-state characters on \( \{1, 2, \ldots, n\} \) with \( |\mathcal{C}| = \lceil \frac{n-3}{2} \rceil \). Since the cherries \( \{n-3, n-1\} \) and \( \{n-2, n\} \) are separated by at least three internal edges, by Lemma~\ref{basic_lem_2}, \( \mathcal{C} \) defines \( \mathcal{T} \).

\qed

\section{ Phylogenetic Tree Definition by 5-State Characteristic Sets}\label{sec:5-state}
This chapter deeply explores the minimum number of leaves threshold $n_5$ for phylogenetic trees defined by 5-state character sets, and finally establishes a rigorous proof of Theorem~\ref{main_thm} ($n_5 = 16$).

Section \ref{sec:proof-idea} briefly describes the proof idea. Sections \ref{sec:15-leaf} and \ref{sec:16-19-leaf} elaborate on the phylogenetic trees with 16 to 19 leaves in the basic cases. Section \ref{sec:Y-like} investigates a special type of phylogenetic tree (Y-like trees) to prepare for the induction step. Finally, Section \ref{sec:main-proof} presents the complete proof process.

\subsection{Proof Idea of Theorem~\ref{main_thm}}
\label{sec:proof-idea}

\begin{Thm}\label{main_thm}
Let $\mathcal{T}$ be a phylogenetic $X$-tree, where $n = |X|$.
If $n \geq 16$, then there is a collection $\mathcal{C}$ of 5-state characters of size 
\begin{center}
  $|\mathcal{C}| = \lceil \frac{n - 3}{4} \rceil$  
\end{center}
 that defines $\mathcal{T}$.
Moreover, if $n = 15$, there is a phylogenetic $X$-tree that is not  defined by any collection of $3 = \lceil \frac{15 - 3}{4} \rceil$ characters.
\end{Thm}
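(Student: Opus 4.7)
The plan is to mirror the induction of Theorem \ref{main_thm_r=3}, but with each inductive step stripping four leaves rather than two, so that a single new 5-state character can absorb four cherries simultaneously via the partition $\{\text{cherry}_1,\text{cherry}_2,\text{cherry}_3,\text{cherry}_4,\text{rest}\}$. For the counterexample at $n=15$, I would invoke Theorem \ref{thm_snowflake}: any phylogenetic tree containing an internal subtree isomorphic to the snowflake cannot be defined by three characters of any arity, hence not by three 5-state characters. A concrete witness is easy to exhibit because the six ``leaves'' of the snowflake must become degree-three interior vertices of $\mathcal{T}$, each carrying two further pendant attachments, so twelve leaves already suffice for a snowflake skeleton, and three additional leaves can be attached by refining selected pendant edges to yield a 15-leaf tree.

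For the main claim ($n \geq 16$), I would argue by strong induction. Given $\mathcal{T}$ with $n\geq 20$ and four cherries $\{a_i,b_i\}$ that are pairwise separated by enough internal edges for Lemma \ref{basic_lem_2} to apply, the step is: delete the $b_i$, suppress degree-two vertices to obtain $\mathcal{T}_{n-4}$, invoke induction on $\mathcal{T}_{n-4}$ to obtain a defining set $\mathcal{C}_{n-4}$ of size $\lceil(n-7)/4\rceil$, lift it to a collection $(\mathcal{C}_{n-4})_F$ on $X$ via Lemma \ref{basic_lem_3} (where $F$ is the appropriate $\mathcal{T}_{n-4}$-representable set of internal edges of $\mathcal{T}$), and adjoin the character $\chi$ defined by $\pi(\chi)=\{\{a_1,b_1\},\ldots,\{a_4,b_4\},X\setminus\bigcup_{i=1}^{4}\{a_i,b_i\}\}$. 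Lemma \ref{basic_lem_2} then infers the four remaining cherry-edge splits, so $(\mathcal{C}_{n-4})_F\cup\{\chi\}$ defines $\mathcal{T}$ and has the desired size $\lceil(n-3)/4\rceil$.

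The principal obstacle is that four well-separated cherries need not exist: caterpillars have only two cherries, \emph{Y-like} trees (three caterpillar arms meeting at a common interior vertex) have only three, and even when four cherries are present they may be bunched too closely for Lemma \ref{basic_lem_2} to apply. This is why Section \ref{sec:Y-like} treats Y-like trees as an independent family, producing a defining 5-state character set directly from a partition of their internal edges into four-edge blocks (one block per character) arranged so that Theorem \ref{basic_thm_judgment} applies; this is the analogue of the caterpillar casework in the $r=3$ base cases, and I expect it to be the most delicate piece of the argument. The four base values $n\in\{16,17,18,19\}$ are then handled by direct edge partitioning with casework on whether $\mathcal{T}$ is a caterpillar, Y-like, or already admits four well-separated cherries, at which point the induction closes and $n_5=16$ is established.
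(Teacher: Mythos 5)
Your proposal matches the paper's overall architecture: the $n=15$ counterexample via a snowflake internal subtree and Theorem~\ref{thm_snowflake}, the inductive step that deletes one leaf from each of four pairwise well-separated cherries and recombines via Lemmas~\ref{basic_lem_2} and~\ref{basic_lem_3} together with the single character $\chi$ partitioning the four cherries against the rest, and the isolation of Y-like trees as exactly the family in which four such cherries fail to exist. The one place you genuinely diverge is the base cases $n\in\{16,17,18,19\}$, which you dispatch as ``direct edge partitioning with casework'' but which constitute the bulk of the paper's effort: the paper does not partition edges of the $16$--$19$-leaf trees directly. Instead it first classifies $15$-leaf trees having no snowflake internal subtree and fewer than five $3$-star internal subtrees (Lemmas~\ref{pre_n=15_6}--\ref{pre_n=15_9}, organized by the length of the longest internal path), then shows every $19$-leaf tree except two explicit exceptions restricts to such a $15$-leaf tree after deleting four suitable pendant edges (Corollary~\ref{pre_n=19_all}), and finally handles $n=16,17,18$ by \emph{extending} the tree up to $19$ leaves and projecting the resulting character set back down using Lemma~\ref{basic_distinguish}. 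Your route would in principle also work, but as stated it is underspecified where the paper is most laborious; the paper's reduction-to-$15$ and extension-to-$19$ devices are precisely what make the casework finite and checkable. A second, minor discrepancy: for Y-like trees the paper does not verify Theorem~\ref{basic_thm_judgment} on a chordal completion but instead iteratively infers all nontrivial $X$-splits via Lemmas~\ref{basic_lem_1} and~\ref{basic_lem_2} and concludes by Theorem~\ref{basic_thm_judgment_2}.
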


Part of the proof of Theorem~\ref{main_thm} can be illustrated by constructing specific counterexamples. When the number of leaves $n = 15$, there exists a phylogenetic tree that cannot be defined by three 5-state characters. As shown in Fig.~\ref{fig:internal_path_5}, this phylogenetic tree has a special topological structure: its dashed internal subtree is isomorphic to a snowflake graph. Based on this structural feature, according to the necessary conditions of Theorem~\ref{thm_snowflake} on character sets defining tree structures, it can be inferred that this tree structure cannot be defined by a set of three characters.

\begin{figure}[H]
  \centering
  \includegraphics[width=0.45\columnwidth]{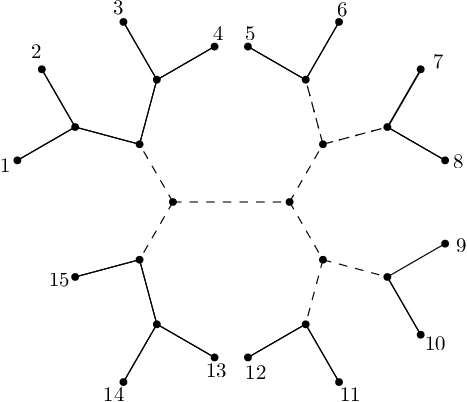}\\
  \caption{Phylogenetic tree $\mathcal{T}$ on $\{1,2,\ldots,15\}$, which cannot be defined by three 5-state characters.}
  \label{fig:internal_path_5}
\end{figure}

The complete proof of Theorem~\ref{main_thm} uses mathematical induction, with its core difficulty lying in the construction of base cases.
Specifically, when the number of taxa $n = 16, 17, 18$ (Corollary~\ref{pre_n=16_17_18}) and $n = 19$ (Corollary~\ref{pre_n=19_all}), the validity of the proposition needs to be verified one by one. Given the high complexity of the topological structures of phylogenetic trees in these cases, this paper adopts a step-by-step strategy: first, systematically study specific phylogenetic trees with 15 leaves—these trees need to satisfy two structural constraints simultaneously, and then generalize the conclusions to larger base cases through Lemma~\ref{basic_lem_2} and Lemma~\ref{basic_lem_3}. This method significantly reduces the proof complexity.
In the analysis of specific phylogenetic trees with 15 leaves, we systematically classify the tree structures into four mutually exclusive cases according to a key parameter: the number of interior edges in the longest path. By separately examining the character compatibility of each type of structure, we finally establish the mathematical foundation for inductive recursion.

After establishing the basic cases, we can use Lemma~\ref{basic_lem_2} and Lemma~\ref{basic_lem_3} to realize the inductive proof by recursively deleting four leaves. However, since not all phylogenetic trees have inductive leaf deletion structures (particularly, Y-like trees hinder this operation), it is necessary to handle such special structures separately. After excluding these special cases, for all remaining phylogenetic trees, we can ensure the existence of four deletable leaves, thereby constructing a concise recursive induction path. This approach not only preserves the generality of the inductive proof but also ensures the completeness of the argument through preprocessing of special structures.

\subsection{Phylogenetic Trees with partially 15 Leaves}
\label{sec:15-leaf}

This section analyzes a special class of phylogenetic trees $\mathcal{T}$ with 15 leaves, which must satisfy the following two constraints: (1) no internal subtree isomorphic to the snowflake structure; (2) no 5 internal subtrees isomorphic to the 3-star.

We first classify the trees based on the number of interior edges in the longest path, and then through constructive proofs, we find that there exist three sets $\mathcal{C}$ of 5-state characters that can uniquely determine the topological structure of the tree $\mathcal{T}$. This conclusion lays an important foundation for subsequent research, and the next section will expand on this to characterize phylogenetic trees with 16 to 19 leaves.

\begin{Lem}\label{pre_n=15_0}
Let $\mathcal{T}$ be a phylogenetic $X$-tree with $X = \{1, 2, \ldots, 15\}$. If $\mathcal{T}$ satisfies conditions:
\begin{enumerate}
  \item[(i)] $\mathcal{T}$ has no internal subtree isomorphic to the snowflake,
  \item[(ii)] $\mathcal{T}$ has no 5 internal subtrees isomorphic to the 3-star,
\end{enumerate}
then the number of interior edges in the longest path of $\mathcal{T}$ is at least 6.
\end{Lem}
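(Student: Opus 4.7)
The plan is to argue by contradiction: suppose $\mathcal{T}$ satisfies (i) and (ii) but its longest interior path contains at most $5$ edges. My first step would be to pass to the \emph{interior tree} $T'$ of $\mathcal{T}$, namely the subtree on the interior vertices of $\mathcal{T}$ using only the interior edges. Since $\mathcal{T}$ is a binary phylogenetic tree with $15$ leaves, $T'$ has $|V(T')|=13$ vertices, $|E(T')|=12$ edges, maximum degree $3$, and the assumed bound on the longest path translates into $\mathrm{diam}(T')\leq 5$.

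Next I would classify the vertices of $T'$ by degree, letting $a,b,c$ denote the numbers of degree $1$, $2$, and $3$ vertices respectively. The two identities $a+b+c=13$ and $a+2b+3c=24$ give $a=c+2$ and $b=11-2c$. The key link to the hypothesis is that each internal subtree of $\mathcal{T}$ isomorphic to the $3$-star is centred at a degree-$3$ vertex of $T'$, and each branching vertex of $T'$ yields exactly one such subtree; hence the number of $3$-star internal subtrees equals $c$, and condition~(ii) forces $c\leq 4$.

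The bulk of the proof is a case analysis over $c\in\{0,1,2,3,4\}$, where I would show that none is compatible with $\mathrm{diam}(T')\leq 5$ together with $|E(T')|=12$. Cases $c=0$ (a path with $12$ edges, diameter $12$) and $c=1$ (a subdivided $3$-star whose three arm lengths sum to $12$, diameter at least $\lceil 2\cdot 12/3\rceil=8$) are immediate. For $c=2,3,4$ I would parameterise $T'$ by the lengths $p_i$ of the paths between the branching vertices along the Steiner skeleton and the lengths $\ell_j$ of the arms ending at the degree-$1$ vertices, and then combine (a) the edge count $\sum p_i+\sum \ell_j=12$, (b) the within-branching bound $\ell_i+\ell_j\leq 5$ for arms attached to the same branching vertex, and (c) the crossing bound $\ell_i+(\text{length of path between the two branchings})+\ell_j\leq 5$ for arms at different branching vertices, to obtain a numerical contradiction.

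The main obstacle will be the case $c=4$, where the Steiner skeleton may be either a $4$-vertex path or a $4$-vertex star, and the inequalities are tight. In both sub-cases the longest arm-to-arm path across the whole skeleton has four $\geq 1$ terms summing to $\leq 5$, leaving only one unit of slack; one then argues that any single increase in a path length $p_i$ forces all the other parameters in a crossing inequality down to $1$, which in turn makes the total sum of lengths fall strictly below $12$. A similar tight-inequality argument handles $c=3$. Condition~(i) is not strictly invoked for the present lemma but is retained in the hypothesis for compatibility with the subsequent results of Section~\ref{sec:16-19-leaf}.
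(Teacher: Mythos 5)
Your proposal is correct, but it takes a genuinely different route from the paper's. The paper argues by enumeration: it asserts (without detailed justification) that no 15-leaf binary tree has a longest path with at most 4 interior edges, that the tree of Fig.~\ref{fig:internal_path_5} is, up to isomorphism, the unique one whose longest path has exactly 5 interior edges, and then eliminates that single tree because it contains a snowflake, i.e.\ via condition~(i). You instead pass to the interior tree $T'$, note that the 3-star internal subtrees of $\mathcal{T}$ are in bijection with the degree-3 vertices of $T'$ so that condition~(ii) gives $c\le 4$, and derive a contradiction from $|E(T')|=12$ together with $\operatorname{diam}(T')\le 5$ by a degree-count and path-length analysis over $c\in\{0,1,2,3,4\}$. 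I checked that your sketched inequalities do close in the nontrivial cases: for $c=2$, summing the four crossing bounds gives $\sum\ell_i\le 10-2p$ and hence at most $10-p\le 9<12$ interior edges; for $c=3$ and for both skeleton topologies at $c=4$ the edge total is capped at $11$. Your approach buys two things the paper's does not: it replaces the ``easy to see'' enumeration with explicit, verifiable inequalities, and it shows that condition~(ii) alone already forces the longest path to contain at least 6 interior edges, so condition~(i) is not actually needed for this lemma (the paper's diameter-5 counterexample necessarily also has $c\ge 5$). The paper's argument is shorter on the page but leans on an unproved uniqueness claim for the diameter-5 case, which your parameterisation sidesteps entirely.
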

\begin{proof}
By attempting to construct phylogenetic trees, it is easy to see that there is no phylogenetic tree with 15 leaves where the longest path has 4 interior edges. Additionally, it can be easily observed that, up to isomorphism, the phylogenetic $X$-tree shown in Fig.~\ref{fig:internal_path_5} is the only one with 15 leaves where the longest path has 5 interior edges. Let $\mathcal{T}'$ denote the phylogenetic $X$-tree shown in Fig.~\ref{fig:internal_path_5}. Observing $\mathcal{T}'$, we find that it has an internal subtree isomorphic to the snowflake, as indicated by the dashed lines in Fig.~\ref{fig:internal_path_5}. Therefore, the number of interior edges in the longest path of a valid phylogenetic $X$-tree satisfying the conditions is at least 6.
\end{proof}

For a specific phylogenetic tree, to prove the existence of a set $\mathcal{C}$ of three 5-state characters that defines $\mathcal{T}$, we use a method of assigning interior edges to three sets, forming a character set $\mathcal{C}$ with three characters displayed by these sets. We then draw the corresponding $\text{Int}(\mathcal{C})$ and prove the conclusion using Theorem~\ref{basic_thm_judgment}. Since the labeling order of leaves does not affect the assignment of interior edges, we only need to consider the structure of the phylogenetic tree, which will not be repeated in the following lemma proofs.

Let $F$ be a subset of edges of a phylogenetic $X$-tree $\mathcal{T}$, and let $C_1, C_2, \ldots, C_k$ denote the connected components of $\mathcal{T} \setminus F$. For all $i$, let $E_i$ denote the set of interior edges of $\mathcal{T}$ in $C_i$. Note that for some $i$, the set $E_i$ may be empty. We say that $F$ \emph{separates} the interior edges of $\mathcal{T}$ not in $F$ into the sets $E_1, E_2, \ldots, E_k$.

\begin{Lem}\label{pre_n=15_6}
Let $\mathcal{T}$ be a phylogenetic $X$-tree with $|X| = 15$. If $\mathcal{T}$ satisfies conditions~(i),(ii) and the number of interior edges in its longest path is 6, then there exists a set $\mathcal{C}$ of three 5-state characters that defines $\mathcal{T}$.
\end{Lem}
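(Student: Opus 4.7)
The plan is to enumerate, up to isomorphism, the phylogenetic $X$-trees satisfying the hypothesis and, for each one, to exhibit a partition of the $12$ interior edges into three sets $F_1, F_2, F_3$ with $|F_i| \le 4$, each displaying a $5$-state character $\chi_i$; correctness then reduces to checking, via Theorem~\ref{basic_thm_judgment}, that every interior edge of $\mathcal{T}$ is distinguished by some $\chi_i$ and that $\mathrm{Int}(\{\chi_1,\chi_2,\chi_3\})$ admits a unique minimal restricted chordal completion.

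First I would pin down the finite list of admissible topologies. A binary phylogenetic tree on $15$ leaves has exactly $12$ interior edges; with $6$ of them on the main path, its seven interior vertices $v_0, v_1, \ldots, v_6$ each carry a single off-path edge, and the remaining $6$ interior edges lie in subtrees hanging off the $v_i$. The longest-path condition forces the subtrees at $v_0$ and $v_6$ to be single leaves and bounds the depth of the subtree at $v_i$ by $\min(i, 6-i)$. Writing $m_i$ for the number of leaves in the subtree at $v_i$, we have $m_0 = m_6 = 1$ and $m_1 + \cdots + m_5 = 11$; condition (i) (no snowflake) and condition (ii) (at most four internal $3$-stars) then pare the feasible tuples $(m_1, \ldots, m_5)$, together with the choice of shape (balanced vs.\ caterpillar) for subtrees with $m_i \ge 4$, down to a manageable list of cases.

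For each case I would construct $\mathcal{C}$ by the following heuristic: distribute the six main-path interior edges alternately between $F_1$ and $F_2$, so that no two consecutive path-edges share a set, and then place the six off-path interior edges so that every interior edge of $\mathcal{T}$ is distinguished and the sizes are balanced to $|F_1| = |F_2| = |F_3| = 4$. Once the partition is fixed, $\mathcal{C} = \{\chi_1, \chi_2, \chi_3\}$ is determined, and I would draw $\mathrm{Int}(\mathcal{C})$ explicitly and verify uniqueness of its minimal restricted chordal completion. As in the $r = 3$ treatment of Section~\ref{sec:3-state}, the verification amounts to a finite graph check in each topology and invokes Theorem~\ref{basic_thm_judgment} to conclude definability.

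The main obstacle is the handful of configurations in which deep subtrees sit close to one another near the middle of the path (for instance $m_2 = m_3 = m_4 = 3$, or a depth-two subtree at $v_3$ flanked by nontrivial subtrees at $v_2, v_4$), since conditions (i) and (ii) then leave little room and the alternating heuristic can be forced to place two mutually ``interacting'' edges in the same $F_i$, creating a choice in the chordal completion and destroying definability. For those trees I would deviate from the alternating pattern and hand-craft an assignment that routes the critical pair into different $F_i$'s while preserving the size bound $|F_i| \le 4$; I expect the no-snowflake and $3$-star restrictions to guarantee that such a routing always exists, with the construction and the Int-graph check presented case by case.
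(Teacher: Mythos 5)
Your proposal follows essentially the same route as the paper: classify the admissible $15$-leaf topologies by how the six off-path interior edges distribute over the subtrees hanging from the longest path, assign the six path edges alternately to $F_1$ and $F_2$ and the off-path edges so that $|F_1|=|F_2|=|F_3|=4$, and verify each case by checking that $\mathrm{Int}(\{\chi_1,\chi_2,\chi_3\})$ has a unique minimal restricted chordal completion via Theorem~\ref{basic_thm_judgment}. The paper likewise deviates from the pure alternating pattern in the middle-heavy configurations (and additionally reduces several distributions such as $3+3$, $4+2$, $5+1$, $6$ to earlier cases by re-selecting the longest path), exactly as you anticipate in your final paragraph.
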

\begin{proof}
Let $\mathring{P}$ be the set of interior edges in the longest path $P$ of $\mathcal{T}$, so $|\mathring{P}| = 6$. $\mathring{P}$ separates the interior edges of $\mathcal{T}$ not in $\mathring{P}$ into $E_1, E_2, \ldots, E_7$, where $|E_1| = |E_7| = 0$. Since $\mathcal{T}$ has 12 interior edges, we only need to consider the distribution of the remaining 6 interior edges in $E_2, E_3, \ldots, E_6$. Due to the longest path having 6 interior edges, we observe that $|E_2|, |E_6| \leq 1$, $|E_3|, |E_5| \leq 3$, and $|E_4| \leq 7$. Let $F_1, F_2, F_3$ be a partition of the interior edges of $\mathcal{T}$ such that $|F_1| = |F_2| = |F_3| = 4$. For $i \in \{1, 2, 3\}$, let $\chi_i$ be displayed by $F_i$. Let $A$ represent the distribution of the remaining interior edges (excluding $\mathring{P}$), where $A = m_1 + m_2 + \ldots + m_k$ denotes distributing the remaining 6 interior edges into $k$ sets $E_2, E_3, \ldots, E_6$ with sizes $m_1, m_2, \ldots, m_k$.

(1) When $A = 2 + 1 + 1 + 1 + 1$, $\mathcal{T}$ has 5 internal subtrees isomorphic to the 3-star, which violates condition~(ii).

(2) When $A = 3 + 1 + 1 + 1$, up to isomorphism, there are two types of phylogenetic tree structures satisfying conditions (i) and (ii), as shown in Fig.~\ref{fig:P=6_3+1+1+1}.
\begin{figure}[h]
  \centering
  \subfigure[]
  {
   \begin{minipage}[b]{.45\linewidth}
     \centering
     \includegraphics[width=0.85\textwidth]{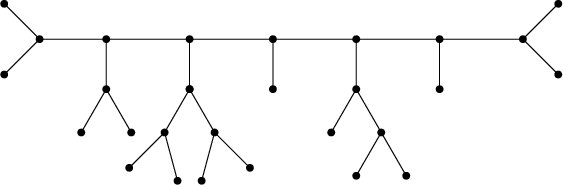}
   \end{minipage}
  }
  \subfigure[]
  {
   \begin{minipage}[b]{.45\linewidth}
     \centering
     \includegraphics[width=0.85\textwidth]{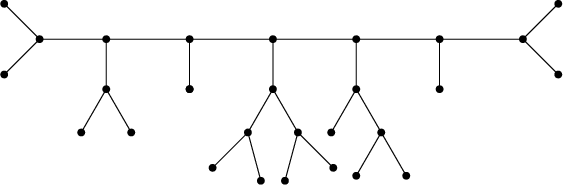}
   \end{minipage}
  }
  \caption{Phylogenetic tree structures for $A = 3 + 1 + 1 + 1$.}
  \label{fig:P=6_3+1+1+1}
\end{figure}
From left to right, the 1st, 3rd, and 5th interior edges on the longest path $P$ are assigned to $F_1$, and the 2nd, 4th, and 6th interior edges are assigned to $F_2$. For $i \in \{2, 3, 4, 5, 6\}$, the interior edges in $E_i$ connected to $P$ are assigned to $F_3$, and the remaining two interior edges are assigned to $F_1$ and $F_2$, respectively. $\chi_i$ is displayed by $F_i$. By drawing $\text{Int}(\{\chi_1, \chi_2, \chi_3\})$ accordingly, we conclude that $\{\chi_1, \chi_2, \chi_3\}$ defines $\mathcal{T}$ according to Theorem~\ref{basic_thm_judgment}.

(3) When $A = 2 + 2 + 1 + 1$, similar to case (2), interior edges are assigned to $F_1$, $F_2$, and $F_3$. By Theorem~\ref{basic_thm_judgment}, $\{\chi_1, \chi_2, \chi_3\}$ defines $\mathcal{T}$.

(4) When $A = 4 + 1 + 1$, $E_4 = 4$. Since $|\mathring{P}| = 6$, the 4 interior edges in $E_4$ are not on a single path. Up to isomorphism, there are three types of phylogenetic tree structures satisfying conditions (i) and (ii), as shown in Fig.~\ref{fig:P=6_4+1+1}.
\begin{figure}
  \centering
  \subfigure[]
  {
   \begin{minipage}[b]{.45\linewidth}
     \centering
     \includegraphics[width=0.90\textwidth]{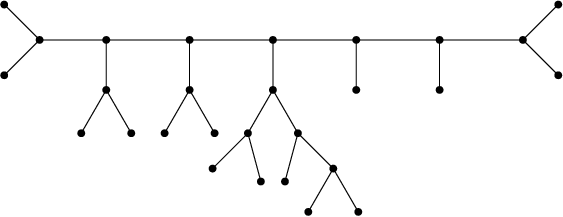}
   \end{minipage}
  }
  \subfigure[]
  {
   \begin{minipage}[b]{.45\linewidth}
     \centering
     \includegraphics[width=0.9\textwidth]{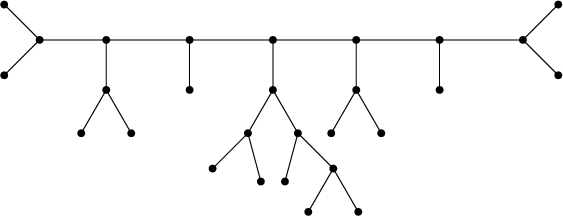}
   \end{minipage}
  }
  \subfigure[]
  {
   \begin{minipage}[b]{.45\linewidth}
     \centering
     \includegraphics[width=0.9\textwidth]{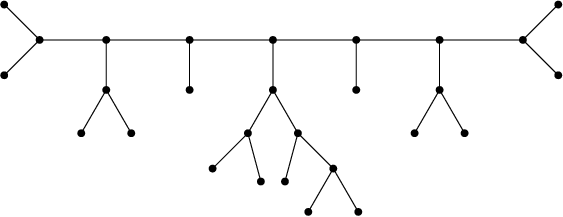}
   \end{minipage}
  }
  \caption{Phylogenetic tree structures for $A = 4 + 1 + 1$.}
  \label{fig:P=6_4+1+1}
\end{figure}
In case (a), a path can be found such that the longest path has 6 interior edges, with the remaining edges assigned as $A = 3 + 1 + 1 + 1$. For cases (b) and (c), from left to right, the 1st, 3rd, and 5th interior edges on $P$ are assigned to $F_1$, and the 2nd, 4th, and 6th interior edges are assigned to $F_2$. For $i \in \{2, 3, 4, 5, 6\}$, the interior edges in $E_i$ connected to $P$ and the interior edge farthest from $P$ in $E_4$ are assigned to $F_3$. The remaining two interior edges are assigned to $F_1$ and $F_2$, respectively. $\chi_i$ is displayed by $F_i$, and $\text{Int}(\{\chi_1, \chi_2, \chi_3\})$ confirms that $\{\chi_1, \chi_2, \chi_3\}$ defines $\mathcal{T}$ by Theorem~\ref{basic_thm_judgment}.

(5) When $A = 3 + 2 + 1$, up to isomorphism, besides the three phylogenetic trees in Fig.~\ref{fig:P=6_3+2+1}, the remaining cases can be handled by similar edge assignments or by selecting another longest path to reduce to previously discussed cases.
\begin{figure}
  \centering
  \subfigure[]
  {
   \begin{minipage}[b]{.45\linewidth}
     \centering
     \includegraphics[width=0.9\textwidth]{fig.P=6_3+2+1_1.eps}
   \end{minipage}
  }
  \subfigure[]
  {
   \begin{minipage}[b]{.45\linewidth}
     \centering
     \includegraphics[width=0.9\textwidth]{fig.P=6_3+2+1_2.eps}
   \end{minipage}
  }
  \subfigure[]
  {
   \begin{minipage}[b]{.5\linewidth}
     \centering
     \includegraphics[width=0.9\textwidth]{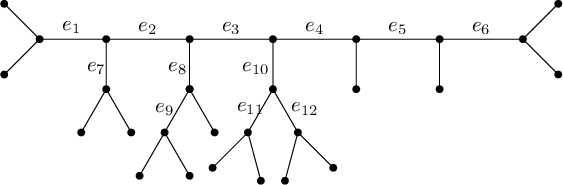}
   \end{minipage}
  }
  \caption{Phylogenetic tree structures for $A = 3 + 2 + 1$.}
  \label{fig:P=6_3+2+1}
\end{figure}
For cases (a) and (b) in Fig.~\ref{fig:P=6_3+2+1}, the 1st, 3rd, and 5th interior edges on $P$ are assigned to $F_1$, the 2nd and 4th interior edges to $F_2$, and the 6th interior edge to $F_3$. For $i \in \{2, 3, 4, 5, 6\}$, the interior edges in $E_i$ connected to $P$ and the interior edge farthest from the longest path in the size-2 $E_i$ are assigned to $F_2$. The remaining two interior edges are assigned to $F_1$ and $F_2$, respectively. $\chi_i$ is displayed by $F_i$, and $\text{Int}(\{\chi_1, \chi_2, \chi_3\})$ confirms the definition by Theorem~\ref{basic_thm_judgment}. For case (c), interior edges $e_1, e_3, e_5, e_{11}$ are assigned to $F_1$, $e_7, e_8, e_4, e_{12}$ to $F_2$, and $e_2, e_6, e_9, e_{10}$ to $F_3$. $\chi_i$ is displayed by $F_i$, and $\text{Int}(\{\chi_1, \chi_2, \chi_3\})$ confirms the definition.

(6) When $A = 2 + 2 + 2$, $E_3 = E_4 = E_5 = 2$. From left to right, the 1st, 3rd, and 5th interior edges on $P$ are assigned to $F_1$, the 2nd and 4th to $F_2$, and the 6th to $F_3$. The interior edges in $E_3, E_4, E_5$ connected to $P$ are assigned to $F_3$, with the remaining three interior edges assigned as one to $F_1$ and two to $F_2$. $\chi_i$ is displayed by $F_i$, and $\text{Int}(\{\chi_1, \chi_2, \chi_3\})$ confirms the definition.

(7) When $A = 3 + 3$, $A = 4 + 2$, $A = 5 + 1$, or $A = 6$, up to isomorphism, phylogenetic trees satisfying conditions (i) and (ii) can be reduced to previously discussed cases by selecting another longest path.

In summary, when $\mathcal{T}$ satisfies conditions (i), (ii) and the longest path has 6 interior edges, there exists a set $\mathcal{C}$ of three 5-state characters that defines $\mathcal{T}$.
\end{proof}

\begin{Lem}\label{pre_n=15_7}
Let $\mathcal{T}$ be a binary phylogenetic $X$-tree with $|X| = 15$. If $\mathcal{T}$ satisfies conditions (i), (ii) and the number of interior edges in its longest path is 7, then there exists a set $\mathcal{C}$ of three 5-state characters that defines $\mathcal{T}$.
\end{Lem}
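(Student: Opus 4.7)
The plan is to mirror the structure of Lemma~\ref{pre_n=15_6}, adjusting for the longer central path. Since $\mathcal{T}$ has $n-3 = 12$ interior edges and the longest path $P$ now carries $|\mathring{P}| = 7$ of them, removing $\mathring{P}$ from $\mathcal{T}$ creates eight components $E_1, E_2, \ldots, E_8$ with $|E_1| = |E_8| = 0$; the maximality of $P$ further forces $|E_2|, |E_7| \leq 1$ and $|E_3|, |E_6| \leq 3$, and there remain exactly $5$ interior edges to distribute among $E_2, \ldots, E_7$. I would first enumerate all partitions $A$ of $5$ compatible with these slot-bounds, namely $A \in \{1{+}1{+}1{+}1{+}1,\ 2{+}1{+}1{+}1,\ 2{+}2{+}1,\ 3{+}1{+}1,\ 3{+}2,\ 4{+}1,\ 5\}$, and discard those forbidden by (i) or (ii). For instance, $A = 1{+}1{+}1{+}1{+}1$ creates an off-path interior edge at each of $u_2, \ldots, u_6$ and therefore produces five internal $3$-stars, violating (ii); certain subconfigurations of $A = 2{+}1{+}1{+}1$ and $A = 3{+}1{+}1$ are eliminated for the same reason or because they embed an internal snowflake, violating (i).

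For each surviving case I would exhibit an explicit tripartition $F_1 \sqcup F_2 \sqcup F_3$ of the twelve interior edges with $|F_1| = |F_2| = |F_3| = 4$, let $\chi_i$ be the character displayed by $F_i$, and verify the two conditions of Theorem~\ref{basic_thm_judgment}. The default recipe on $P$ is the alternating scheme used in Lemma~\ref{pre_n=15_6}: assign the 1st, 3rd, 5th interior edges of $P$ to $F_1$, the 2nd, 4th, 6th to $F_2$, and the 7th to $F_3$, which keeps consecutive interior edges in different classes. Off-path edges attached directly to the interior vertices of $P$ are then placed in $F_3$ whenever possible, and the remaining deeper off-path edges (relevant in the size-$3, 4, 5$ slots) are split between $F_1$ and $F_2$ so as to preserve both $|F_i| = 4$ and the ``no adjacent edges in the same class'' property. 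Drawing $\text{Int}(\{\chi_1, \chi_2, \chi_3\})$ in each case should reveal a unique minimal restricted chordal completion, and Theorem~\ref{basic_thm_judgment} then yields the definition.

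The main obstacle is the case analysis. The dense cases $A = 5$, $A = 4{+}1$, $A = 3{+}2$ and $A = 3{+}1{+}1$ each branch into several non-isomorphic subtree shapes (a size-$4$ off-path subtree can be either a caterpillar or a $Y$-branch; a size-$5$ subtree has more variants still), and each shape requires a slightly different edge assignment to balance $|F_i|$ and maintain the adjacency constraint. To contain the combinatorial explosion I would, whenever possible, reroute the ``longest path'' through the dense subtree to obtain an alternative longest path $P'$ whose extra-edge distribution falls under an already-handled case, either Lemma~\ref{pre_n=15_6} or an earlier case of the present lemma, so that only a small fixed list of irreducible trees remains to be treated by a direct $\text{Int}$-computation. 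Verifying Theorem~\ref{basic_thm_judgment} for those remaining trees is routine but tedious, and that verification is what dominates the length of the argument.
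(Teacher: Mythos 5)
Your proposal follows essentially the same route as the paper's proof: the same decomposition of the $5$ off-path interior edges into the partitions $A \in \{1{+}1{+}1{+}1{+}1,\ 2{+}1{+}1{+}1,\ 2{+}2{+}1,\ 3{+}1{+}1,\ 3{+}2,\ 4{+}1,\ 5\}$, the same elimination of $1{+}1{+}1{+}1{+}1$ via condition~(ii), the same alternating assignment of the path edges to $F_1, F_2$ with attachment edges routed to $F_3$, and the same verification through $\text{Int}(\{\chi_1,\chi_2,\chi_3\})$ and Theorem~\ref{basic_thm_judgment}, including the trick of rerouting the longest path in the dense cases. The remaining work is exactly the case-by-case edge assignment the paper carries out, so the proposal is a faithful (if less detailed) version of the published argument.
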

\begin{proof}
Let $\mathring{P}$ be the set of interior edges in the longest path $P$ of $\mathcal{T}$, so $|\mathring{P}| = 7$. $\mathring{P}$ separates the interior edges of $\mathcal{T}$ not in $\mathring{P}$ into $E_1, E_2, \ldots, E_8$, where $|E_1| = |E_8| = 0$. Since $\mathcal{T}$ has 12 interior edges, we only need to consider the distribution of the remaining 5 interior edges in $E_2, E_3, \ldots, E_7$. Due to the longest path having 7 interior edges, we observe that $|E_2|, |E_7| \leq 1$, $|E_3|, |E_6| \leq 3$, and $|E_4|, |E_5| \leq 7$. Let $F_1, F_2, F_3$ be a partition of the interior edges of $\mathcal{T}$ such that $|F_1| = |F_2| = |F_3| = 4$. For $i \in \{1, 2, 3\}$, let $\chi_i$ be displayed by $F_i$. Let $A$ represent the distribution of the remaining interior edges (excluding $\mathring{P}$), where $A = m_1 + m_2 + \ldots + m_k$ denotes distributing the remaining 5 interior edges into $k$ sets $E_2, E_3, \ldots, E_7$ with sizes $m_1, m_2, \ldots, m_k$.

(1)	When $A = 1 + 1 + 1 + 1 + 1$, $\mathcal{T}$ has 5 internal subtrees isomorphic to the 3-star, violating condition~(ii).
(2) When $A = 2 + 1 + 1 + 1$, following the construction in Lemma \ref{pre_n=15_6}(2), we construct three sets $\mathcal{C}$ of 5-state characters for $\mathcal{T}$.  

For example, in Fig.~\ref{fig:P=7_2+1+1+1}, from left to right, the 1st, 3rd, 5th, and 7th interior edges on the longest path $P$ are assigned to $F_1$, and the 2nd, 4th, and 6th interior edges are assigned to $F_2$. For $i \in \{2,3,4,5,6,7\}$, the interior edges in $E_i$ connected to $P$ are assigned to $F_3$, and the remaining interior edge is assigned to $F_2$. The character $\chi_i$ is represented by $F_i$. By drawing $\text{Int}(\{\chi_1, \chi_2, \chi_3\})$ accordingly, $\{\chi_1, \chi_2, \chi_3\}$ defines $\mathcal{T}$ by Theorem \ref{basic_thm_judgment}. Other cases are constructed similarly.  

\begin{figure}[H]
\centering
\includegraphics[width=0.5\linewidth]{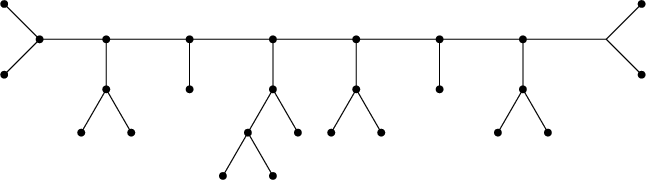}
\caption{Phylogenetic tree structure for $A = 2 + 1 + 1 + 1$.}
\label{fig:P=7_2+1+1+1}
\end{figure}

(3) When $A = 3 + 1 + 1$, assume $|E_i| = 3$ for $i \in \{3,4,5,6\}$. By symmetry, we only consider $i \in \{3,4\}\}$. Two cases arise based on the connectivity of edges in $E_i$:  

\textbf{Case 1:} The three interior edges in $E_i$ form a path. Since $P$ is the longest path, we have $i = 4$. As shown in Fig.~\ref{fig:P=7_3+1+1_1}, the 1st, 3rd, 5th, and 7th interior edges on $P$ are assigned to $F_1$, and the 2nd, 4th, and 6th interior edges are assigned to $F_2$. For $j \in \{2,3,4,5,6,7\}$, the interior edges in $E_j$ connected to $P$ are assigned to $F_3$, and the interior edge in $E_i$ farthest from $P$ is also assigned to $F_3$, with the remaining edge assigned to $F_2$. The characters $\chi_i$ defined by $F_i$ form $\text{Int}(\{\chi_1, \chi_2, \chi_3\})$, which defines $\mathcal{T}$ by Theorem \ref{basic_thm_judgment}.  

\begin{figure}[H]
\centering
\includegraphics[width=0.5\linewidth]{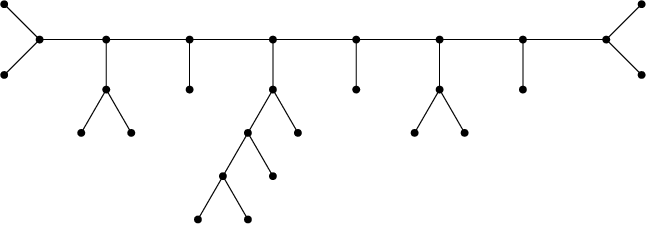}
\caption{Phylogenetic tree structure when three interior edges in $E_i$ form a path.}
\label{fig:P=7_3+1+1_1}
\end{figure}

\textbf{Case 2:} The three interior edges in $E_i$ share a common end (i.e., do not form a path). Let $|E_i| = 3$, $|E_j| = |E_k| = 1$, and $j < k$. The phylogenetic tree satisfying conditions (i) and (ii) has no consecutive indices $j, i, k$.  

\textbf{Subcase a:} If $j = 2$, $k = 7$, there are two isomorphic phylogenetic trees as shown in Fig.~\ref{fig:P=7_3+1+1_a}. The interior edges in $E_i, E_j, E_k$ connected to $P$ and the 5th interior edge of $P$ are assigned to $F_3$. The remaining edges are distributed to $F_1$ and $F_2$ such that adjacent edges belong to different sets. The characters $\chi_i$ define $\mathcal{T}$ as before.  

\begin{figure}[H]
\centering
\subfigure[]
{
\begin{minipage}[b]{0.45\linewidth}
\centering
\includegraphics[width=0.9\textwidth]{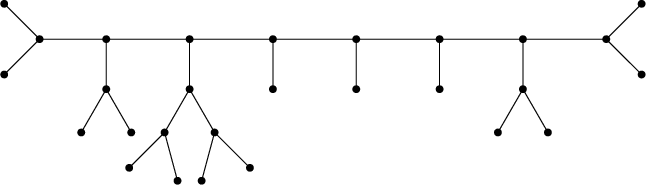}
\end{minipage}
}
\subfigure[]
{
\begin{minipage}[b]{0.45\linewidth}
\centering
\includegraphics[width=0.9\textwidth]{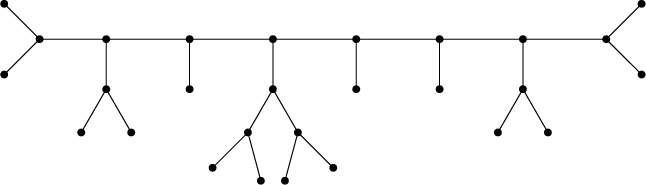}
\end{minipage}
}
\caption{Phylogenetic tree structures for $|E_2| = |E_7| = 1$.}
\label{fig:P=7_3+1+1_a}
\end{figure}

\textbf{Subcase b:} If $i, j, k$ or $j, k, i$ are consecutive indices, there are three isomorphic phylogenetic trees as shown in Fig.~\ref{fig:P=7_3+1+1_b}. For figures (a) and (c), the 1st and 7th interior edges of $P$, the interior edges in $E_i$ connected to $P$, and the interior edges adjacent to $E_j, E_k$ on $P$ are assigned to $F_3$. For figure (b), the 5th and 7th interior edges of $P$ are assigned to $F_3$ similarly. The remaining four edges of $P$ are split evenly between $F_1$ and $F_2$, with $E_i$'s remaining edges and $E_j, E_k$'s edges assigned to $F_1, F_2$ respectively.  

\begin{figure}[H]
\centering
\subfigure[]
{
\begin{minipage}[b]{0.45\linewidth}
\centering
\includegraphics[width=0.9\textwidth]{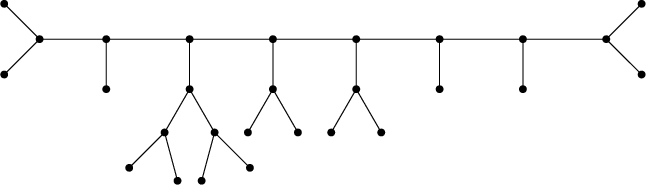}
\end{minipage}
}
\subfigure[]
{
\begin{minipage}[b]{0.45\linewidth}
\centering
\includegraphics[width=0.9\textwidth]{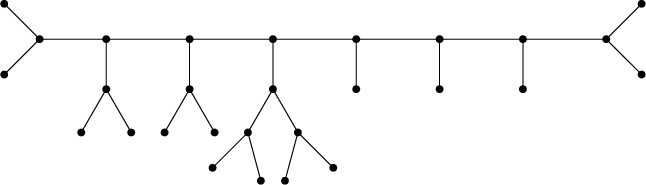}
\end{minipage}
}
\subfigure[]
{
\begin{minipage}[b]{0.5\linewidth}
\centering
\includegraphics[width=0.9\textwidth]{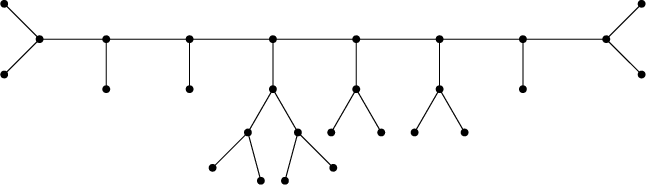}
\end{minipage}
}
\caption{Phylogenetic tree structures when $i, j, k$ are consecutive.}
\label{fig:P=7_3+1+1_b}
\end{figure}

\textbf{Subcase c:} Excluding cases a and b, one of the 1st or 7th edges of $P$ (not both) is assigned to $F_3$ to ensure non-adjacent edges in $F_3$. The remaining edges of $P$ and $E_i$ are split evenly between $F_1$ and $F_2$.  

(4) When $A = 2 + 2 + 1$, if $|E_i| = 2$, then $i \neq 2, 7$, and at most one of $|E_2|, |E_7|$ is non-zero. The interior edges in $E_i$ connected to $P$ and one of the 1st or 7th edges of $P$ (non-adjacent) are assigned to $F_3$. The remaining six edges of $P$ and two interior edges are split evenly between $F_1$ and $F_2$.  

(5) When $A = 4 + 1$, construct $\mathcal{C}$ analogously to (4).  

(6) When $A = 3 + 2$, since $|\mathring{P}| = 7$, $|E_2| = |E_7| = 0$. Let $|E_i| = 3$, $|E_j| = 2$. The interior edges in $E_i, E_j$ connected to $P$, and the 1st and 7th edges of $P$ are assigned to $F_3$. The remaining edges of $E_i, E_j$ and $P$ are distributed to $F_1, F_2$ with non-adjacent constraints.  

(7) When $A = 5$, there is a unique isomorphic phylogenetic tree. Let $|E_4| = 5$. The interior edge of $E_4$ connected to $P$, two distant interior edges of $E_4$, and the 7th edge of $P$ are assigned to $F_3$. The remaining edges of $P$ and $E_4$ are split evenly between $F_1$ and $F_2$.  
\end{proof}

\begin{Lem}\label{pre_n=15_8}
Let $\mathcal{T}$ be a phylogenetic $X$-tree with $|X| = 15$. If $\mathcal{T}$ satisfies conditions (i), (ii) and the longest path has 8 interior edges, then there exists a set $\mathcal{C}$ of three 5-state characters that defines $\mathcal{T}$.  
\end{Lem}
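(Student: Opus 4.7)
The plan is to follow the template used in Lemmas~\ref{pre_n=15_6} and~\ref{pre_n=15_7}. Let $P$ be a longest path of $\mathcal{T}$ with $|\mathring{P}| = 8$, and let its interior vertices be $u_1, \ldots, u_9$; the longest-path hypothesis forces cherries at $u_1$ and $u_9$. Removing $\mathring{P}$ partitions the remaining interior edges into components $E_1, \ldots, E_9$ with $|E_1| = |E_9| = 0$, and the longest-path bound further gives $|E_2|, |E_8| \leq 1$ and $|E_3|, |E_7| \leq 3$. Since $\mathcal{T}$ has $n - 3 = 12$ interior edges in total, exactly four sit off $\mathring{P}$, so their distribution $A$ over $E_2, \ldots, E_8$ lands in one of the compositions $A = 1{+}1{+}1{+}1$, $A = 2{+}1{+}1$, $A = 2{+}2$, $A = 3{+}1$, or $A = 4$, and these are the cases I would treat in turn.

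For each composition, and each non-isomorphic placement consistent with the length bound and with conditions (i)--(ii), I would partition the twelve interior edges into three blocks $F_1, F_2, F_3$ of cardinality four and set $\chi_i$ to be the $5$-state character displayed by $F_i$. The partitioning template, inherited from the two preceding lemmas, sends odd-indexed edges of $\mathring{P}$ to $F_1$ and even-indexed ones to $F_2$ (diverting one or two of them to $F_3$ when needed for cardinality balance), places every off-path interior edge incident to $P$ into $F_3$, and distributes any remaining off-path interior edges between $F_1$ and $F_2$ so that no two edges within the same $F_i$ are adjacent in $\mathcal{T}$. Correctness then follows from Theorem~\ref{basic_thm_judgment}: convexity of each $\chi_i$ holds by construction, every interior edge lies in some $F_i$ and is therefore distinguished, and the partition intersection graph $\text{Int}(\{\chi_1, \chi_2, \chi_3\})$ is read off the tree to confirm that its unique minimal restricted chordal completion recovers $\mathcal{T}$.

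The borderline sub-case is $A = 1{+}1{+}1{+}1$: each of the four single off-path edges is a cherry hanging off an interior vertex of $P$, so the four cherry sites create four pure interior vertices on $P$, exactly meeting (but not exceeding) the four-$3$-star bound of condition~(ii); here I would take $F_3$ to be these four cherry edges and split $\mathring{P}$ alternately between $F_1$ and $F_2$. The composition $A = 4$, in which a single off-path subtree absorbs all four remaining edges, must by the depth bound attach at $u_4$, $u_5$, or $u_6$; its caterpillar sub-case is absorbed by re-choosing $P$ to pass through this subtree, reducing to an already-treated composition within this lemma, while the branching sub-cases are handled by direct assignment as above. The intermediate compositions $A = 2{+}1{+}1$, $A = 2{+}2$, and $A = 3{+}1$ are handled analogously to the corresponding cases in Lemma~\ref{pre_n=15_7}.

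The main obstacle, as in the two preceding lemmas, is not the existence of a balanced partition---the alternating-plus-boundary template always supplies one---but the case-by-case verification that $\text{Int}(\{\chi_1, \chi_2, \chi_3\})$ admits a unique minimal restricted chordal completion. Ambiguity can appear when two off-path subtrees sit close enough on $P$ that they can be formally swapped without detection by any single $\chi_i$; in such sub-cases I would break the symmetry either by pushing one off-path edge out of $F_3$ into $F_1$ or $F_2$, or by re-selecting the longest path to reduce to an already-analyzed pattern, exactly as in cases~(5) and~(7) of Lemma~\ref{pre_n=15_7}.
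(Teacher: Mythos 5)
Your proposal follows the same route as the paper's proof: the same decomposition of the four off-path interior edges into the compositions $1{+}1{+}1{+}1$, $2{+}1{+}1$, $2{+}2$, $3{+}1$, $4$, the same alternating assignment of the eight path edges to $F_1$ and $F_2$ with the $P$-incident off-path edges sent to $F_3$, and the same appeal to Theorem~\ref{basic_thm_judgment} (with re-selection of the longest path to absorb degenerate sub-cases). The paper's own treatment is in fact terser than yours, deferring almost entirely to the constructions of Lemmas~\ref{pre_n=15_6} and~\ref{pre_n=15_7}, so your version is a faithful and slightly more explicit rendering of the same argument.
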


\begin{proof}
Follow the construction in Lemmas \ref{pre_n=15_6} and \ref{pre_n=15_7}.  

(1) When $A = 1 + 1 + 1 + 1$, the 1st, 3rd, 5th, 7th interior edges of $P$ are assigned to $F_1$, and the 2nd, 4th, 6th, 8th edges to $F_2$. For $i \in \{2,3,4,5,6,7,8\}$, interior edges in $E_i$ connected to $P$ are assigned to $F_3$.  

(2) When $A = 2 + 1 + 1$, if $|E_2| = |E_8| = 1$ and $|E_i| = 2$ ($i \in \{3,4,5\}$), assign the 1st, 3rd, 5th, 7th edges of $P$ to $F_1$, the 2nd, 4th, 8th edges to $F_2$, and distant edges of $E_i$ to $F_2$. Connecting edges of $E_2, E_8, E_i$ and the 6th edge of $P$ are assigned to $F_3$. For other cases, adjust edge assignments similarly.  

(3)-(5) Follow analogous constructions for $A = 3 + 1$, $A = 2 + 2$, and $A = 4$, ensuring non-adjacent edges in $F_3$ and balanced distribution in $F_1, F_2$.  

In all cases, $\chi_i$ defined by $F_i$ forms $\text{Int}(\{\chi_1, \chi_2, \chi_3\})$, which defines $\mathcal{T}$ by Theorem \ref{basic_thm_judgment}.  
\end{proof}

\begin{Lem}\label{pre_n=15_9}
Let $\mathcal{T}$ be a binary phylogenetic $X$-tree with $|X| = 15$. If $\mathcal{T}$ satisfies conditions (i), (ii) and the longest path has at least 9 interior edges, then there exists a set $\mathcal{C}$ of three 5-state characters that defines $\mathcal{T}$.  
\end{Lem}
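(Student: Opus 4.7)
The plan is to follow the same template as Lemmas~\ref{pre_n=15_6}, \ref{pre_n=15_7} and \ref{pre_n=15_8}. Fix a longest path $P$ of $\mathcal{T}$, let $\mathring{P}$ be its set of interior edges, and recall that $\mathcal{T}$ has exactly $12$ interior edges, so $|\mathring{P}|\in\{9,10,11,12\}$. For each value of $|\mathring{P}|$ I would classify $\mathcal{T}$ by the composition $A$ recording the sizes of the non-empty side components $E_i$ (of $\mathcal{T}\setminus \mathring{P}$). For each isomorphism type I would exhibit an explicit partition $F_1\sqcup F_2\sqcup F_3$ of the $12$ interior edges with $|F_1|=|F_2|=|F_3|=4$, let $\chi_i$ be a $5$-state character displayed by $F_i$, draw $\mathrm{Int}(\{\chi_1,\chi_2,\chi_3\})$, and invoke Theorem~\ref{basic_thm_judgment}.

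The assignment scheme I would specialize from the previous lemmas is the following: alternate the edges of $P$ between $F_1$ (odd-indexed along $P$) and $F_2$ (even-indexed); push every side interior edge that is incident to $P$ into $F_3$, so that the spine edge adjacent to it is distinguished; and then place any remaining side edges, and if necessary the first or last spine edge, so that no two edges inside any single $F_i$ are adjacent and the sizes balance to four. Because there are at most $12-|\mathring{P}|\le 3$ edges off the spine, the case analysis is dramatically shorter than in Lemmas~\ref{pre_n=15_7}--\ref{pre_n=15_8}.

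Concretely, for $|\mathring{P}|=12$ the tree is a caterpillar, hence has no internal subtree isomorphic to the $3$-star, so by Theorem~\ref{thm_3-star} it is already defined by two characters (and we may append any compatible third character to reach size three). For $|\mathring{P}|=11$ only one extra edge is present; condition~(ii) keeps its attachment point away from both ends of the spine, leaving only a few isomorphism classes to treat with the alternating scheme. For $|\mathring{P}|=10$ the two extra edges must be non-adjacent on the spine (else a snowflake or five $3$-stars would arise), and the few remaining configurations are handled by routing the two side edges into $F_3$ together with an endpoint spine edge. For $|\mathring{P}|=9$, condition~(i) forbids any interior vertex of $P$ from carrying more than one extra edge that itself branches, and condition~(ii) bounds the number of side attachments; the admissible trees are enumerated up to isomorphism and each is handled by the same scheme, with the three side edges incident to $P$ placed in $F_3$ and balance restored at the spine endpoints.

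The main obstacle, as in the previous lemmas, is the verification that $\mathrm{Int}(\{\chi_1,\chi_2,\chi_3\})$ admits a unique minimal restricted chordal completion for each case, which is inherently a diagram-by-diagram check. The redeeming feature here is that a long spine forces the intersection graph to be close to the "ladder" structure already analyzed for $p=7,8$: the $F_1/F_2$ alternation produces the rungs, the $F_3$-edges attached to $P$ produce short pendant triangles, and side edges further from $P$ only elongate these pendants without creating ambiguity. Thus once the correct endpoint adjustments are made to keep each $F_i$ an independent matching in the line graph of the interior tree, the chordal-completion uniqueness is inherited from the patterns verified in Lemmas~\ref{pre_n=15_7} and \ref{pre_n=15_8}, and Theorem~\ref{basic_thm_judgment} yields the conclusion.
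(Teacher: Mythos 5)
Your proposal follows essentially the same route as the paper, whose own proof of this lemma is only a one-line instruction to adapt the edge-assignment constructions of Lemmas~\ref{pre_n=15_6}--\ref{pre_n=15_8}; your spine-alternation scheme with side edges routed into $F_3$ is exactly that adaptation, spelled out in more detail than the paper gives.

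One concrete misstep: for $|\mathring{P}|=12$ you invoke Theorem~\ref{thm_3-star} to conclude the caterpillar is ``already defined by two characters.'' That theorem places no bound on the number of states, and in fact two characters defining a $15$-leaf caterpillar must jointly distinguish all $12$ interior edges, which by Lemma~\ref{basic_distinguish} forces each to display about six pairwise non-adjacent edges, i.e.\ to be a $7$-state character. So the shortcut does not yield the required three \emph{$5$-state} characters. The fix is immediate and consistent with your main scheme: partition the $12$ spine edges into three sets of four with no two adjacent edges in the same set (e.g.\ residues mod $3$ along the path) and apply Theorem~\ref{basic_thm_judgment} as in the other cases. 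With that repair, your argument carries the same burden as the paper's: the uniqueness of the minimal restricted chordal completion still has to be checked diagram by diagram for each configuration, which neither you nor the paper actually exhibits here.
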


\begin{proof}
Adapt the constructions from Lemmas~\ref{pre_n=15_6}, \ref{pre_n=15_7}, and \ref{pre_n=15_8} to define $F_1, F_2, F_3$. The characters $\chi_i$ from $F_i$ form $\text{Int}(\{\chi_1, \chi_2, \chi_3\})$, which defines $\mathcal{T}$ by Theorem \ref{basic_thm_judgment}.  
\end{proof}

\begin{Cor}\label{pre_n=15_all}
Let $\mathcal{T}$ be a phylogenetic $X$-tree with $|X| = 15$. If $\mathcal{T}$ satisfies conditions~(i) and~(ii) as defined above,
then there exists a set $\mathcal{C}$ of three 5-state characters that defines $\mathcal{T}$.  
\end{Cor}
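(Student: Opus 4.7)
The plan is to derive Corollary \ref{pre_n=15_all} as an immediate consequence of the four preceding lemmas via a case split on the number of interior edges in a longest path of $\mathcal{T}$. First, I would invoke Lemma \ref{pre_n=15_0}, which already establishes that under conditions (i) and (ii), any such 15-leaf tree has a longest path containing at least 6 interior edges. Since a binary phylogenetic $X$-tree on $n = 15$ leaves has exactly $n - 3 = 12$ interior edges in total, the number of interior edges on a longest path is an integer in the range $\{6, 7, 8, \ldots, 12\}$.

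Next, I would partition the argument according to the value of this parameter. If the longest path has exactly 6 interior edges, Lemma \ref{pre_n=15_6} directly produces a set $\mathcal{C}$ of three 5-state characters that defines $\mathcal{T}$. For longest-path lengths 7 and 8, the same conclusion is given by Lemmas \ref{pre_n=15_7} and \ref{pre_n=15_8}, respectively. For any longest-path length at least 9, Lemma \ref{pre_n=15_9} supplies the required character set. These four cases are mutually exclusive and exhaust all possibilities consistent with Lemma \ref{pre_n=15_0}, so their union yields the corollary.

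Because each lemma already furnishes a concrete character set whose defining property is certified via Theorem \ref{basic_thm_judgment}, the corollary itself is essentially a bookkeeping statement and carries no new combinatorial content. The only subtle point worth verifying in the write-up is that the hypotheses of each of Lemmas \ref{pre_n=15_6}–\ref{pre_n=15_9} match exactly the hypotheses of the corollary — namely, that each lemma requires precisely conditions (i) and (ii) on the same leaf set, with the longest-path length as the sole distinguishing parameter — so that no structural assumption is silently added. Granted this, the proof reduces to a one-line case analysis: the longest-path length falls in one of the four ranges $\{6\}$, $\{7\}$, $\{8\}$, $\{9, 10, 11, 12\}$, and the corresponding lemma is applied. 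The genuine difficulty of the whole section sits inside the lemmas themselves — in the explicit edge-partition constructions for each isomorphism class — and not at the level of the corollary.
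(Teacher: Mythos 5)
Your proposal is correct and is exactly the argument the paper intends: the corollary is stated without a separate proof precisely because it follows by combining Lemma~\ref{pre_n=15_0} (longest path has at least $6$ interior edges) with the exhaustive case split given by Lemmas~\ref{pre_n=15_6}--\ref{pre_n=15_9}. Your side remark about matching hypotheses is the only point of friction, and it is cosmetic: the paper states some of these lemmas for ``binary'' trees and some for plain ``phylogenetic'' trees, but all trees in the paper are declared to be unrooted binary, so the case analysis goes through as you describe.
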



\subsection{\texorpdfstring{Phylogenetic Trees with 16 $\sim$ 19 Leaves}{Phylogenetic Trees with 16-19 Leaves}}\label{sec:16-19-leaf}

This section systematically analyzes the base cases (topologies with 16 $\sim$ 19 leaves) in the inductive proof for phylogenetic trees. First, we take the phylogenetic tree with 19 leaves as the research object: except for two special topologies, any phylogenetic tree $\mathcal{T}$ with 19 leaves can find a phylogenetic tree with 15 leaves contained in it as its restriction. Based on the theoretical foundation in Section~\ref{sec:CDT-cy} and combined with the proof of the existence of the character set in Section~\ref{sec:15-leaf}, it can be deduced that there exists a set $\mathcal{C}$ of four 5-state characters such that $\mathcal{C}$ defines $\mathcal{T}$. For the base cases with 16 - 18 leaves, this section adopts a leaf extension strategy --- by gradually adding leaves to expand the original tree structure into a phylogenetic tree with 19 leaves, and then applying the aforementioned conclusions to complete the proof.

\begin{figure}[H]
\centering
\subfigure[$\mathcal{T}_{1}$]
{
\begin{minipage}[b]{0.45\linewidth}
\centering
\includegraphics[width=0.90\textwidth]{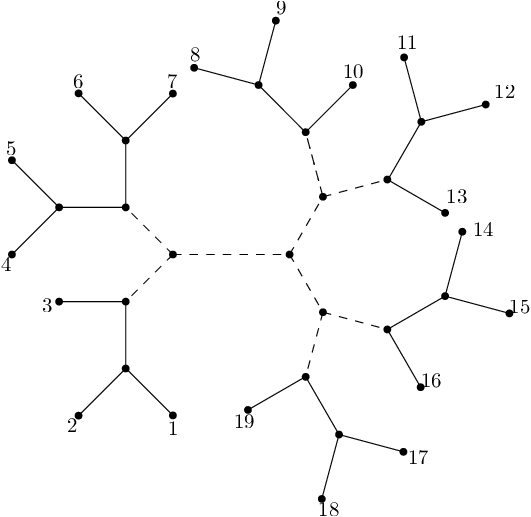}
\end{minipage}
}
\subfigure[$\mathcal{T}_{2}$]
{
\begin{minipage}[b]{0.45\linewidth}
\centering
\includegraphics[width=0.90\textwidth]{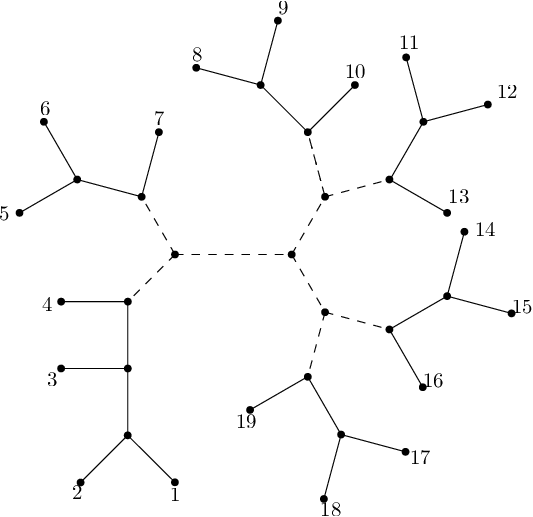}
\end{minipage}
}
\caption{Two phylogenetic trees on $\{1,2,\ldots,19\}$.}
\label{fig:n=19_special}
\end{figure}

\begin{Lem}\label{pre_n=19_0}
Let $\mathcal{T}_{1}$ and $\mathcal{T}_{2}$ be phylogenetic $X$-trees, where $X = \{1,2,\ldots,19\}$, and their structures are shown in Fig.\ref{fig:n=19_special}. For any $s \in \{1,2\}$, there exists a set $\mathcal{C}_{s}$ of four 5-state characters such that $\mathcal{C}_{s}$ defines $\mathcal{T}_{s}$.
\end{Lem}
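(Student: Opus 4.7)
The plan is to treat the two exceptional trees by direct construction rather than invoking any restriction argument, since Section~\ref{sec:16-19-leaf} already flags that $\mathcal{T}_1$ and $\mathcal{T}_2$ are precisely the two topologies on $19$ leaves that resist the restrict-to-$15$-leaves strategy used for all other trees in this range. A binary phylogenetic tree on $19$ leaves has exactly $16$ interior edges, and a $5$-state character displayed by a set $F$ corresponds to removing $|F|=4$ edges (removing $k$ edges of a tree produces $k+1$ components). Hence my target is, for each $s\in\{1,2\}$, to partition the $16$ interior edges of $\mathcal{T}_s$ into four blocks $F_1^{(s)},F_2^{(s)},F_3^{(s)},F_4^{(s)}$ of size $4$, and then take $\chi_i^{(s)}$ to be the $5$-state character displayed by $F_i^{(s)}$.

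The verification will go through Theorem~\ref{basic_thm_judgment}. Convexity of each $\chi_i^{(s)}$ on $\mathcal{T}_s$ is automatic from the fact that it is displayed by an edge set of $\mathcal{T}_s$. To secure condition (1), I would distribute interior edges so that no two edges in the same block $F_i^{(s)}$ are adjacent; this guarantees that every interior edge $e$ lies in exactly one block and has one of its end-vertices separated from the other by the removal of $F_i^{(s)}$, so $e$ is distinguished by $\chi_i^{(s)}$. Concretely, I would identify the longest interior path $P$ in each $\mathcal{T}_s$ and distribute the edges of $P$ cyclically through $F_1,\ldots,F_4$, then assign the short branch edges (the edges in each $E_j$ in the notation of Section~\ref{sec:15-leaf}) to whichever block avoids neighbor-collisions — exactly the template already used in Lemmas~\ref{pre_n=15_6}--\ref{pre_n=15_9}.

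Condition (2) — unique minimal restricted chordal completion of $\text{Int}(\mathcal{C}_s)$ — is the real obstacle and the reason these two trees are singled out: the asymmetries of $\mathcal{T}_1$ and $\mathcal{T}_2$ force some partitions to admit ambiguous completions. My plan is to draw $\text{Int}(\mathcal{C}_s)$ explicitly for the partition chosen above, list its induced $4$-cycles, and check that every such cycle is pinned by a single completion edge (i.e. only one of the two possible chords joins two vertices whose first components differ, or only one such chord can be added without forcing a new non-chordal cycle elsewhere). If any cycle in the initial choice turns out to allow two forced chords, I would swap a single edge between two blocks and redo the check; the flexibility of having four blocks of four edges each typically permits such a local repair.

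The hard part will therefore be the bookkeeping of $\text{Int}(\mathcal{C}_s)$ for each of the two concrete trees: I expect to need to tune the partition by hand until every induced $4$-cycle has a unique forced chord. Because the problem reduces to two explicit finite diagrams, the result is inherently case-checking, and the proof will consist of exhibiting the partition $F_1^{(s)},\ldots,F_4^{(s)}$ (preferably via a labelled figure), tracing $\text{Int}(\mathcal{C}_s)$, and invoking Theorem~\ref{basic_thm_judgment}.
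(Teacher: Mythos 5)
Your proposal follows essentially the same route as the paper: for each of the two trees the paper partitions the $16$ interior edges into four $4$-edge display sets, writes down the four resulting $5$-state characters explicitly, and verifies both conditions of Theorem~\ref{basic_thm_judgment}, including tracing $\text{Int}(\mathcal{C}_s)$ and identifying its unique minimal restricted chordal completion. The only thing your plan leaves open is the final bookkeeping you yourself flag --- the paper's proof consists precisely of the explicit partitions and the completion-edge check that you defer.
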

\begin{proof}
Let $\mathcal{C}_{1} = \{\chi_{11},\chi_{12},\chi_{13},\chi_{14}\}$ be a set of characters on $\{1,2,\ldots,19\}$, where
\[
\begin{aligned}
\pi(\chi_{11}) &= \{\{1,2\},\{4,5\},\{8,9,10\},\{3,6,7\}, \{11,12,13,14,15,16,17,18,19\}\}, \\
\pi(\chi_{12}) &= \{\{1,2,3\},\{6,7\},\{11,12,13\},\{4,5,8,9,10\}, \{14,15,16,17,18,19\}\}, \\
\pi(\chi_{13}) &= \{\{1,2,3,19\},\{4,5,6,7\},\{8,9,10,11,12,13\}, \{14,15,16\},\{17,18\}\}, \\
\pi(\chi_{14}) &= \{\{8,9\},\{11,12\},\{14,15\},\{17,18,19\},\{1,2,3,4,5,6,7,10,13,16\}\}.
\end{aligned}
\]

Let $\mathcal{C}_{2} = \{\chi_{21},\chi_{22},\chi_{23},\chi_{24}\}$ be a set of characters on $\{1,2,\ldots,19\}$, where
\[
\begin{aligned}
\pi(\chi_{21}) &= \{\{1,2,3\},\{5,6\},\{4,7\},\{11,12\},\{8,9,10,13,14,15,16,17,18,19\}\}, \\
\pi(\chi_{22}) &= \{\{1,2,3,4\},\{8,9,10\},\{5,6,7,11,12,13\},\{14,15\},\{16,17,18,19\}\}, \\
\pi(\chi_{23}) &= \{\{1,2,3,4,19\},\{5,6,7\},\{8,9,10,11,12,13\}, \{14,15,16\},\{17,18\}\}, \\
\pi(\chi_{24}) &= \{\{1,2\},\{8,9\},\{11,12,13\},\{17,18,19\},\{3,4,5,6,7,10,14,15,16\}\}.
\end{aligned}
\]
For all $s \in \{1,2\}$, it is easy to verify that $\mathcal{C}_{s}$ is convex on $\mathcal{T}_{s}$ and that $\mathcal{T}_{s}$ is distinguished by $\mathcal{C}_{s}$.

For $\mathcal{C}_{2}$, draw $\text{Int}(\mathcal{C}_{2})$, remove the vertices whose induced subgraph of neighbors is a complete graph, and then remove the vertices adjacent to all the remaining vertices. Only by connecting $(\chi_{23}, \{1,2,3,4,19\})$ and $(\chi_{22}, \{5,6,7,11,12,13\})$ can we obtain the unique restricted completely chordal graph of $\text{Int}(\mathcal{C}_{2})$. For $\mathcal{C}_{1}$, draw $\text{Int}(\mathcal{C}_{1})$, remove the vertices with a complete neighbor subgraph, and then remove the vertices adjacent to all the remaining vertices. The remaining graph is a cycle composed of six vertices, which are $(\chi_{12}, \{4,5,8,9,10\})$, $(\chi_{13}, \{8,9,10,11,12,13\})$, $(\chi_{11}, \{11,12,13,14,15,16,17,18,19\})$, $(\chi_{13}, \{1,2,3,19\})$, $(\chi_{11}, \{3,6,7\})$, and $(\chi_{13}, \{4,5,6,7\})$ in sequence. Adding three edges, whose ends are respectively $(\chi_{12}, \{4,5,8,9,10\})$ and $(\chi_{11}, \{11,12,13,14,15,16,17,18,19\})$, $(\chi_{12}, \{4,5,8,9,10\})$ and $(\chi_{13}, \{1,2,3,19\})$, and $(\chi_{12}, \{4,5,8,9,10\})$ and $(\chi_{11}, \{3,6,7\})$, in $\text{Int}(\mathcal{C}_{1})$ can obtain the unique restricted completely chordal graph of $\text{Int}(\mathcal{C}_{1})$. By Theorem \ref{basic_thm_judgment}, for any $s \in \{1,2\}$, $\mathcal{C}_{s}$ defines $\mathcal{T}_{s}$.
\end{proof}

Lemma \ref{pre_n=19_0} demonstrates two phylogenetic trees with 19 leaves of special structures.

\begin{Cor}\label{pre_n=19_all}
Let $\mathcal{T}$ be a phylogenetic $X$-tree with 19 leaves. Then there exists a set $\mathcal{C}$ of four 5-state characters such that $\mathcal{C}$ defines $\mathcal{T}$.
\end{Cor}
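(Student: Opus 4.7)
The plan is to reduce to Corollary~\ref{pre_n=15_all} via a leaf-removal strategy, with the two exceptional 19-leaf trees $\mathcal{T}_{1},\mathcal{T}_{2}$ already covered by Lemma~\ref{pre_n=19_0}. For any other phylogenetic $X$-tree $\mathcal{T}$ on 19 leaves, I would first locate two cherries $\{a_1,a_2\}$ and $\{a_3,a_4\}$ of $\mathcal{T}$ whose apex interior vertices are separated by at least three interior edges, chosen so that the 15-leaf restriction $\mathcal{T}_{15}$ obtained by deleting $a_1,\ldots,a_4$ and suppressing any resulting degree-two vertices satisfies both conditions (i) and (ii) of Corollary~\ref{pre_n=15_all} (no snowflake internal subtree and no five 3-star internal subtrees). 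The trees $\mathcal{T}_{1}$ and $\mathcal{T}_{2}$ would then be characterised as precisely those 19-leaf topologies where no such cherry pair exists, motivating their separate treatment in Lemma~\ref{pre_n=19_0}.

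Given such a $\mathcal{T}_{15}$, Corollary~\ref{pre_n=15_all} supplies a set $\mathcal{C}_{15}$ of three 5-state characters defining $\mathcal{T}_{15}$. Let $F$ denote the set of interior edges of $\mathcal{T}$ whose $X$-splits extend those of the 12 interior edges of $\mathcal{T}_{15}$; then $F$ is a $\mathcal{T}_{15}$-representable subset in the sense of Definition~\ref{Def_representative}, so by Lemma~\ref{basic_lem_3} the lifted family $(\mathcal{C}_{15})_{F}$ infers $\sigma_{e}$ for every $e \in F$. The four remaining interior edges $f_{1},f_{2},f_{3},f_{4}$ of $\mathcal{T}$ are exactly the ones ``lost'' in passing to $\mathcal{T}_{15}$, clustered around the two removed cherries. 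I define a fourth 5-state character $\chi$ by letting $\pi(\chi) = \{Y_{1},Y_{2},Y_{3},Y_{4},Y_{5}\}$ be the partition of $X$ whose blocks are the leaf sets of the five connected components of $\mathcal{T}\setminus\{f_{1},f_{2},f_{3},f_{4}\}$; by construction $\chi$ is convex on $\mathcal{T}$ and is displayed by $\{f_{1},f_{2},f_{3},f_{4}\}$, with $\{a_{1},a_{2}\}$ and $\{a_{3},a_{4}\}$ appearing among the blocks.

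To finish, I would invoke Lemma~\ref{basic_lem_2} with $\chi$, displaying edges $\{f_{1},\ldots,f_{4}\}$, and $E' \subseteq F$ chosen to contain an interior edge on the path between each pair $f_{i},f_{j}$. Condition~(1) of Lemma~\ref{basic_lem_2} is guaranteed because the two cherries are separated by at least three interior edges, forcing some edge of $F$ between any pair of the $f_{j}$; condition~(2) is immediate from the binary structure of $\mathcal{T}$ and the fact that every component retains labelled leaves. Lemma~\ref{basic_lem_2} then yields that $\{\chi\} \cup \{\chi_{e} : e \in E'\}$ infers each $\sigma_{f_{j}}$. Combined with the previous paragraph, the collection $\mathcal{C} := (\mathcal{C}_{15})_{F} \cup \{\chi\}$ infers every nontrivial $X$-split of $\mathcal{T}$, so by Theorem~\ref{basic_thm_judgment_2}, $\mathcal{C}$ defines $\mathcal{T}$; clearly $|\mathcal{C}| = 4$ and each character is 5-state.

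The main obstacle is the structural case analysis in the first step: showing that every 19-leaf tree outside $\{\mathcal{T}_{1},\mathcal{T}_{2}\}$ admits a pair of cherries at distance at least three whose removal leaves a 15-leaf tree that avoids both the snowflake and the five-3-star configurations. Mirroring the template of Lemmas~\ref{pre_n=15_6}--\ref{pre_n=15_9}, I would organise the enumeration by the length of the longest interior path of $\mathcal{T}$ and by the distribution of its pendant interior edges, exhibiting the required cherry pair in each class and identifying $\mathcal{T}_{1}, \mathcal{T}_{2}$ as the only unavoidable obstructions — the precise reason those two topologies require the ad hoc character sets constructed in Lemma~\ref{pre_n=19_0}.
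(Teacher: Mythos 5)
Your overall architecture matches the paper's: treat $\mathcal{T}_{1},\mathcal{T}_{2}$ separately via Lemma~\ref{pre_n=19_0}, pass to a 15-leaf restriction satisfying conditions~(i) and~(ii), apply Corollary~\ref{pre_n=15_all}, lift with Lemma~\ref{basic_lem_3}, and recover the four lost splits with one extra 5-state character via Lemma~\ref{basic_lem_2}. But your choice of what to delete introduces a genuine gap. The paper deletes \emph{four individual leaves whose associated interior edges are pairwise non-adjacent}; you delete \emph{two cherries}. Deleting a cherry $\{a_{1},a_{2}\}$ with apex $u$ and third neighbour $w$ costs two interior edges: the edge $\{u,w\}$ and one of the two other edges incident to $w$ (the pair that gets merged when $w$ is suppressed, only one of which can lie in the $\mathcal{T}_{15}$-representable set $F$). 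These two lost edges share the vertex $w$, so among your display edges $f_{1},\dots,f_{4}$ there are two adjacent pairs. Condition~(1) of Lemma~\ref{basic_lem_2} demands an interior edge of $E'\subseteq F$ on the path between end vertices of each pair $f_{i},f_{j}$, and for two edges meeting at $w$ no such separating edge exists. Your justification of condition~(1) --- that the two cherries are separated by at least three interior edges --- only handles pairs $f_{i},f_{j}$ coming from \emph{different} cherries; it says nothing about the adjacent pair produced at a single cherry. So the lemma cannot be invoked as stated, and the inference of $\sigma_{f_{1}},\dots,\sigma_{f_{4}}$ is not established. (The blocks of your $\chi$ are also not what you describe: besides $\{a_{1},a_{2}\}$ and $\{a_{3},a_{4}\}$ you get a block consisting of the leaves beyond the second lost edge at each $w$, not simply ``the rest''.)

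The fix is essentially the paper's move: choose a set $L$ of four pendant edges, not necessarily forming cherries, such that the four interior edges lost upon suppression are pairwise non-adjacent; then the character displayed by those four edges has five blocks and Lemma~\ref{basic_lem_2}'s condition~(1) becomes satisfiable. This also changes your structural case analysis: the obstruction characterising $\mathcal{T}_{1}$ and $\mathcal{T}_{2}$ is not the absence of a distant cherry pair (a 19-leaf tree essentially always has one), but the presence of a snowflake internal subtree each of whose six pendant edges still meets an interior edge outside the snowflake after any admissible deletion, so that no choice of $L$ destroys the snowflake in the restriction. Your enumeration plan should be organised around destroying snowflakes and the five-3-star configurations rather than around cherry distances.
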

\begin{proof}
Let $\mathcal{T}$ be a phylogenetic $X$-tree with 19 leaves. If we can find a set $L$ of 4 pendant edges such that the interior edges associated with them are non-adjacent, and the phylogenetic tree $\widetilde{\mathcal{T}}$ obtained by suppressing the degree-two vertices in $\mathcal{T} \setminus L$ is a restriction of $\mathcal{T}$ and satisfies conditions~(i) and (ii). By Corollary \ref{pre_n=15_all}, there exists a set $\widetilde{\mathcal{C}}$ of three 5-state characters such that $\widetilde{\mathcal{C}}$ defines $\widetilde{\mathcal{T}}$. Let $\widetilde{F} \subseteq E(\widetilde{\mathcal{T}})$ be a $\widetilde{\mathcal{T}}$-representable subset. According to Lemmas~\ref{basic_lem_2} and~\ref{basic_lem_3}, $\widetilde{\mathcal{C}}_{\widetilde{F}}$ (where $\widetilde{F}$ is a $\widetilde{\mathcal{T}}$-representable subset) together with the characters $\chi$ displayed by the interior edges associated with $L$ can define $\mathcal{T}$. At this time, the case of $n = 19$ is reduced to the case of $n = 15$. Now we show that, except for the phylogenetic trees isomorphic to $\mathcal{T}_{1}$ and $\mathcal{T}_{2}$ in Lemma \ref{fig:n=19_special}, for any $\mathcal{T}$, there exists a phylogenetic $\widetilde{X}$-tree $\widetilde{\mathcal{T}}$ with $|\widetilde{X}| = 15$, and $\widetilde{\mathcal{T}}$ is a restriction of $\mathcal{T}$ and satisfies conditions~(i) and (ii).

For condition~(i), we conduct a classification discussion based on the number of internal subtrees isomorphic to snowflakes in $\mathcal{T}$. If $\mathcal{T}$ has only one internal subtree isomorphic to the snowflake, denote the pendant edges of this internal subtree as $e_{i}$, where $i \in \{1,2,3,4,5,6\}$, and $E_{0} = \{e_{1},\ldots,e_{6}\}$. If each $e_{i}$ has at least one adjacent interior edge in $\mathcal{T}$ that is not in $E_{0}$, then $\mathcal{T}$ is isomorphic to $\mathcal{T}_{1}$ or $\mathcal{T}_{2}$ in Lemma \ref{fig:n=19_special}. Thus, in the remaining phylogenetic trees, there exists some $e_{k}$ whose adjacent edges in $\mathcal{T} \setminus E_{0}$ are not interior edges, that is, the two adjacent edges of $e_{k}$ in $\mathcal{T} \setminus E_{0}$ are both pendant edges in $\mathcal{T}$. Select one of them and put it into $L$, and then select 3 pendant edges in $\mathcal{T}$ and put them into $L$ so that the interior edges associated with $L$ are non-adjacent. Then the phylogenetic tree $\widetilde{\mathcal{T}}$ obtained by suppressing the degree-two vertices in $\mathcal{T} \setminus L$ does not contain internal subtrees isomorphic to snowflakes.

\begin{figure}[H]
\centering
\subfigure[]
{
\begin{minipage}[b]{0.45\linewidth}
\centering
\includegraphics[width=0.60\textwidth]{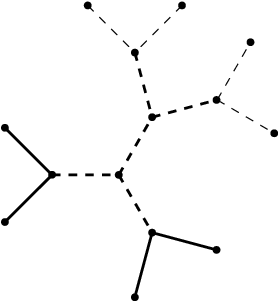}
\end{minipage}
}
\subfigure[]
{
\begin{minipage}[b]{0.45\linewidth}
\centering
\includegraphics[width=0.70\textwidth]{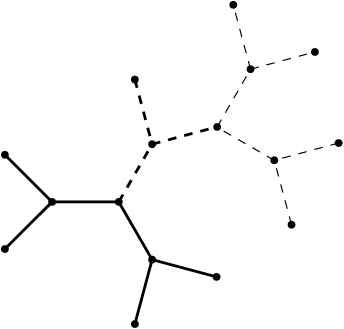}
\end{minipage}
}
\caption{Structures with two internal subtrees isomorphic to the snowflake.}
\label{fig:n=19_two-snowflake}
\end{figure}

If $\mathcal{T}$ has two internal subtrees isomorphic to snowflakes, due to $n = 19$, these two internal subtrees have the same edges, and their structures are only the two types shown in Fig.~\ref{fig:n=19_two-snowflake}. In Fig.~\ref{fig:n=19_two-snowflake}, the thick lines and the dashed lines represent two internal subtrees isomorphic to snowflakes, and the set of pendant edges of the internal subtrees is denoted as $E_{0}$. In (a), each pendant edge in $\mathcal{T}$ extends to at least two leaves. Since $n = 19$, there is at least one interior edge $e_{i}$ of the thick lines in $E_{0}$ that extends to only two leaves, that is, the two adjacent edges of $e_{i}$ in $\mathcal{T} \setminus E_{0}$ are both pendant edges in $\mathcal{T}$. Similarly, there is at least one interior edge $e_{j}$ of the dashed lines in $E_{0}$ that extends to only two leaves, that is, the two adjacent edges of $e_{j}$ in $\mathcal{T} \setminus E_{0}$ are both pendant edges in $\mathcal{T}$. Put $e_{i}$ and $e_{j}$ into $L$, and then select two pendant edges in $\mathcal{T}$ and put them into $L$ so that the interior edges associated with $F$ are all non-adjacent. Then the phylogenetic tree $\widetilde{\mathcal{T}}$ obtained by suppressing the degree-two vertices in $\mathcal{T} \setminus L$ does not contain internal subtrees isomorphic to snowflakes. The same consideration applies to (b), and we can find $L$ such that the phylogenetic tree $\widetilde{\mathcal{T}}$ obtained by suppressing the degree-two vertices in $\mathcal{T} \setminus L$ does not contain internal subtrees isomorphic to snowflakes. If $\mathcal{T}$ has three internal subtrees isomorphic to snowflakes, then $\mathcal{T}$ requires at least 20 leaves, which does not conform to $n = 19$.

\begin{figure}[H]
\centering
\subfigure[]
{
\begin{minipage}[b]{0.45\linewidth}
\centering
\includegraphics[width=0.60\textwidth]{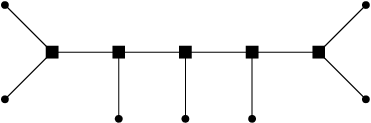}
\end{minipage}
}
\subfigure[]
{
\begin{minipage}[b]{0.45\linewidth}
\centering
\includegraphics[width=0.45\textwidth]{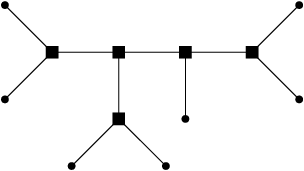}
\end{minipage}
}
\caption{Structures with five internal subtrees isomorphic to the 3-star.}
\label{fig:n=19_five_3-star}
\end{figure}

For condition~(ii), denote the set of degree-three vertices of the internal subtrees isomorphic to the 3-star as $W = \{w_{1},w_{2},w_{3},w_{4},w_{5}\}$. Consider the distribution of the vertices in $W$ in $\mathcal{T}$. If the vertices in $W$ are on a straight line, then the internal subtrees containing isomorphic the 3-star have at least 7 pendant edges, as shown in (a) of Fig.~\ref{fig:n=19_five_3-star}. If four vertices in $W$ are on a straight line, then the internal subtrees containing isomorphic the 3-star also have at least 7 pendant edges, as shown in (b) of Fig.~\ref{fig:n=19_five_3-star}. Denote the set of pendant edges of the internal subtrees as $E_{0}$. Only if each edge in $E_{0}$ extends to at least three leaves in $\mathcal{T}$, can the phylogenetic tree $\tilde{\mathcal{T}}$ obtained by suppressing the degree-two vertices in $\mathcal{T} \setminus L$ contain five internal subtrees isomorphic to the 3-star no matter how $L$ is taken. However, if each edge in $E_{0}$ extends to at least three leaves in $\mathcal{T}$, $\mathcal{T}$ requires at least 21 leaves, which does not conform to $n = 19$. Therefore, there does not exist a phylogenetic tree $\mathcal{T}$ with 19 leaves such that the phylogenetic tree obtained by removing 4 pendant edges of $\mathcal{T}$ (whose associated interior edges are non-adjacent) and then suppressing the degree-two vertices contains five internal subtrees isomorphic to the 3-star.

In conclusion, $\widetilde{\mathcal{T}}$ satisfies conditions~(i) and (ii). Therefore, from the previous analysis, there exists a set $\mathcal{C}$ of four 5-state characters such that $\mathcal{C}$ defines $\mathcal{T}$.
\end{proof}

For the base cases with 16 $\sim$ 18 leaves, we then adopt a leaf extension strategy --- by gradually adding leaves to expand the original tree structure into a phylogenetic tree with 19 leaves. When extending the leaves, we need to pay attention to the relationship between adjacent interior edges and characters, that is, two adjacent interior edges cannot be distinguished by the same character in a character set.

\begin{Lem}\emph{\cite{Huber2024}}\label{basic_distinguish}
Let $\mathcal{T}$ be a phylogenetic $X$-tree and $\mathcal{C}$ be a set of characters on $X$ that distinguishes $\mathcal{T}$. Then two adjacent interior edges of $\mathcal{T}$ cannot be distinguished by the same character in $\mathcal{C}$.
\end{Lem}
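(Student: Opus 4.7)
My plan is to argue by contradiction. Suppose some single character $\chi \in \mathcal{C}$ distinguishes both of two adjacent interior edges $e_1 = \{u, v\}$ and $e_2 = \{v, w\}$; I will derive a violation of the convexity of $\chi$ on $\mathcal{T}$ (which is implicit in $\mathcal{C}$ distinguishing $\mathcal{T}$ as a character set compatible with the tree).

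First, I would unpack the definition of ``distinguished by $\chi$'' to produce four blocks $A_1, A_2, B_1, B_2 \in \pi(\chi)$: the pair $A_1, A_2$ witnesses $e_1$, with $u \in V(\mathcal{T}(A_1))$, $v \notin V(\mathcal{T}(A_1))$ and $v \in V(\mathcal{T}(A_2))$, $u \notin V(\mathcal{T}(A_2))$; symmetrically $B_1, B_2$ witness $e_2$, with $v \in V(\mathcal{T}(B_1))$, $w \notin V(\mathcal{T}(B_1))$ and $w \in V(\mathcal{T}(B_2))$, $v \notin V(\mathcal{T}(B_2))$. A preliminary observation is that $A_1 \neq B_1$, since $\mathcal{T}(A_1)$ omits $v$ while $\mathcal{T}(B_1)$ contains it.

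Next, since $\mathcal{T}$ is binary, the shared vertex $v$ has exactly three neighbors: $u$, $w$, and one additional vertex $x$. In the component of $\mathcal{T} \setminus e_2$ containing $v$, the vertex $v$ is incident to only two edges, namely $e_1$ and the edge to $x$. For the minimal subtree $\mathcal{T}(B_1)$ to pass through $v$, the leaf set $B_1$ must include elements on both of these branches, and in particular must contain at least one leaf lying in the component $T_1^u$ of $\mathcal{T} \setminus e_1$ containing $u$. Consequently $\mathcal{T}(B_1)$ must traverse the edge $e_1$, so $u \in V(\mathcal{T}(B_1))$.

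This produces two distinct blocks $A_1, B_1 \in \pi(\chi)$ whose minimal subtrees both contain $u$, contradicting the convexity of $\chi$ on $\mathcal{T}$, and thereby completing the proof. The main technical obstacle I anticipate is the branching argument at $v$: carefully justifying that $v \in V(\mathcal{T}(B_1))$ forces $B_1$ to reach into $T_1^u$ relies on $v$ having degree exactly three in the underlying binary tree, so that only two branches at $v$ remain in $\mathcal{T} \setminus e_2$ and $B_1$ is forced to straddle both. Once this is in place, the contradiction with convexity is immediate.
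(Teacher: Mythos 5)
Your argument is correct, and it is worth noting that the paper itself offers no proof of this lemma at all: it is imported from the cited reference, so there is nothing internal to compare against, and your self-contained derivation is a genuine addition. The chain of reasoning is sound: $\mathcal{T}(B_1)$ contains $v$ but not $w$, so all of $\phi(B_1)$ lies in the component of $\mathcal{T}\setminus e_2$ containing $v$; since $v$ is unlabelled it must lie on a path between two leaves of $B_1$, which at a degree-three vertex with the $w$-branch excluded forces a leaf of $B_1$ into the $u$-side of $e_1$, hence $u\in V(\mathcal{T}(B_1))$, and since $A_1\neq B_1$ this collides with $u\in V(\mathcal{T}(A_1))$ under convexity. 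Two remarks on the hypotheses you invoke, both of which you handle correctly but which deserve emphasis because each is genuinely indispensable. First, convexity of $\chi$ is not literally contained in the paper's definition of ``distinguishes'' (Theorem~\ref{basic_thm_judgment} lists convexity and distinguishing as separate conditions), and the lemma is actually false for a non-convex character: one can build an eight-leaf binary tree and a four-block non-convex partition whose blocks witness both of two adjacent interior edges. So your parenthetical ``implicit'' assumption is doing real work; in every application in this paper $\mathcal{C}$ defines the tree and is therefore convex, so the usage is safe, but the assumption should be stated rather than left implicit. Second, your appeal to binarity at the shared vertex $v$ is also essential, not cosmetic: if $v$ had degree four with extra neighbours $x_1,x_2$, a single convex block reaching into both $x_i$-branches could serve as the $v$-side witness for both $e_1$ and $e_2$ simultaneously, and the lemma fails. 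Since the paper fixes all trees to be binary, this is fine, but it means the lemma as stated for a general ``phylogenetic $X$-tree'' is really a statement about the binary case.
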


\begin{Cor}\label{pre_n=16_17_18}
Let $\mathcal{T}$ be a phylogenetic $X$-tree with 16, 17, or 18 leaves. Then there exists a set $\mathcal{C}$ of four 5-state characters such that $\mathcal{C}$ defines $\mathcal{T}$.
\end{Cor}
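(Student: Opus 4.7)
The plan is to reduce each of the cases $n \in \{16, 17, 18\}$ to the 19-leaf result (Corollary~\ref{pre_n=19_all}) via a leaf-extension argument. Given $\mathcal{T}$ on $n$ leaves, I would extend $\mathcal{T}$ to a 19-leaf binary phylogenetic $X'$-tree $\mathcal{T}'$ by attaching $19 - n \leq 3$ new leaves, each by subdividing some existing pendant or interior edge of the current tree. I would choose the attachment sites so that $\mathcal{T}$ is the restriction of $\mathcal{T}'$ to $X$ (in the sense of Definition~\ref{Def_restriction}) and so that the interior edges newly created along the way are not adjacent to one another; Lemma~\ref{basic_distinguish} guides these choices by telling us that two adjacent interior edges cannot be distinguished by the same character in any defining set.

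Next, I would apply Corollary~\ref{pre_n=19_all} to $\mathcal{T}'$ to obtain a set $\mathcal{C}' = \{\chi'_1, \chi'_2, \chi'_3, \chi'_4\}$ of four 5-state characters defining $\mathcal{T}'$, and set $\chi_i := \chi'_i|_X$ to form $\mathcal{C}$ on $X$. The attachment points in the first step must be arranged so that every state of every $\chi'_i$ retains at least one representative in $X$, so that each $\chi_i$ is still a 5-state character; this is always achievable since we add at most three leaves while each character partitions 19 elements into 5 non-empty state classes. Then I would invoke Theorem~\ref{basic_thm_judgment}: convexity of each $\chi_i$ on $\mathcal{T}$ follows from convexity of $\chi'_i$ on $\mathcal{T}'$, because passing to a restriction commutes (up to suppression of degree-two vertices) with the formation of minimal state-subtrees; and each interior edge of $\mathcal{T}$ corresponds to a path of interior edges in $\mathcal{T}'$, at least one of which is distinguished by some $\chi'_i$, hence the same $\chi_i$ distinguishes the corresponding edge of $\mathcal{T}$.

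The main obstacle I anticipate is the remaining clause of Theorem~\ref{basic_thm_judgment}: uniqueness of the minimal restricted chordal completion of $\text{Int}(\mathcal{C})$. This uniqueness does not transfer automatically from $\text{Int}(\mathcal{C}')$ under restriction, since restricting to $X$ can collapse vertices of the partition-intersection graph and potentially introduce completion ambiguity. To handle this, I would refine the extension strategy by attaching each new leaf inside a state class of every $\chi'_i$ that already contains at least two elements of $X$; this would put the vertex set and edge set of $\text{Int}(\mathcal{C})$ in natural correspondence with those of $\text{Int}(\mathcal{C}')$, so the uniqueness passes through. If certain topologies of $\mathcal{T}$ resist this refinement, I would fall back on a direct construction, partitioning the interior edges of $\mathcal{T}$ into four character-supporting sets by case analysis on the longest-path length in the spirit of Lemmas~\ref{pre_n=15_6}--\ref{pre_n=15_9}, with Lemma~\ref{basic_distinguish} ensuring that adjacent interior edges always land in different sets.
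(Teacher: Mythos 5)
Your overall strategy---extend $\mathcal{T}$ to a $19$-leaf tree $\mathcal{T}'$, invoke Corollary~\ref{pre_n=19_all} to get a defining set $\mathcal{C}'$, use Lemma~\ref{basic_distinguish} to control which characters distinguish the newly created interior edges, and then restrict the characters back to $X$---is exactly the paper's ``leaf extension'' strategy. The gap is in the last step, and it is precisely the one you flag yourself: transferring \emph{definition} (not just convexity and distinguishing) from $\mathcal{T}'$ down to $\mathcal{T}$. Your proposed fix---``attach each new leaf inside a state class of every $\chi'_i$ that already contains at least two elements of $X$''---is circular: the attachment sites must be fixed \emph{before} Corollary~\ref{pre_n=19_all} is applied, since $\mathcal{C}'$ is produced as a black box from the already-extended tree $\mathcal{T}'$, and a single attachment point cannot in general be placed simultaneously inside a large state class of all four characters. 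So the route through the chordal-completion clause of Theorem~\ref{basic_thm_judgment} does not close, and your fallback (a fresh case analysis of all $16$--$18$ leaf topologies in the style of Lemmas~\ref{pre_n=15_6}--\ref{pre_n=15_9}) is a substantially larger undertaking that you do not carry out.

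The paper sidesteps this entirely by transferring the \emph{counterexample} upward instead of the chordal-completion uniqueness downward: having checked that the restricted set $\mathcal{C}$ is convex on $\mathcal{T}$, it argues that if $\mathcal{C}$ were also convex on some $\widetilde{\mathcal{T}}\not\cong\mathcal{T}$, then performing the same leaf extension on $\widetilde{\mathcal{T}}$ would produce a $19$-leaf tree on which $\mathcal{C}'$ is convex but which is not isomorphic to $\mathcal{T}'$, contradicting the fact that $\mathcal{C}'$ defines $\mathcal{T}'$. This is the missing idea: no property of $\mathrm{Int}(\mathcal{C})$ needs to be established at all. (Two further small points: the specific shape of the extension matters---the paper attaches the new leaves as a cherry or short caterpillar next to a fixed leaf so that, after Lemma~\ref{basic_distinguish} pins down which characters distinguish the new edges, the state classes meeting $\{17,18,19\}$ have a controlled form; and your $\chi_i:=\chi'_i|_X$ can legitimately drop to four states when a class lies entirely in the deleted leaves, which the paper simply accepts by deleting that class.)
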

\begin{proof}
Let $n = |X|$. The case $n = 18$ is simpler than $n = 17$ and is thus omitted. First, consider the case $n = 17$. Let $\mathcal{T}$ be a phylogenetic $X$-tree with $X = \{1, 2, \ldots, 17\}$. We extend $\mathcal{T}$ to a phylogenetic $X'$-tree $\mathcal{T}'$ where $X' = \{1, 2, \ldots, 19\}$.

\begin{figure}[H]
  \centering
  \subfigure[$\mathcal{T}$]
  {
   \begin{minipage}[b]{.45\linewidth}
     \centering
     \includegraphics[width=0.60\textwidth]{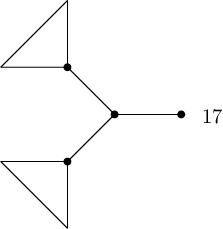}
   \end{minipage}
  }
  \subfigure[$\mathcal{T}'$]
  {
   \begin{minipage}[b]{.45\linewidth}
     \centering
     \includegraphics[width=0.75\textwidth]{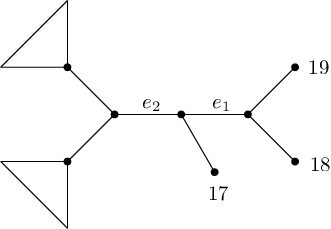}
   \end{minipage}
  }
  \caption{Extension of a phylogenetic tree with 17 leaves to 19 leaves.}
  \label{fig:n=17}
\end{figure}

By Corollary~\ref{pre_n=19_all}, there exists a set $\mathcal{C}'$ of four 5-state characters that defines $\mathcal{T}'$. According to Lemma~\ref{basic_distinguish}, edges $e_{1}$ and $e_{2}$ cannot be distinguished by the same character in $\mathcal{C}'$. Let $\mathcal{C}' = \{\chi_{1}', \chi_{2}', \chi_{3}', \chi_{4}'\}$, where $e_{1}$ is distinguished by $\chi_{1}'$ and $e_{2}$ by $\chi_{2}'$. Assume:
\[
\begin{aligned}
\pi(\chi_{1}')&=\{\{A_{11}\},\{A_{12}\},\{17,A_{13}\},\{18,19\}\},\\
\pi(\chi_{2}')&=\{\{A_{21}\},\{A_{22}\},\{A_{23}\},\{17,18,19\}\},\\
\pi(\chi_{3}')&=\{\{A_{31}\},\{A_{32}\},\{A_{33}\},\{17,18,19,A_{34}\}\}, \\
\pi(\chi_{4}')&=\{\{A_{41}\},\{A_{42}\},\{A_{43}\},\{17,18,19,A_{44}\}\}. 
\end{aligned}
\]
Remove $\{18, 19\}$ from $\pi(\chi_{1}')$ to obtain $\chi_{1}$, and replace $\{17, 18, 19\}$ in $\pi(\chi_{2}')$ with $\{17\}$ to obtain $\chi_{2}$. Similarly, remove 18 and 19 from $\pi(\chi_{3}')$ and $\pi(\chi_{4}')$ to form $\mathcal{C} = \{\chi_{1}, \chi_{2}, \chi_{3}, \chi_{4}\}$.

By construction in Fig.~\ref{fig:n=17}, $\mathcal{C}$ is convex on $\mathcal{T}$, thus defining $\mathcal{T}$. Otherwise, if there exists another $\widetilde{\mathcal{T}}$ on which $\mathcal{C}$ is convex, extending $\widetilde{\mathcal{T}}$ as in Fig.~\ref{fig:n=17} would yield a phylogenetic $X'$-tree on which $\mathcal{C}'$ is convex but not isomorphic to $\mathcal{T}'$, contradicting $\mathcal{C}'$ defining $\mathcal{T}'$.

Next, consider $n = 16$. Let $\mathcal{T}$ be a phylogenetic $X$-tree with $X = \{1, 2, \ldots, 16\}$. Extend $\mathcal{T}$ to a phylogenetic $X'$-tree $\mathcal{T}'$ with $X' = \{1, 2, \ldots, 19\}$.

\begin{figure}[H]
  \centering
  \subfigure[$\mathcal{T}$]
  {
   \begin{minipage}[b]{.45\linewidth}
     \centering
     \includegraphics[width=0.60\textwidth]{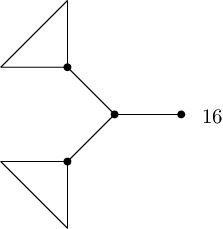}
   \end{minipage}
  }
  \subfigure[$\mathcal{T}'$]
  {
   \begin{minipage}[b]{.45\linewidth}
     \centering
     \includegraphics[width=0.85\textwidth]{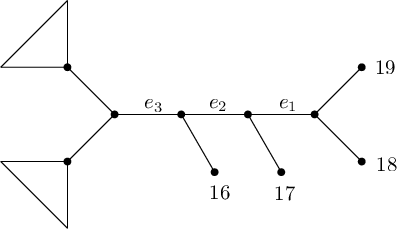}
   \end{minipage}
  }
  \caption{Extension of a phylogenetic tree with 16 leaves to 19 leaves.}
  \label{fig:n=16}
\end{figure}

By Corollary~\ref{pre_n=19_all}, there exists a set $\mathcal{C}'$ of four 5-state characters defining $\mathcal{T}'$. According to Lemma~\ref{basic_distinguish}, $e_{1}$ and $e_{2}$, as well as $e_{2}$ and $e_{3}$, cannot be distinguished by the same character in $\mathcal{C}'$. We consider two cases based on whether $e_{1}$ and $e_{3}$ are distinguished by the same character:

(1) $e_{1}$ and $e_{3}$ are distinguished by the same character in $\mathcal{C}'$. Let $\mathcal{C}' = \{\chi_{1}', \chi_{2}', \chi_{3}', \chi_{4}'\}$, where $e_{1}$ and $e_{3}$ are distinguished by $\chi_{1}'$ and $e_{2}$ by $\chi_{2}'$. Assume:
\[
\begin{aligned}
\pi(\chi_{1}') &= \{\{B_{11}\},\{B_{12}\},\{16,17\},\{18,19\}\}, \\
\pi(\chi_{2}') &= \{\{B_{21}\},\{B_{22}\},\{16,B_{23}\},\{17,18,19\}\}, \\
\pi(\chi_{3}') &= \{\{B_{31}\},\{B_{32}\},\{B_{33}\},\{16,17,18,19,B_{34}\}\}, \\
\pi(\chi_{4}') &= \{\{B_{41}\},\{B_{42}\},\{B_{43}\},\{16,17,18,19,B_{44}\}\}.
\end{aligned}
\]
Remove $\{17, 18, 19\}$ from $\pi(\chi_{2}')$ to obtain $\chi_{2}$, and remove 17, 18, 19 from $\pi(\chi_{1}')$, $\pi(\chi_{3}')$, and $\pi(\chi_{4}')$ to form $\mathcal{C} = \{\chi_{1}, \chi_{2}, \chi_{3}, \chi_{4}\}$.

(2) $e_{1}$ and $e_{3}$ are distinguished by different characters in $\mathcal{C}'$. Let $\mathcal{C}' = \{\chi_{1}', \chi_{2}', \chi_{3}', \chi_{4}'\}$, where $e_{1}$, $e_{2}$, and $e_{3}$ are distinguished by $\chi_{1}'$, $\chi_{2}'$, and $\chi_{3}'$ respectively. Assume:
\[
\begin{aligned}
\pi(\chi_{1}') &= \{\{C_{11}\},\{C_{12}\},\{16,17,C_{13}\},\{18,19\}\}, \\
\pi(\chi_{2}') &= \{\{C_{21}\},\{C_{22}\},\{16,C_{23}\},\{17,18,19\}\}, \\
\pi(\chi_{3}') &= \{\{C_{31}\},\{C_{32}\},\{C_{33}\},\{16,17,18,19\}\}, \\
\pi(\chi_{4}') &= \{\{C_{41}\},\{C_{42}\},\{C_{43}\},\{16,17,18,19,C_{44}\}\}.
\end{aligned}
\]
Remove $\{17, 18, 19\}$ from $\pi(\chi_{2}')$ to obtain $\chi_{2}$, and remove 17, 18, 19 from $\pi(\chi_{1}')$, $\pi(\chi_{3}')$, and $\pi(\chi_{4}')$ to form $\mathcal{C} = \{\chi_{1}, \chi_{2}, \chi_{3}, \chi_{4}\}$.

By construction in Fig.~\ref{fig:n=17}, $\mathcal{C}$ is convex on $\mathcal{T}$, thus defining $\mathcal{T}$. Otherwise, extending an alternative $\widetilde{\mathcal{T}}$ would contradict $\mathcal{C}'$ defining $\mathcal{T}'$.
\end{proof}

\subsection{Special Structure Phylogenetic Trees}\label{sec:Y-like}

This section discusses a class of special structure phylogenetic trees (Y-like trees). Through constructive proofs, their key property is revealed: for any such tree $\mathcal{T}$, there exists a set $\mathcal{C}$ of 5-state characters such that $\mathcal{C}$ defines $\mathcal{T}$, and the set cardinality satisfies $|\mathcal{C}| = \lceil \frac{n-3}{4} \rceil$. This conclusion lays the theoretical foundation for the recursive steps in subsequent inductive proofs, where the induction only requires the initial condition of $n \geq 20$. 

\begin{figure*}[htbp]
  \centering
  \subfigure[]
  {
   \begin{minipage}[b]{.45\linewidth}
     \centering
     \includegraphics[width=0.90\textwidth]{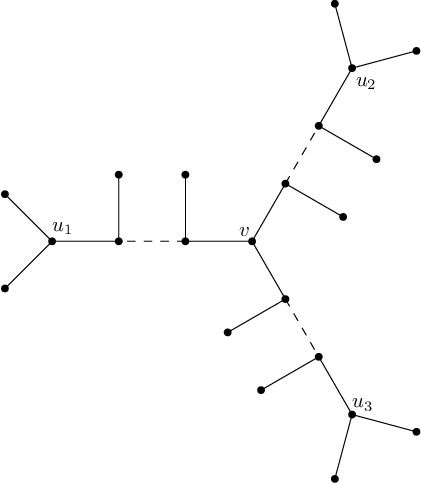}
   \end{minipage}
  }
  \subfigure[]
  {
   \begin{minipage}[b]{.45\linewidth}
     \centering
     \includegraphics[width=0.90\textwidth]{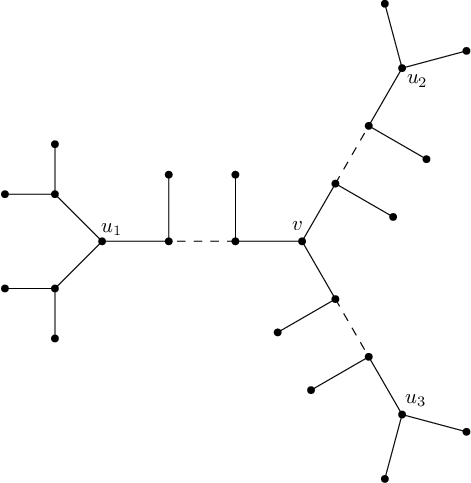}
   \end{minipage}
  }
  \subfigure[]
  {
   \begin{minipage}[b]{.45\linewidth}
     \centering
     \includegraphics[width=0.90\textwidth]{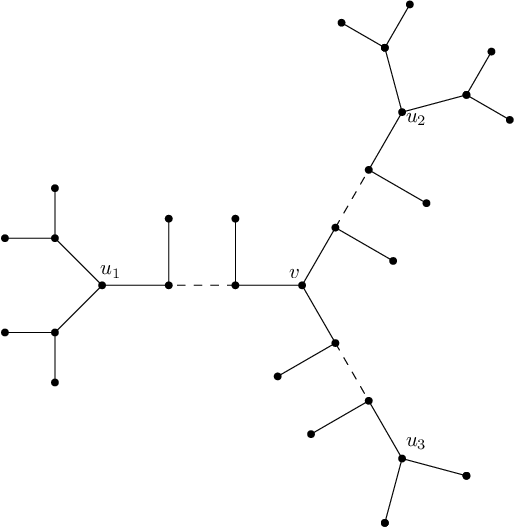}
   \end{minipage}
  }
  \subfigure[]
  {
   \begin{minipage}[b]{.45\linewidth}
     \centering
     \includegraphics[width=0.90\textwidth]{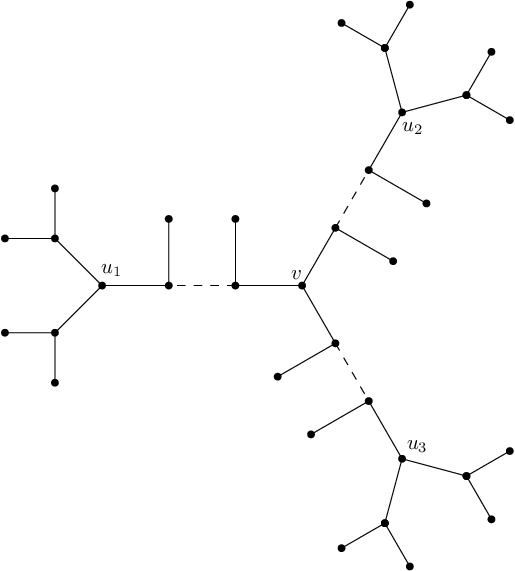}
   \end{minipage}
  }
  \caption{Y-like trees.}
  \label{fig:starlike}
\end{figure*}

As a special case of phylogenetic trees, Y-like trees have topological structures satisfying strict constraints: the tree $\mathcal{T}$ does not contain four cherries, and any two cherries must be separated by at least three interior edges. Based on topological classification by the number of cherries, we conduct the following systematic analysis:

(1) When $\mathcal{T}$ has only two cherries, $\mathcal{T}$ is a caterpillar, as shown in Fig.~\ref{fig:caterpillar}.

(2) When $\mathcal{T}$ has three cherries, the structure of $\mathcal{T}$ is like three paths ending in cherries meeting at a vertex, as shown in Fig.~\ref{fig:starlike}(a).

(3) When $\mathcal{T}$ has four cherries, since $\mathcal{T}$ does not contain four cherries with any two separated by at least three interior edges, there must exist two cherries not separated by at least three interior edges, so the structure of $\mathcal{T}$ is as shown in Fig.~\ref{fig:starlike}(b).

(4) When $\mathcal{T}$ has five cherries, the structure of $\mathcal{T}$ is as shown in Fig.~\ref{fig:starlike}(c).

(5) When $\mathcal{T}$ has six cherries, the structure of $\mathcal{T}$ is as shown in Fig.~\ref{fig:starlike}(d).

(6) When $\mathcal{T}$ has at least seven cherries, $\mathcal{T}$ must contain four cherries where any two are separated by at least three interior edges. The reason is as follows: Starting from Fig.~\ref{fig:starlike}(d), when adding a new cherry, select one of the two cherries separated by two interior edges in Fig.~\ref{fig:starlike}(d) and the new cherry, resulting in four cherries with any two separated by at least three interior edges. Therefore, the upper limit on the number of cherries in $\mathcal{T}$ is six.

Considering the structure of such trees, based on the above analysis, phylogenetic trees without four cherries and with any two cherries separated by at least three interior edges are structurally similar to the letter Y, so we call such trees \textit{Y-like trees}. The intersection vertex of the Y-shape in a Y-like tree $\mathcal{T}$ is called the center of $\mathcal{T}$, and the end vertices of the Y-shape are called the Y-ends of $\mathcal{T}$. In Fig.~\ref{fig:starlike}, the center of $\mathcal{T}$ is $v$, and the Y-ends of $\mathcal{T}$ are $u_1$, $u_2$, and $u_3$.

\begin{Lem}\label{main_lem}
Let $\mathcal{T}$ be a phylogenetic $X$-tree with $n = |X| \geq 20$. If $\mathcal{T}$ is a Y-like tree, then there exists a set $\mathcal{C}$ of 5-state characters such that $\mathcal{C}$ defines $\mathcal{T}$, where $|\mathcal{C}| = \lceil \frac{n-3}{4} \rceil$.
\end{Lem}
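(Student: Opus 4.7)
The plan is to proceed by case analysis on the number of cherries in $\mathcal{T}$, which by the classification preceding the lemma lies between two and six. For each case I would explicitly construct a convex 5-state character set $\mathcal{C}$ of cardinality $\lceil (n-3)/4 \rceil$ and verify that it defines $\mathcal{T}$ via Theorem~\ref{basic_thm_judgment}, relying crucially on Lemmas~\ref{basic_lem_1} and~\ref{basic_lem_2} to handle the long arms. Note that since every 5-state character displays at most four interior edges and $\mathcal{T}$ has exactly $n-3$ interior edges, the cardinality bound is tight, so our construction must waste no edges.

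In the caterpillar subcase (two cherries), the $n-3$ interior edges form a single path, and the construction follows from iterated applications of Lemma~\ref{basic_lem_1} with $r=5$: each consecutive pair of 5-state characters infers eight consecutive $X$-splits along the path, so grouping the path into blocks of eight produces $\lceil (n-3)/4 \rceil$ characters, with the terminal residual block (of size less than eight) handled by a shortened pair. Combined with the fact that cherries at both ends of a caterpillar with $n\geq 20$ are separated by at least three interior edges, Lemma~\ref{basic_lem_2} completes the verification.

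For the genuine Y-shape subcases (three to six cherries, Fig.~\ref{fig:starlike}(a)–(d)), the tree splits at the center $v$ into three arms, each being caterpillar-like and terminating in one cherry (case (a)) or in a tight cluster of cherries (cases (b)–(d)). My plan is to label the interior edges on each arm starting from the cherry end and assign consecutive blocks of four edges to successive characters, with one dedicated \emph{hub character} whose cut set contains the innermost interior edge of each of the three arms; this hub character anchors the Y-shape structure in $\text{Int}(\mathcal{C})$, while the arm characters propagate the paired-character pattern of Lemma~\ref{basic_lem_1} outward. The hypothesis $n\geq 20$ guarantees that the three arms together carry at least $17$ interior edges, which is enough to absorb the tight cherry clusters of cases (b)–(d) into the outermost blocks without forcing any character to display more than four edges.

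The main obstacle will be verifying that $\text{Int}(\mathcal{C})$ admits a unique minimal restricted chordal completion, especially around the center $v$ and around the tight cherry clusters where Lemma~\ref{basic_lem_2} does not apply directly. By Lemma~\ref{basic_distinguish}, adjacent interior edges must go to different characters, which constrains block boundaries and rules out many spurious edges in $\text{Int}(\mathcal{C})$. The real work is showing that every 4-cycle in $\text{Int}(\mathcal{C})$ created either by two arms meeting at the hub character or by overlapping blocks inside an arm admits a unique chord; this I plan to check arm-by-arm and then at the hub by exploiting the fact that the hub character's five state-classes are indexed by the three arm directions (with two arms sharing one class if and only if the cherry cluster structure forces it). The cases (b)–(d) require a slight modification to the cherry-adjacent blocks, but the central argument is identical.
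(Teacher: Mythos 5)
Your plan diverges from the paper's proof in its verification strategy, and the divergence is where the gaps lie. The paper never analyzes $\text{Int}(\mathcal{C})$ for Y-like trees: it picks one long path through the center $v$ containing at least $10$ interior edges, spends exactly two interleaved characters on $8$ of those edges so that Lemma~\ref{basic_lem_1} infers their splits, and then adds one character per step, each displayed by four edges lying in \emph{different components} of the tree minus the already-inferred edges, so that Lemma~\ref{basic_lem_2} infers the new splits; once every nontrivial split is inferred, Theorem~\ref{basic_thm_judgment_2} finishes the job with no chordal-completion argument at all. You instead defer to Theorem~\ref{basic_thm_judgment} and acknowledge that the uniqueness of the minimal restricted chordal completion around the hub and the cherry clusters is ``the real work'' --- but you never say how that uniqueness would actually be established for arbitrary arm lengths and all $n\ge 20$. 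That is precisely the hard part, and leaving it as a plan means the proof is not there.

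Two more concrete problems. First, your caterpillar count is off: grouping the $n-3$ path edges into blocks of eight, each block consuming a \emph{pair} of characters, and handling the residual block with a ``shortened pair'' yields $2\lceil(n-3)/8\rceil$ characters, which exceeds $\lceil(n-3)/4\rceil$ by one whenever $(n-3)\bmod 8\in\{1,2,3,4\}$ (e.g.\ $n-3=17$ gives $6$ characters against the required $5$). The fix is the paper's: only the \emph{first} eight edges need a pair (Lemma~\ref{basic_lem_1}); every later character can carry four edges on its own because the previously inferred edges separate them, which is exactly the hypothesis of Lemma~\ref{basic_lem_2}. Second, on the arms you propose single characters per consecutive block of four edges; a lone $5$-state character displayed by four consecutive interior edges does not infer those four splits unless you verify the separation conditions of Lemma~\ref{basic_lem_2} (there must already be an inferred edge between each pair of displayed edges), and consecutive edges in one block violate this. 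So either you must interleave blocks (reintroducing the counting problem) or you must distribute each character's four edges across four different components as the paper does. As written, the proposal neither infers all splits nor completes the chordal-completion check, so it does not yet prove the lemma.
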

\begin{proof}
Based on previous analysis, when $n = |X| \geq 20$, consider the number of interior edges in the longest path of $\mathcal{T}$. When $\mathcal{T}$ has the structure of Fig.~\ref{fig:starlike}(d) and the remaining leaves (excluding those in cherries) are evenly distributed on the three paths, the number of interior edges in the longest path is minimized. In this case, the number of interior edges in the longest path is $\geq 10$. Let the center of $\mathcal{T}$ be $v$, and the Y-ends of $\mathcal{T}$ be $u_1$, $u_2$, and $u_3$, with $u_1$ and $u_2$ on the longest path.

Take 8 interior edges on the longest path, denoted as $e_1, e_2, \ldots, e_8$, and let $E_0 = \{e_1, e_2, \ldots, e_8\}$. $E_0$ must include the two edges adjacent to $v$ and one edge adjacent to each of $u_1$ and $u_2$. Let character $\chi_1$ be displayed by $\{e_1, e_3, e_5, e_7\}$, and character $\chi_2$ be displayed by $\{e_2, e_4, e_6, e_8\}$. Let $\mathcal{C}_0 = \{\chi_1, \chi_2\}$. By Lemma~\ref{basic_lem_1}, $\mathcal{C}_0$ can derive $\sigma_{e_1}, \sigma_{e_2}, \ldots, \sigma_{e_8}$, where $\chi_{e_1}, \chi_{e_2}, \ldots, \chi_{e_8}$ are the corresponding characters.

We now continue to iteratively construct the remaining characters. Let $i = 1$. In the $i$-th step, select $r-1$ interior edges of $\mathcal{T}$ to form a set $F_i$ such that each edge in $F_i$ is in a different branch of $\mathcal{T} \setminus E_{i-1}$. Define character $\chi_i$ as displayed by $F_i$. Let $E_i = E_{i-1} \cup F_i$. By induction and Lemma~\ref{basic_lem_2}, $\mathcal{C}_i = \mathcal{C}_0 \cup \{\chi_i\}$ can derive $\{\sigma_e : e \in E_i\}$. Increment $i$ by 1 and repeat.

Note that if all edges adjacent to $u_3$ are interior edges, then $F_1$ must include the interior edge adjacent to $u_3$ but not adjacent to a pendant edge. If the longest path of $\mathcal{T}$ has 10 interior edges, and since $n \geq 20$, $n$ can only be 20 or 21 in this case. Thus, a path with two pairs of cherries at both ends can be selected as the longest path, and $F_1$ can select only 3 interior edges. When $n$ is small and $n-3$ is not a multiple of 4, $F_i$ can appropriately select only 3 edges.

To ensure we do not deplete the supply of interior edges in different branches prematurely, when selecting edges for $F_i$, we always choose the 4 branches with the most interior edges in $\mathcal{T} \setminus E_{i-1}$ and select edges such that removing the edge results in two branches with similar numbers of interior edges. Through this operation, based on the analyzed phylogenetic tree structure, we can continue this process until the last step. Let the last step be $l$. When $n-3$ is not a multiple of 4, fewer than 4 interior edges will remain unselected in the $l$-th step. However, these edges are in different branches of $\mathcal{T} \setminus F_{l-1}$ and are used to generate the final character $\chi_l$, which is defined by the unselected interior edges of $\mathcal{T}$. We conclude with the 5-state character set $\mathcal{C}_l$, which can derive all non-trivial $X$-splits of $\mathcal{T}$. By Theorem~\ref{basic_thm_judgment_2}, $\mathcal{C}_l$ defines $\mathcal{T}$, where $l = \lceil \frac{n-3}{4} \rceil$.

Therefore, if $\mathcal{T}$ is a Y-like tree, a set $\mathcal{C}$ of 5-state characters can be constructed such that $\mathcal{C}$ defines $\mathcal{T}$, with $|\mathcal{C}| = \lceil \frac{n-3}{4} \rceil$.
\end{proof}

\subsection{Proof of Theorem~\ref{main_thm}}\label{sec:main-proof}

Theorem~\ref{main_thm} requires proving that for all $n \geq 16$, if $\mathcal{T}$ is a phylogenetic $X$-tree on $\{1,2,\ldots,n\}$, then there exists a set $\mathcal{C}$ of 5-state characters such that $\mathcal{C}$ defines $\mathcal{T}$, where $|\mathcal{C}| = \lceil \frac{n-3}{4} \rceil$.

This builds on the inductive strategy used for \(r=3\) [where \(|C| = \lceil \frac{n-3}{2} \rceil\) for \(n \geq 8\)] but requires more complex base case verification.
We use mathematical induction on $n$.

When $n \in \{16,17,18,19\}$, the theorem holds by Corollary~\ref{pre_n=19_all} and Corollary~\ref{pre_n=16_17_18}.

When $n \geq 20$, by Lemma~\ref{main_lem}, if $\mathcal{T}$ is a Y-like tree, then there exists a set $\mathcal{C}$ of 5-state characters such that $\mathcal{C}$ defines $\mathcal{T}$, with $|\mathcal{C}| = \lceil \frac{n-3}{4} \rceil$. Therefore, we only need to consider the case where $\mathcal{T}$ contains four cherries, with any two cherries separated by at least three interior edges. Without loss of generality, assume the labels of these four pairs of cherries are $\{n-7,n-3\}$, $\{n-6,n-2\}$, $\{n-5,n-1\}$, and $\{n-4,n\}$. Let $\mathcal{T}_{n-4}$ be the phylogenetic tree obtained by deleting the leaves labeled $n-3$, $n-2$, $n-1$, $n$ from $\mathcal{T}$ and suppressing degree-two vertices. By induction, there exists a set $\mathcal{C}_{n-4}$ of 5-state characters that defines $\mathcal{T}_{n-4}$, where $|\mathcal{C}_{n-4}| = \lceil \frac{n-7}{4} \rceil$.

At this point, $\mathcal{T}_{n-4}$ is a restriction of $\mathcal{T}$, and the interior edge set $F$ of $\mathcal{T}$ is $\mathcal{T}_{n-4}$-representable. By Lemma~\ref{basic_lem_3}, the set of characters $(\mathcal{C}_{n-4})_F$ on $\{1,2,\ldots,n\}$ can derive the $X$-splits induced by edges in $F$. Let $\chi$ be a character on $\{1,2,\ldots,n\}$ with $\pi(\chi) = \{\{n-7,n-3\},\{n-6,n-2\},\{n-5,n-1\},\{n-4,n\},\{1,2,\ldots,n-8\}\}$. Let $\mathcal{C}$ be the set $(\mathcal{C}_{n-4})_F \cup \{\chi\}$, which is a set of 5-state characters on $\{1,2,\ldots,n\}$, and $|\mathcal{C}| = \lceil \frac{n-3}{4} \rceil$. Since any two cherries in $\{n-7,n-3\}$, $\{n-6,n-2\}$, $\{n-5,n-1\}$, $\{n-4,n\}$ are separated by at least three interior edges, by Lemma~\ref{basic_lem_2}, $\mathcal{C}$ defines $\mathcal{T}$. This completes the proof.




\section{Conclusion and Prospects}\label{sec:conclusion}
This paper systematically explores the character definition theory for binary phylogenetic trees, focusing on determining the minimum leaf threshold $n_{r}$ for $r$-state character sets. 
Aiming at the threshold gaps in existing theories for $r=5$, this paper achieves theoretical breakthroughs through innovative methods.

For the case of \(r=3\), our prior research\cite{wang2025} has shown that when \(n \geq 8\), a 3-state character set with \(|C| = \lceil \frac{n-3}{2} \rceil\) uniquely defines phylogenetic trees, establishing \(n_3=8\). In this paper, we extend the framework to $r=5$ and rigorously prove that $n_5=16$. 
This result is obtained through an enhanced inductive approach, which includes base case verification ($n=16$–$19$), structural dimension reduction via 15-leaf restriction trees to exclude specific isomorphic substructures, special treatment of Y-like trees, and recursive induction employing the four-leaf deletion method. 
This study establishes, for the first time, the minimum leaf threshold for $r=5$ state character sets, providing new theoretical insight for high-dimensional character theory.

However, the research still has certain limitations. Firstly, traditional induction faces efficiency bottlenecks for $r \geq 6$, mainly due to the exponential complexity of base case verification and the dimensional sensitivity of recursive tools. Secondly, the determination of the threshold $n_{r}$ relies on manual structural analysis, lacking universal theoretical support. To address these issues, future research can advance in three directions: first, exploring new methods for character construction driven by algebraic topology or deep learning to break through the technical barriers of high-dimensional induction; second, developing threshold prediction models based on graph neural networks to establish the functional relationship between $r$ and $n_{r}$; third, expanding the application verification of the theory in bioinformatics, network topology authentication, and other fields to promote interdisciplinary integration.

The character construction theory for phylogenetic trees still has broad exploration space: firstly, the relationship between the maximum branching degree of non-binary trees and the scale of character sets needs in-depth study; secondly, rooted trees may reduce the cardinality of character sets due to the determinacy of roots, requiring quantification of the impact of root  positions; thirdly, the development of efficient algorithms (such as parallel computing or quantum optimization) is urgently needed to support large-scale character verification.

\backmatter

\addcontentsline{toc}{section}{Acknowledgements}
\bmhead{Acknowledgements}

This work was partly supported by the Fundamental Research Funds for the CCNU and partly by the Open Research Fund of Hubei Key Laboratory of Mathematical Sciences, Central China Normal University, Wuhan 430079, P. R. China. We sincerely appreciate the financial support, which enabled the successful completion of this study.

\section*{Declarations}
 \textbf{Conflict of interest}~~ The authors declare that they have no Conflict of interest. \\
 \\
 \textbf{OpenAccess} This article is licensed under a Creative Commons Attribution 4.0 International License,which permits use, sharing, adaptation, distribution and reproduction in any medium or format, as long as you give appropriate credit to the original author(s) and the source, provide a link to the Creative Commons licence, and indicate if changes were made. The images or other third party material in this article are included in the article’s Creative Commons licence, unless indicated otherwise in a credit line to the material. If material is not included in the article’s Creative Commons licence and your intended use is not permitted by statutory regulation or exceeds the permitted use, you will need to obtain permission directly from the copyright holder. To view a copy of this licence, visit \url{http://creativecommons.org/licenses/by/4.0/}





\bibliography{sn}

\end{document}